\newcommand{\Oof}{\mathcal{O}}
\newcommand{\Oh}{\Oof}
\renewcommand{\preceq}{\preccurlyeq}
\newcommand{\Labels}{\mathsf{Labels}}
\newcommand{\Adj}{\mathsf{Adj}}
\newcommand{\lbl}{\mathsf{label}}
\newcommand{\Forb}{\mathsf{Forb}}
\newcommand{\cn}{\mathfrak{c}}
\newcommand{\refine}{\bar \cap}
\newcommand{\mathsym}[1]{{}}
\newenvironment{romanlist'}[0]
{\begin{list}{\makebox[0.5cm][l]{\textit{\roman{enumi}')}}}{\usecounter{enumi}}}
{\end{list}}
\newcommand{\savelabel}[2]{\expandafter\newtoks\csname#1\endcsname
  \global\csname#1\endcsname={#2} \label{#1} #2}
\newcommand{\loadlabel}[1]{\noindent {\bf Lemma~\ref{#1}. } \textit{\the\csname#1\endcsname}
\medskip

}
\renewcommand{\setminus}{-}
\newcommand{\loadlabelthm}[1]{\medskip\noindent {\bf Theorem~\ref{#1}. }
  \noindent  \textit{\the\csname#1\endcsname}
\medskip
}
\newcommand{\loadlabelprop}[1]{\noindent {\bf Proposition~\ref{#1}. }
  \noindent  \textit{\the\csname#1\endcsname}
\medskip

}
\newcommand{\R}{\mathbb{R}}
\newcommand{\N}{\mathbb{N}}
\renewcommand{\subset}{\subseteq}
\newcommand{\atleast}[1]{{\ge n}}
\newcommand{\less}[1]{{<n}}
	\newcommand{\notacol}[2]{}
\newsavebox{\quoteitbox}
\hspace*{\fill}{\upshape(\usebox{\quoteitbox})}\end{quote}%
\newenvironment{quoteit*} 
{\begin{sloppypar}\noindent\slshape\begin{quote}\itshape} 
	{\end{quote}\ignorespaces\end{sloppypar}\noindent\ignorespacesafterend}
\newenvironment{quotetag*}
{~\par
	\begingroup                  
	\begin{equation*}
		 \begin{minipage}[c]{115mm}
			\it\noindent{\par}
}
{
		\end{minipage}
	\end{equation*}
	\endgroup                        
\par
\textnormal
\medskip
}
\newcommand{\Aa}{\ensuremath{\mathcal A}\xspace}
\newcommand{\Cc}{{\mathscr C}}
\newcommand{\CC}{\Cc}
\newcommand{\DD}{{\mathscr D}}
\newcommand{\Dd}{\ensuremath{\mathcal D}\xspace}
\newcommand{\Ff}{{\mathcal F}}
\newcommand{\Hh}{{\mathcal H}}
\newcommand{\Ll}{{\mathcal L}}
\newcommand{\Oo}{{\mathcal O}}
\newcommand{\Pp}{{\mathcal P}}
\newcommand{\Qq}{{\mathcal Q}}
\newcommand{\Rr}{{\mathcal R}}
\newcommand{\Ts}{{\mathsf T}}
\newcommand{\Is}{{\mathsf I}}
\newcommand\set[1]{\ensuremath{\{#1\}}}
\newcommand{\tn}[1]{\footnotesize{#1}}
\newtheoremstyle{theoremstyle}
  {3pt}
  {3pt}
  {\itshape}
  {0pt}
  {\bfseries}
  {}
  {4pt}
  {}
\numberwithin{equation}{section}
\newcommand{\enc}{\text{encodes}}
\newcommand{\wh}{\widehat}
\newlength{\wideaslength}
\renewcommand{\subset}{\subseteq}
\newcommand{\seta}[1]{}
\def\lsim{\mathrel{\rlap{\lower4pt\hbox{\hskip1pt$\sim$}}
    \raise1pt\hbox{$<$}}}                
\definecolor{gray1}{rgb}{0.99,0.99,0.99}
\definecolor{gray2}{rgb}{0.97,0.97,0.97}
\definecolor{gray3}{rgb}{0.95,0.95,0.95}
\definecolor{gray4}{rgb}{0.93,0.93,0.93}
\definecolor{gray5}{rgb}{0.91,0.91,0.91}
\definecolor{gray6}{rgb}{0.89,0.89,0.89}
\definecolor{gray7}{rgb}{0.87,0.87,0.87}
\definecolor{gray8}{rgb}{0.85,0.85,0.85}
\definecolor{gray9}{rgb}{0.83,0.83,0.83}
\definecolor{gray10}{rgb}{0.81,0.81,0.81}
\definecolor{gray20}{rgb}{0.71,0.71,0.71}
\definecolor{gray40}{rgb}{0.51,0.51,0.51}
\def\xInd#1#2{#1\setbox0=\hbox{$#1x$}\kern\wd0\hbox to 0pt{\hss$#1\mid$\hss}
	\lower.9\ht0\hbox to 0pt{\hss$#1\smile$\hss}\kern\wd0}
\def\xind{\mathop{\mathpalette\xInd{}}} 
\def\xnotind#1#2{#1\setbox0=\hbox{$#1x$}\kern\wd0
	\hbox to 0pt{\mathchardef\nn=12854\hss$#1\nn$\kern1.4\wd0\hss}
	\hbox to 0pt{\hss$#1\mid$\hss}\lower.9\ht0 \hbox to 0pt{\hss$#1\smile$\hss}\kern\wd0}
\def\xnind{\mathop{\mathpalette\xnotind{}}} 
\newcommand{\ind}[2][]{\xind_{#1}^{#2}}
\newcommand{\nind}[2][]{\xnind_{#1}^{#2}}
\newcommand{\tup}{\bar}
\newcommand{\from}{\colon}
\newcommand{\str}[1]{\mathbf{#1}}
\renewcommand{\le}{\leqslant}
\renewcommand{\ge}{\geqslant}
\renewcommand{\leq}{\leqslant}
\renewcommand{\geq}{\geqslant}
\renewcommand{\phi}{\varphi}
\newcommand{\tp}{\mathrm{tp}}
\DeclareMathOperator{\dist}{dist}
\newif \iffull
\newcommand{\ERCagreement}{This paper is part of a project that have received funding from the European Research Council (ERC) (grant agreement No 948057 -- {\sc BOBR}).}
\title{Canonical decompositions in monadically stable and bounded shrubdepth graph classes}
\titlerunning{Canonical decompositions of stable graphs} 
\author{Pierre Ohlmann}{Institute of Informatics, University of Warsaw, Poland}{}{}{}
\author{Micha{\l} Pilipczuk}{Institute of Informatics, University of Warsaw, Poland}{}{}{}
\author{Wojciech Przybyszewski}{Institute of Informatics, University of Warsaw, Poland}{}{}{}
\author{Szymon Toru{\'n}czyk}{Institute of Informatics, University of Warsaw, Poland}{}{}{}
\authorrunning{P. Ohlmann, M. Pilipczuk, W. Przybyszewski and S. Toru{\'n}czyk} 
\keywords{Model Theory, Stability Theory, Shrubdepth, Nowhere Dense, Monadically Stable} 
\begin{document}

\maketitle

\begin{abstract}
    We use model-theoretic tools originating from stability theory to derive a result we call the Finitary Substitute Lemma, which intuitively says the following. Suppose we work in a stable graph class $\Cc$, and using a first-order formula $\varphi$ with parameters we are able to define, in every graph $G\in \Cc$, a relation $R$ that satisfies some hereditary first-order assertion $\psi$. Then we are able to find a first-order formula $\varphi'$ that has the same property, but additionally is {\em{finitary}}: there is finite bound $k\in \N$ such that in every graph $G\in \Cc$, different choices of parameters give only at most $k$ different relations $R$ that can be defined using $\varphi'$.
	
	We use the Finitary Substitute Lemma to derive two corollaries about the existence of certain canonical decompositions in classes of well-structured graphs.
	\begin{itemize}
	 \item We prove that in the Splitter game, which characterizes nowhere dense graph classes, and in the Flipper game, which characterizes monadically stable graph classes, there is a winning strategy for Splitter, respectively Flipper, that can be defined in first-order logic from the game history. Thus, the strategy is canonical.
	 \item We show that for any fixed graph class $\Cc$ of bounded shrubdepth, there is an $\Oh(n^2)$-time algorithm that given an $n$-vertex graph $G\in \Cc$, computes in an isomorphism-invariant way a structure $H$ of bounded treedepth in which $G$ can be interpreted. A corollary of this result is an $\Oh(n^2)$-time isomorphism test and canonization algorithm for any fixed class of bounded~shrubdepth. 
	\end{itemize}
\end{abstract}


\section{Introduction}\label{sec:intro}


{\em{Stability theory}} is a well-established branch of model theory devoted to the study of stable theories, or equivalently classes of structures that are models of such theories. Here, we say that a formula\footnote{All formulas considered in this paper are first-order, unless explicitly stated.} $\varphi(\tup x;\tup y)$ is {\em{stable}} on a class of relational structures $\Cc$ if there is an integer $k\in \N$ such that for every $\str M\in \Cc$, one cannot find tuples $\tup u_1,\ldots,\tup u_k\in \str M^{\tup x}$ and $\tup v_1,\ldots,\tup v_k\in \str M^{\tup y}$ such that for all $i,j\in \{1,\ldots,k\}$,
\[
	\str M\models \varphi(\tup u_i,\tup v_j)\qquad \textrm{if and only if} \qquad i<j.
\]
Then $\Cc$ is stable if every formula is stable on $\Cc$. Intuitively, this means that using a fixed formula, one cannot interpret arbitrarily long total orders in structures from $\Cc$.
We refer to the textbooks of Pillay~\cite{pillay1996geometric} or of Tent and Ziegler~\cite{tent_ziegler} for an introduction to stability. 

The goal of this paper is to use certain classic results of stability theory, particularly the understanding of {\em{forking}} in stable theories, to derive statements about the existence of canonical decompositions in certain classes of well-structured graphs. Here, we model graphs as relational structures with one binary adjacency relation that is symmetric.

\subparagraph*{Finitary Substitute Lemma.} Our main model-theoretic tool is the Finitary Substitute Lemma, which we state below in a simplified form; see Lemma~\ref{lem:mt-core} for a full statement. 

To state the lemma, we need some definitions. A formula $\varphi(\tup x;\tup y)$ is {\em{finitary}} on a class of structures $\Cc$ if there exists $k\in \N$ such that for every $\str M\in \Cc$, we have
\[
	\left|\left\{\varphi(\str M^{\tup x},\tup v)\colon \tup v\in \str M^{\tup y}\right\}\right|\leq k,
\]
where $\varphi(\str M^{\tup x},\tup v)=\{\tup u\in \str M^{\tup x}~|~\str M\models \varphi(\tup u,\tup v)\}$. In other words, $\varphi(\tup x;\tup y)$ is finitary on $\Cc$ if by substituting different parameters for $\tup y$ in any model $\str M \in \Cc$, one can define only at most $k$ different relations on $\tup x$-tuples. Next, a sentence $\psi$ is {\em{hereditary}} if for every model $\str M$ and its induced substructure $\str M'$, $\str M\models \psi$ implies $\str M'\models \psi$. Finally, for a relation $R(\tup x)$ present in the signature, formula $\varphi(\tup x)$ (possibly with parameters), and sentence $\psi$, by $\psi[R(\tup x)/\varphi(\tup x)]$ we mean the sentence derived from $\psi$ by substituting every occurrence of $R$ with formula $\varphi$.

\begin{lemma}[Finitary Substitute Lemma, simplified version]
 Let $\Cc$ be a stable class of structures. Suppose $\varphi(\tup x;\tup y)$ is a formula and $\psi$ a hereditary sentence such that for every~$G\in \Cc$,
 \begin{equation}\label{eq:wydra}\textrm{there exists }\tup s\in G^{\tup y}\textrm{ such that }G\models \psi[R(\tup x)/\varphi(\tup x;\tup s)].
 \end{equation}
 Then there exists a formula $\varphi'(\tup x,\tup z)$ that also satisfies~\eqref{eq:wydra}, but is additionally finitary on $\Cc$.
\end{lemma}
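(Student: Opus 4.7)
The plan is to combine two classical ingredients from stability theory---definability of types over models and the existence of canonical bases via forking---with a compactness argument through ultraproducts. First, I would form an ultraproduct $\str M^* = \prod_{G \in \Cc} G/\mathcal U$ for a non-principal ultrafilter~$\mathcal U$ and pass to a sufficiently saturated elementary extension $\str M \succ \str M^*$. Stability of $\Cc$ lifts to stability of $\mathrm{Th}(\str M)$, so the full stability machinery becomes available. By {\L}o\'s's theorem and the hypothesis, the sentence $\exists \tup y.\, \psi[R(\tup x)/\varphi(\tup x;\tup y)]$ holds in $\str M$; pick a witness $\tup s^*$, so that the definable set $D^* := \varphi(\str M^{\tup x},\tup s^*)$ satisfies the (first-order-translated) hereditary condition expressed by~$\psi$.

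Next, I would apply the theorem that every complete type over a model in a stable theory is definable, to $p = \tp(\tup s^*/\str M^*)$ and to $\varphi$. This yields a defining formula $d_p\varphi(\tup x)$ and canonical parameters $\tup z^*$ living a~priori in $\str M^{*\mathrm{eq}}$---namely the canonical base $\mathrm{Cb}(p)$---which depend only on $D^*$ rather than on the arbitrary choice of $\tup s^*$. After eliminating imaginaries down to the home sort (encoding $\mathrm{Cb}(p)$ by a tuple of real elements, up to bounded multiplicity), one obtains a formula $\varphi'(\tup x;\tup z)$ together with a parameter $\tup z^* \in \str M^*$ such that $\varphi'(\str M^{\tup x},\tup z^*)=D^*$. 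This is the candidate formula.

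The main obstacle---and the heart of the proof---is showing that $\varphi'$ is finitary on~$\Cc$. The strategy is by contradiction: otherwise, by {\L}o\'s there are infinitely many parameters $\tup z^*_1,\tup z^*_2,\ldots$ in $\str M$ defining pairwise distinct relations $D_1, D_2, \ldots$, each still satisfying $\psi$. By Erd{\H o}s--Rado one extracts an infinite indiscernible sequence among them. Stability of $\mathrm{Th}(\str M)$, together with canonicality of the $\tup z^*_i$, forces the $D_i$ to agree on the indiscernible sequence itself, and canonicality then upgrades this to equality of the $D_i$ as relations, contradicting pairwise distinctness. Hereditariness of $\psi$ enters crucially here: it lets one pass to the substructure induced by the indiscernible sequence while preserving the defining property of each $D_i$, which is precisely what enables the Ramsey extraction to yield a contradiction rather than a benign configuration.

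Finally, condition~\eqref{eq:wydra} for $\varphi'$ follows by {\L}o\'s: the sentence $\exists \tup z.\, \psi[R(\tup x)/\varphi'(\tup x;\tup z)]$ holds in $\str M$ (witnessed by $\tup z^*$), and hence in $\mathcal U$-almost every $G \in \Cc$; varying the ultrafilter to concentrate on any desired $G$, or equivalently rerunning the construction locally, yields a witness in every structure of $\Cc$. The uniform bound $k$ on the number of distinct relations definable by $\varphi'$ is the constant extracted from the Ramsey step in the previous paragraph, and it depends only on $\varphi$, $\psi$, and the stability bound of $\Cc$.
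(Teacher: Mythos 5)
Your high-level plan is aiming at the right ingredients (definability of types, canonical bases, stability, compactness), but there are two genuine gaps that break the argument as written.

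\textbf{The missing non-forking step.} You apply definability of types directly to $p = \tp(\tup s^*/\str M^*)$ and then appeal to the canonical base $\mathrm{Cb}(p)$ being ``coded by a tuple of real elements, up to bounded multiplicity.'' But bounded multiplicity of $\mathrm{Cb}(p)$ over $\emptyset$ is precisely the statement that $\mathrm{Cb}(p) \in \acleq(\emptyset)$, which holds if and only if $p$ does \emph{not} fork over $\emptyset$. An arbitrary witness $\tup s^*$ to the existential has no reason to have a non-forking type over $\str M^*$, so the ``bounded multiplicity'' you invoke is not available. The paper's proof inserts exactly the step you are missing: it first takes $p=\tp(\tup s/\emptyset)$, extends it by the existence theorem for non-forking extensions to a type $q \in S_{\tup y}(\str M)$ that does not fork over $\emptyset$, realizes $q$ as $\tup s'$ in an elementary extension, and only \emph{then} applies the fact that non-forking types restrict to finitary-definable sets (\cite[Theorem 8.5.1]{tent_ziegler}). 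Your Erd\H{o}s--Rado/indiscernible sketch does not substitute for this: the assertion that ``stability together with canonicality of the $\tup z^*_i$ forces the $D_i$ to agree'' is unsubstantiated --- stable theories can certainly have unboundedly many distinct instances $\varphi'(\str M^{\tup x},\tup z)$ as $\tup z$ varies, and nothing constrains arbitrary instances of $\varphi'$ to be ``canonical'' just because one instance came from a defining scheme. (Also, you assert that each $D_i$ ``still satisfies $\psi$,'' but $\psi$ was arranged for one specific parameter, not for all.)

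\textbf{Uniformity across $\Cc$.} Even granting a single ultraproduct argument, \L{}o\'s only yields that $\exists \tup z.\,\psi[R/\varphi'(\tup x;\tup z)]$ holds in $\mathcal U$-almost every $G$, not in every $G$. ``Varying the ultrafilter'' or ``rerunning locally'' can produce a different $\varphi'_G$ for each $G$, so you still need an argument that a \emph{single} formula works. The paper handles this by a different compactness scheme: assume no finitary formula of size $\le n$ works for the model $\str M_n$; by compactness extract a single model $\str M$ of the theory where \emph{no} finitary formula works at all; then derive the contradiction via the non-forking machinery and the hereditariness of $\psi$ (used to pull the relation from the elementary extension back down to $\str M$). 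This forces a uniform size bound $n$, after which the boundedly many candidate formulas of size $\le n$ are merged into one finitary $\varphi'$ via a case distinction encoded in extra parameters. That final merging step is also absent from your sketch.

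In short, the route via canonical bases is the right intuition, but without the non-forking extension step the finitariness claim fails, and without a per-model compactness argument there is no uniform $\varphi'$.
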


Thus, intuitively, the Finitary Substitute Lemma says that in stable classes, every relation that is definable with parameters can be replaced by a finitary one, as long as we care that the relation satisfies some hereditary first-order assertion. The main observation of this paper is that this can be used in the context of various graph decompositions. Intuitively, if every step in decomposing the graph can be defined by a first-order formula with parameters, and the validity of the step can be verified using a hereditary first-order sentence, then we can use the Finitary Substitute Lemma to derive an equivalent definition of a step that is finitary. This yields only a bounded number of different steps that can be taken, making it possible to construct a decomposition that, in a certain sense, is canonical.

\subparagraph*{Classes of bounded shrubdepth.} Our first application concerns classes of {\em{bounded shrubdepth}}. The concept of shrubdepth was introduced by Ganian et al.~\cite{GanianHNOMR12} to capture dense graphs that are well-structured in a shallow way. On one hand, classes of bounded shrubdepth are exactly those that can be interpreted, using first-order formulas with two free variables, in classes of forests of bounded depth. On the other hand, graphs from any fixed class of bounded shrubdepth admit certain decompositions, called {\em{connection models}}, which are essentially clique expressions of bounded depth. (See Section~\ref{subsec:sd-prelim} for a definition of a connection model.) Thus, in particular every graph class of bounded shrubdepth has bounded cliquewidth, but classes of bounded shrubdepth are in addition stable~\cite{GanianHNOMR12}.

Shrubdepth is a dense counterpart of {\em{treedepth}}, defined as follows: the treedepth of a graph $G$ is the smallest integer $d$ such that $G$ is a subgraph of the ancestor/descendant closure of a rooted forest of depth at most $d$. In particular, every class of graphs of bounded treedepth has bounded shrubdepth; boundedness of treedepth and of shrubdepth is in fact equivalent assuming that the class excludes some biclique $K_{t,t}$ as a subgraph~\cite{GanianHNOMR12}. In essence, treedepth is a bounded-depth counterpart of treewidth in the same way as shrubdepth is a bounded-depth counterpart of cliquewidth.

In spite of the above, the combinatorial properties of shrubdepth are still much less understood than those of treedepth. For instance, a good understanding of subgraph obstacles allows one to construct suitable canonical decompositions for graphs of bounded treedepth. This allowed Bouland et al.~\cite{BoulandDK12} to design a graph isomorphism test that works in fixed-parameter time parameterized by the treedepth, or more precisely, in time $f(d)\cdot n^3\log n$, where $f$ is a computable function. While it is known that every class of bounded shrubdepth can be characterized by a finite number of forbidden induced subgraphs~\cite{GanianHNOMR12}, it is unclear how to use just this result to design any kind of canonical decompositions for classes of bounded shrubdepth. Consequently, so far it was unknown whether the graph isomorphism problem can be solved in fixed-parameter time on classes of bounded shrubdepth\footnote{This statement is somewhat imprecise, as shrubdepth is defined as a parameter of a graph class, rather than of a single graph. By this we mean that there is a universal constant $c$ such that for every graph class $\Cc$ of bounded shrubdepth, the isomorphism of graphs from $\Cc$ can be tested in $\Oh(n^c)$ time, with the constant hidden in the $\Oh(\cdot)$ notation possibly depending on $\Cc$.}. The most efficient isomorphism test in this context is the one designed by Grohe and Schweitzer~\cite{GroheS15} for the cliquewidth parameterization: it works in $\mathsf{XP}$ time, that is, in time $n^{f(k)}$ where $k$ is the cliquewidth and $f$ is a computable function. See also the later work of Grohe and Neuen~\cite{GroheN21}, which improves the $\mathsf{XP}$ running time and applies to the more general canonization problem.

We show that the Finitary Substitute Lemma can be used to bridge this gap by proving the following result.

\begin{restatable}{theorem}{mainshrubdepth}\label{thm:canonization}
	Let $\CC$ be a class of graphs of bounded shrubdepth.
	Then there is a class $\DD$ of binary structures of bounded treedepth and a mapping $\Aa\colon \CC\to \DD$ such that:
	\begin{itemize}
	 \item For each $G\in \CC$, the vertex set of $G$ is contained in the domain of $\Aa(G)$ and the mapping $G\mapsto \Aa(G)$ is isomorphism-invariant.
	 \item Given an $n$-vertex graph $G\in \CC$, the structure $\Aa(G)$ has $\Oh(n)$ elements and can be computed in time $\Oh(n^2)$.
	 \item There is a simple first-order interpretation $\Is$ such that $G=\Is(\Aa(G))$, for every $G\in \CC$.
	\end{itemize}

\end{restatable}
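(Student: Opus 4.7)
The plan is to proceed by induction on the shrubdepth bound of the class $\CC$. At each inductive step, I aim to define an isomorphism-invariant partition of $V(G)$ whose parts induce graphs of strictly smaller shrubdepth, mimicking the top-level split of a connection model. Applying this recursively yields a rooted tree of bounded depth, which I would encode as the binary structure $\Aa(G)$ together with finitely many unary predicates recording the labels used to describe edges between sibling subtrees; the simple interpretation $\Is$ then recovers the edges of $G$ by inspecting the labels along the path to the lowest common ancestor.

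The crucial step is producing a canonical top-level partition. Start from the (non-canonical) partition supplied by an arbitrary connection model: let $\varphi(x, y; \bar s)$ be a formula expressing that $x$ and $y$ lie in the same maximal subtree rooted at a child of the root in a connection model specified by parameters $\bar s$. The existence of such $\bar s$ in every $G \in \CC$ is witnessed by choosing any connection model of $G$. The assertion that this is a ``good'' partition, meaning an equivalence relation whose classes induce subgraphs of the smaller-shrubdepth class while the inter-class edges follow a bounded-label pattern, is expressible as a hereditary first-order sentence $\psi$: heredity follows because restricting to an induced subgraph of $G$ also restricts the partition and preserves the relevant structural properties. Since $\CC$ is stable, the Finitary Substitute Lemma supplies a finitary formula $\varphi'(x, y; \bar z)$ that still satisfies~\eqref{eq:wydra}.

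Finitariness is the decisive gain. In each $G \in \CC$, only boundedly many distinct candidate equivalence relations are definable by varying $\bar z$; among those satisfying $\psi$, I canonically pick the common refinement of all such relations. This refinement is itself a good partition, since goodness is preserved under refinement: each class of an already-good partition has smaller shrubdepth, and further splitting preserves this while still yielding a bounded-label inter-class pattern. The resulting partition is manifestly isomorphism-invariant, because the set of candidate relations and the intersection operation both commute with automorphisms of $G$. Recursing inside each part and recording the finitely many top-level labels produces the desired rooted tree of depth bounded by the shrubdepth of $\CC$, giving a treedepth bound on $\Aa(G)$.

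For the running time, at each level the constantly many candidate relations can each be materialized in $\Oh(n^2)$ time by evaluating $\varphi'$ on all pairs, over each of the finitely many instantiations of $\bar z$; combined with the bounded recursion depth, this yields the target $\Oh(n^2)$ bound, and $\Aa(G)$ has $\Oh(n)$ elements as required. The main obstacle I foresee is phrasing ``the partition has classes of shrubdepth strictly smaller than $d$, together with a bounded-label edge pattern between classes'' as a single hereditary first-order sentence over the graph signature. This calls for either invoking the characterization of bounded shrubdepth by finitely many forbidden induced subgraphs, or passing to a slightly enriched labelled signature in which one can quantify over finite colourings. Once such a $\psi$ is in hand, the Finitary Substitute Lemma converts the non-canonical connection model into the canonical decomposition promised by the theorem.
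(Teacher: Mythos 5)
Your overall architecture matches the paper quite closely: induct on shrubdepth, define the top-level split of a connection model by a formula with parameters, express ``goodness'' as a hereditary sentence $\psi$ (via the forbidden-induced-subgraph characterization, which is exactly what the paper invokes), apply the Finitary Substitute Lemma, then combine the boundedly many resulting partitions in an isomorphism-invariant way and recurse. So the skeleton is right.

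However, there is a substantive gap at the step where you ``start from the (non-canonical) partition supplied by an arbitrary connection model'' and posit a formula $\varphi(x,y;\bar s)$ defining ``same subtree below the root''. A connection model is external data: its tree $T$ and labelling $\lbl$ are not part of the graph, and there is no a priori reason the top-level partition should be first-order definable from $G$ with a bounded tuple of vertex parameters. This is precisely the non-trivial content of the paper's Theorem~\ref{thm:shrub-inductive}, which rests on two real lemmas: Lemma~\ref{thm:incremental} (from Bonnet et al., using duality/bounded VC-dimension) to replace the label partition by a neighbourhood partition $\Ll_S$ for a set $S$ of $\Oh(t^2)$ parameters, and Lemma~\ref{lem:def-dicing}, which flips according to the dicing table to get a graph $H$ whose components have bounded diameter (because classes of bounded shrubdepth exclude long induced paths and are closed under transductions), so that the component partition becomes FO-definable as bounded-radius reachability in $H$. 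Your proposal simply asserts the definability as given, which is where the hard work actually lives.

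Two smaller but real issues. First, you write that ``goodness is preserved under refinement,'' but this is false for the component partition of a dicing: refining $\Pp$ creates new cross-part pairs whose adjacency need not be governed by the label table. The paper avoids this by refining only the label partitions to a common $\wh{\Ll}$, keeping each original component partition $\Pp \in \Ff$ as a separate dicing $(\Pp,\wh{\Ll})$, and using the common refinement $\wh{\Pp}$ only to delimit the recursion; adjacency of two vertices in different $\wh{\Pp}$-parts is then looked up through whichever $\Pp$ actually separates them. Second, the $\Oh(n^2)$ running time is not obtained by ``evaluating $\varphi'$ on all pairs'': one still needs to find the boundedly many good parameter tuples among the $n^{|\bar y|}$ candidates and evaluate FO queries efficiently. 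The paper does this through the bounded-cliquewidth meta-theorem (Lemma~\ref{lem:meta-thm}, combining Fomin--Korhonen and Kazana--Segoufin), which your sketch omits.
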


Here, by isomorphism-invariance we mean that every isomorphism between $G,G'\in \Cc$ extends to an isomorphism between $\Aa(G)$ and $\Aa(G')$.
Further, by a {\em{simple}} interpretation we mean a first-order interpretation that is $1$-dimensional: vertices of $G$ are interpreted in single elements of $\Aa(G)$ (actually, every vertex is interpreted in itself). Thus, $\Aa(G)$ can be regarded as a canonical --- obtained in an isomorphism-invariant way --- sparse decomposition of $G$ that encodes $G$ faithfully and takes the form of a structure of bounded treedepth. We remark that certain logic-based sparsification procedures for classes of bounded shrubdepth were proposed in~\cite{ChenF20,GajarskyKNMPST20}, but these are insufficient for our applications, which we explain next.

The third point above together with the fact that $\Aa$ is isomorphism-invariant imply the following: for all $G,G'\in \Cc$, $G$ and $G'$ are isomorphic if and only if $\Aa(G)$ and $\Aa(G')$ are.
We can now combine Theorem~\ref{thm:canonization} with the approach of Bouland et al.~\cite{BoulandDK12} to give a fixed-parameter isomorphism test on classes of bounded shrubdepth.

\begin{restatable}{theorem}{gi}\label{thm:gi}
	For every graph class $\Cc$ of bounded shrubdepth there is an $\Oh(n^2)$-time algorithm that given $n$-vertex graphs $G,G'\in \Cc$, decides whether $G$ and $G'$ are isomorphic.

\end{restatable}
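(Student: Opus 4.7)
The overall strategy is to reduce the isomorphism problem on $\Cc$ to the isomorphism problem on the class $\DD$ of binary structures of bounded treedepth supplied by Theorem~\ref{thm:canonization}, and then invoke a fixed-parameter algorithm for the latter. Given graphs $G, G' \in \Cc$ on $n$ vertices each, my algorithm first invokes Theorem~\ref{thm:canonization} to compute $\Aa(G)$ and $\Aa(G')$, both in time $\Oh(n^2)$ and of size $\Oh(n)$, and then outputs \emph{yes} if and only if $\Aa(G) \cong \Aa(G')$ as structures.

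For correctness I will verify that $G \cong G'$ if and only if $\Aa(G) \cong \Aa(G')$. The forward direction is exactly the isomorphism-invariance of $\Aa$ stated in Theorem~\ref{thm:canonization}. For the backward direction, I will use the third bullet of Theorem~\ref{thm:canonization}: since the interpretation $\Is$ is first-order, it commutes with isomorphisms of structures; since $\Is$ is moreover \emph{simple} (one-dimensional) and each vertex is interpreted as itself, the domain formula of $\Is$ selects $V(G) \subseteq \Aa(G)$ and $V(G') \subseteq \Aa(G')$, and any isomorphism $\pi \colon \Aa(G) \to \Aa(G')$ must preserve this selection as well as the adjacency formula; hence $\pi$ restricts to an isomorphism $\pi|_{V(G)} \colon G \to G'$.

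The remaining step is to decide isomorphism for two structures in $\DD$ within the $\Oh(n^2)$ budget. Since $\DD$ consists of binary structures of bounded treedepth, I plan to adapt the bounded-treedepth graph isomorphism algorithm of Bouland et al.~\cite{BoulandDK12}: unary and binary predicates of the structure can be folded into vertex and edge colors of its Gaifman graph without increasing the treedepth, after which the algorithm canonizes the structure bottom-up along a canonical elimination forest. I expect the main technical obstacle to lie in tightening the $\Oh(n^3 \log n)$ bound of~\cite{BoulandDK12} to the $\Oh(n^2)$ promised in the statement. This should be achievable by exploiting the fact that the treedepth of structures in $\DD$ is a constant depending only on $\Cc$, so the canonical code of each subtree of the elimination forest has bounded length, and aggregation of children's codes at each node can be implemented by sorting in time linear in the number of children; summing over all nodes yields the required $\Oh(n^2)$ bound.
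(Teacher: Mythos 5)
Your proposal is correct and takes essentially the same route as the paper: reduce via $\Aa$ from Theorem~\ref{thm:canonization} to isomorphism of bounded-treedepth binary structures, and then adapt the Bouland--Dawar--Kopczy\'nski algorithm for that class (the paper formalizes this second step as a canonization map computable in $\Oh(n\log^2 n)$ time, Theorem~\ref{thm:td-canonization}, so the $\Oh(n^2)$ bottleneck is the computation of $\Aa(G)$, not the treedepth step). The only minor difference is that the paper phrases the whole pipeline as a canonization map (Theorem~\ref{thm:shb-canonization}), from which the pairwise isomorphism test follows immediately, whereas you compare $\Aa(G)$ and $\Aa(G')$ directly; these are equivalent for the present purpose.
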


In fact, our algorithm solves also the general canonization problem, see Section~\ref{subsec:sd-isomorphism}.

We remark that the algorithm of Theorem~\ref{thm:gi} is non-uniform, in the sense that we obtain a different algorithm for every class $\Cc$. Despite the existence of parameters such as rankdepth~\cite{KwonMOW21} or SC-depth~\cite{GanianHNOMR12} that are suited for the treatment of single graphs and are equivalent in terms of boundedness on classes to shrubdepth, we do not know how to make our algorithm uniform even for the rankdepth or SC-depth parameterizations.

Finally, we believe that the construction behind our proof of Theorem~\ref{thm:canonization} can be used to obtain an alternative proof of a result of Hlin\v{e}n\'y and Gajarsk\'y~\cite{GajarskyH12}, later reproved by Chen and Flum~\cite{ChenF20}: the expressive power of first-order and monadic second-order logic coincide on classes of bounded shrubdepth. This direction will be explored in future research.

%
%
%
%

%
%

\subparagraph*{Nowhere dense and monadically stable classes.}
Second, we use the Finitary Substitute Lemma to provide canonical strategies in game characterizations of two important concepts in structural graph theory: nowhere dense classes and monadically stable classes. In both cases, a strategy in the game can be regarded as decompositions of the graph in question.

We start with some definitions.
A {\em{unary lift}} of a class of graphs $\CC$ is any class of structures $\CC^+$ such that every member of $\CC^+$ is obtained from a graph belonging to $\CC$ by adding any number of unary predicates on vertices. A class of graphs $\CC$ is {\em{monadically stable}} if every unary lift of $\CC$ is stable. On the other hand, a class of graphs $\CC$ is {\em{nowhere dense}} if for every $d\in \N$ there exists $t$ such that no graph in $\CC$ contains the $d$-subdivision of $K_t$ as a subgraph.

Nowhere denseness is the most fundamental concept of uniform sparsity in graphs considered in the theory of Sparsity; see the monograph of Ne\v{s}et\v{r}il and Ossona de Mendez~\cite{sparsity} for an introduction to this area. A pinnacle result of this theory was derived by Grohe et al.~\cite{GroheKS17}: the model-checking problem for first-order logic is fixed-parameter tractable on any nowhere dense graph class. As observed by Adler and Adler~\cite{AdlerA14} using earlier results of Podewski and Ziegler~\cite{podewski1978stable}, monadically stable classes are dense counterparts of nowhere dense classes in the following sense: every nowhere dense class is monadically stable, and nowhere denseness and monadic stability coincide when we assume the class to be sparse, for instance to exclude some biclique $K_{t,t}$ as a subgraph. This motivated the following conjecture~\cite{warwick}, which is an object of intensive studies for the last few years: The model-checking problem for first-order logic is fixed-parameter tractable on every monadically stable class pf graphs $\Cc$.

To approach this conjecture, it is imperative to obtain a better structural understanding of graphs from monadically stable classes. This is the topic of several very recent works~\cite{Braunfeld_2021,braunfeld2022existential,abs-2209-11229,dreier2022indiscernibles,flipper_game}. In this work we are particularly interested in the results of Gajarsk\'y et al.~\cite{flipper_game}, who characterized monadically stable classes of graphs through a game model called the {\em{Flipper game}}, which reflects the characterization of nowhere dense classes through the {\em{Splitter game}}, due to Grohe et al.~\cite{GroheKS17}.

The radius-$r$ Splitter game is played on a graph $G$ between two players: {\em{Splitter}} and {\em{Connector}}. In every round, Connector first chooses any vertex $u$ and the current {\em{arena}} --- graph on which the game is played --- gets restricted to a ball of radius $r$ around $u$. Then Splitter removes any vertex of the graph. The game finishes, with Splitter's win, when the arena becomes empty. Splitter's goal is to win the game as quickly as possible, while Connector's goal is to avoid losing for as long as possible. The {\em{Flipper game}} is defined similarly, except that the moves of {\em{Flipper}} --- who replaces Splitter --- are as follows. Instead of removing a vertex, Flipper selects any subset of vertices $F$ and performs a {\em{flip}}: replaces all edges with both endpoints in $F$ with non-edges, and vice versa. Also, the game finishes when the arena consists of one vertex.

Grohe et al.~\cite{GroheKS17} proved that a class of graphs $\CC$ is nowhere dense if and only if for every radius $r\in \N$ there exists $\ell\in \N$ such that on every graph from $\CC$, Splitter can win the radius-$r$ Splitter game within at most $\ell$ rounds. This characterization is the backbone of their model-checking result for nowhere dense classes, as a strategy in the Splitter game provides a shallow decomposition of the graph in question, useful for understanding its first-order properties. Very recently, Gajarsk\'y et al.~\cite{flipper_game} proved an analogous characterization of monadically stable classes in terms of the Flipper game, and subsequently Dreier et al.~\cite{strnd} used this characterization to prove fixed-parameter tractability of the model-checking first order logic on monadically stable classes of graphs which possess so-called sparse neighborhood covers. Given this state-of-the-art, it is clear that a better understanding of strategies for Splitter and Flipper in the respective games may lead to a deeper insight into decompositional properties of nowhere dense and monadically stable graph classes.

In the Splitter game, we prove using just basic compactness, that in any arena there is only a bounded number of possible Splitter's moves that are {\em{progressing}}: lead to an arena where the Splitter can win in one less round. (See Theorem~\ref{thm:progressive} for a formal statement.) So this gives a transparent canonical strategy for Splitter: just play all progressive moves one by one, in any order.
\iffull For Flipper game, we refrain from giving a formal statement at this point (it can be found as Theorem~\ref{thm:definable_strategies}), as the concrete notion of definability is somewhat technical and requires further preliminaries; also, we actually work with a technical variant of Flipper game called {\em{Separator game}}, which was introduced in~\cite{flipper_game}.
But in essence, again the candidate next moves can be defined using a finitary first-order formula that works on the initial graph $G$ supplied with the previous moves of the Connector, and playing all those moves at once leads to a win within a bounded number of rounds.
\else Obtaining a similar canonicity result for strategies in the Flipper game requires the full power of our Finitary Substitute Lemma, discussed above.\fi


\iffull \else
In the interest of space, we have omitted from this version our results about canonical strategies in the Flipper game, as well as most proofs.
For a complete exposition, we refer to the full version of the paper attached as an appendix.
\fi

\section{Preliminaries}
\label{sec:preliminaries}


\subparagraph*{Models.}
We work with first-order logic over a fixed signature $\Sigma$ that consists of (possibly infinitely many) constant symbols and relation symbols. 
A \emph{model} is a $\Sigma$-structure, and is typically denoted $\str M,\str N$, etc. We usually do not distinguish between a model and its domain, when writing, for instance, $m\in\str M$ or $X\subset \str M$.
A graph $G$ is viewed as a model over the signature consisting of one binary relation denoted $E$, indicating adjacency between vertices.

If $\tup x$ is a finite set of variables, then we write $\phi(\tup x)$ to denote a first-order formula $\phi$ with free variables contained in $\tup x$. We may also write $\phi(\tup x_1,\ldots,\tup x_k)$ to denote a formula 
whose free variables are contained in $\tup x_1\cup\ldots\cup \tup x_k$.
We will write $x$ instead of $\set x$
in case of a singleton set of variables, e.g. 
$\phi(x,y)$ will always refer to a formula with two free variables $x$ and $y$.
We sometimes write $\phi(\tup x;\tup y)$ to distinguish a partition of the set of free variables of $\phi$ into two parts, $\tup x$ and $\tup y$; this partition plays an implicit role in some definitions.
A $\Sigma$-formula $\phi(\tup x)$ \emph{with parameters} from a set $A\subset \str M$ is a formula $\phi(\tup x)$ over the signature $\Sigma\uplus A$, where the elements of $A$ are treated as constant symbols (which are interpreted by themselves).

If $U$ is a set and $\tup x$ is a set of variables, then $U^{\tup x}$ denotes the set of all \emph{$\tup x$-tuples} $\tup a\from \tup x\to U$ of $\tup x$ in $U$.
For a formula $\phi(\tup x)$ (with or without parameters) and an $\tup x$-tuple $\tup m\in \str M^{\tup x}$, we write $\str M\models \phi(\tup m)$ if the valuation $\tup m$ satisfies the formula $\phi(\tup x)$ in $\str M$.
For a formula $\phi(\tup x; \tup y)$ and a tuple $\tup b \in \str M^{\tup y}$ we denote by $\phi(\str M^{\tup x}; \tup b)$ the set of all $\tup a \in \str M^{\tup x}$ such that $\str M \models \phi(\tup a; \tup b)$.

\subparagraph*{Theories and compactness.}
A \emph{theory} $T$ (over $\Sigma$) is a set of $\Sigma$-sentences.
The theory of a class of structures $\Cc$ is the set of sentences that hold in every structure $\str M \in \Cc$.
For instance, the theory of a class of graphs $\Cc$ contains sentences expressing that the relation $E$ is symmetric and irreflexive.
A model of a theory $T$ is a structure $\str M$ such that $\str M\models \phi$ for all $\phi\in T$. When a theory has a model, it is said to be \emph{consistent}.

\begin{theorem}[Compactness]\label{thm:compactness}
  A theory $T$ is consistent if and only if every finite subset $T'$ of $T$ is consistent.
\end{theorem}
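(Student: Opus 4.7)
The forward direction is immediate: any model $\str M\models T$ satisfies every $\phi\in T$, hence satisfies every sentence of each finite $T'\subseteq T$. So the whole content is in the converse: assuming every finite subset of $T$ has a model, construct one for $T$ itself. My plan is to do this via an ultraproduct construction, which I find cleanest; I outline it below.

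First I would index finite subsets: let $I=\{T'\subseteq T\mid T'\text{ finite}\}$. For each $i\in I$, hypothesis gives a model $\str M_i\models i$. On $I$ consider the family of sets $\hat\phi=\{i\in I\mid \phi\in i\}$ for $\phi\in T$, together with finite intersections; since $\hat\phi\cap\hat\psi\supseteq\widehat{\{\phi,\psi\}}$ and the latter is nonempty (it contains the two-element set $\{\phi,\psi\}\in I$), the family has the finite intersection property. Extend it to an ultrafilter $\Uu$ on $I$ using Zorn's lemma. Then form the ultraproduct $\str N=\prod_{i\in I}\str M_i/\Uu$, i.e.\ the quotient of $\prod_{i\in I}\str M_i$ by the equivalence relation $(a_i)\sim(b_i)\iff \{i:a_i=b_i\}\in\Uu$, with relations interpreted by membership of index sets in $\Uu$.

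The core step is then Łoś's theorem, which asserts that for every first-order sentence $\phi$,
\[
\str N\models\phi \quad\Longleftrightarrow\quad \{i\in I\mid \str M_i\models\phi\}\in\Uu.
\]
Applied to any $\phi\in T$: whenever $i\in I$ contains $\phi$, we have $\str M_i\models\phi$, so $\{i:\str M_i\models\phi\}\supseteq\hat\phi\in\Uu$, and hence $\str N\models\phi$. This shows $\str N\models T$, completing the proof. Łoś's theorem itself is proven by induction on the complexity of formulas: atomic and boolean cases are direct from the definition of the ultraproduct and the ultrafilter property (a finite boolean combination of sets lies in $\Uu$ iff the corresponding boolean combination of membership bits does), and the existential case uses the axiom of choice on index sets to pick witnesses in each component.

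The main obstacle is conceptual rather than technical: establishing Łoś's theorem and the soundness of the ultrafilter construction relies on the axiom of choice (or equivalently, the Boolean prime ideal theorem, which suffices and is in fact equivalent to Compactness). If one prefers to avoid ultraproducts altogether, an alternative route is the Henkin construction: expand $\Sigma$ by fresh constant symbols, enumerate existential sentences and add Henkin witnesses, then extend $T$ to a maximally finitely-satisfiable theory $T^\star$ in the expanded signature using Zorn, and read off a term model of $T^\star$ by quotienting closed terms by $T^\star$-provable equality. Either approach yields the theorem; I would adopt the ultraproduct proof here, as it fits the model-theoretic flavor of the surrounding text.
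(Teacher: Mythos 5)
Your proof is correct, but there is nothing in the paper to compare it against: Theorem~\ref{thm:compactness} is stated in Section~\ref{sec:preliminaries} as standard background from first-order model theory and is deliberately left without proof (as is the rest of that section). So the question of ``same approach as the paper'' does not apply.

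On the merits of what you wrote: both directions are sound. The forward implication is trivial as you say. For the converse, your index set $I$ of finite subtheories, the family $\{\hat\phi : \phi\in T\}$ with the finite intersection property, extension to an ultrafilter $\Uu$, and the application of \L{}o\'s's theorem to conclude $\str N=\prod_{i}\str M_i/\Uu\models T$ are all standard and correctly carried out; in particular the key containment $\{i:\str M_i\models\phi\}\supseteq\hat\phi$ is exactly right, and closure of $\Uu$ under supersets finishes it. Your remark about the choice principle involved is also accurate --- extending a filter to an ultrafilter needs the Boolean Prime Ideal theorem, not full AC, and BPI is equivalent over $\mathsf{ZF}$ to Compactness itself, so the proof is not circular but does sit at exactly that strength. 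The Henkin construction you sketch as an alternative is the route most textbooks (including Tent--Ziegler, which the paper cites) actually take, since it produces a term model and avoids the detour through ultraproducts and \L{}o\'s; the ultraproduct route has the advantage of giving a one-line derivation once \L{}o\'s is available and meshes well with the elementary-extension and type-realization arguments used later in Sections~\ref{sec:splitter} and~\ref{sec:stability}. One cosmetic point: the notation $\widehat{\{\phi,\psi\}}$ is used before it is defined (you only defined $\hat\phi$ for a single sentence); it would be cleaner to say directly that $\{\phi,\psi\}\in\hat\phi\cap\hat\psi$, which is all that is needed.
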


\subparagraph*{Elementary extensions.}
Let $\str M$ and $\str N$ be two structures with $\str M\subset \str N$, that is, the domain of $\str M$ is contained in the domain of $\str N$.
Then $\str N$ is an \emph{elementary extension} of $\str M$, written $\str M\prec \str N$,
if for every formula $\phi(\tup x)$ (without parameters) and tuple $\tup m\in \str M^{\tup x}$, the following equivalence holds:
\[\str M\models \phi(\tup m)\qquad\textrm{if and only if}\qquad \str N\models \phi(\tup m).\]
\noindent We also say that $\str M$ is an \emph{elementary substructure} of $\str N$.
In other words, $\str M$ is an elementary substructure of $\str N$ if $\str M$ is an induced substructure of $\str N$, where we imagine that $\str M$ and 
$\str N$ are each equipped with every relation 
$R_\phi$ of arity $k$ (for $k\in\N$) that is defined by any fixed first-order formula $\phi(x_1,\ldots,x_k)$. 
In this intuition, formulas of arity $0$ correspond to Boolean flags, with the same valuation for both $\str M$ and $\str N$.

\subparagraph*{Interpretations and transductions.} An \emph{interpretation} $\Is$ between signatures $\Sigma$ and $\Gamma$ is specified by a domain formula $\delta(x)$ and a formula $\alpha_R(x_1,\dots,x_k)$ for each relation symbol $R \in \Gamma$ of arity $k$, with $\delta$ and the $\alpha_R$'s being in the signature $\Sigma$.
For a given $\Sigma$-structure $\str M$, the interpretation outputs the $ \Gamma$-structure $\Is(\str M)$ whose domain is $\delta(\str M)$ and in which the interpretation of each relation $R$ of arity $k$ consists of the tuples $\tup m$ such that $\str M \models \alpha_R(\tup m)$.

For an integer $k \in \N$ and a structure $\str M$, we define $k \times \str M$ to be the structure consisting of $k$ disjoint copies of $\str M$, together with a new symmetric binary relation $S$ containing all pairs $(m,m')$ such that $m$ and $m'$ originate from the same element of $\str M$.
A \emph{transduction} from $\Sigma$ to $\Gamma$ consists of an integer $k$, unary symbols $U_1,\dots,U_\ell$ and an interpretation $\Is$ from $\Sigma \cup \{S,U_1,\dots,U_\ell\}$ to $\Gamma$.

For a transduction $\Ts$ and an input $\Sigma$-structure $\str M$, the output $\Ts(\str M)$ consists of all $\Gamma$-structures $\str N$ such that there exists a coloring $\wh{\str M}$ of $k \times \str M$ with fresh unary predicates $U_1,\dots,U_\ell$ such that $\str B = \Is(\wh{\str M})$.
We say that a class of $\Sigma$-structures $\Cc$ {\em{transduces}} a class of $\Gamma$-structure $\Dd$  if there exists a transduction $\Ts$ such that for every structure $\str N \in \DD$ there is $\str M \in \Cc$ satisfying $ \str N \in \Ts(\str M)$.

\subparagraph*{Graphs.} We use standard graph theory notation. For a graph parameter $\pi$, we say that a graph class $\Cc$ has {\em{bounded $\pi$}} if there exists $k\in \N$ such that $\pi(G)\leq k$ for all $G\in \Cc$. Similarly, a class of structures $\Cc$ has bounded $\pi$ if the class of Gaifman graphs of structures in $\Cc$ has bounded $\pi$. 

\section{Canonical Splitter-strategies in nowhere dense graphs}\label{sec:splitter}

In this section, we show how compactness can be used to derive canonical decompositions for nowhere dense classes.
More precisely, we will show that in the Splitter game, which characterizes nowhere dense classes~\cite{GroheKS17}, there is a constant $k$ (depending only on the graph class $\Cc$) such that for any graph in $\Cc$ there are at most $k$ optimal Splitter moves.
This will allow us to illustrate the general methodology used in the paper.

\subparagraph*{Splitter game.} First, we recall the rules of the Splitter game.
The radius-$r$ Splitter game is played on a graph $G$ by two players, Splitter and Connector, in rounds $i=1,2,\dots$ as follows.
Initially the arena $G_1$ is the whole graph $G$.
In the $i$-th round,
\begin{itemize}
\item Connector chooses a vertex $c_i \in G_i$;
\item Splitter chooses a vertex $s_i \in G_i$ and we let $G_{i+1}=G_i[B^r_{G_i}(c_i)] - s_i$;
\item Splitter wins if $G_{i+1}$ is the empty graph, otherwise the game continues.
\end{itemize}
Here, $B_H^r(u)=\{v\in V(H)~|~\dist_H(u,v)\leq r\}$ denotes the ball of radius $r$ around $u$ in $H$.

The following result is instrumental in the celebrated proof of model-checking on nowhere dense classes~\cite{GroheKS17}. 
\begin{theorem}[Theorems 4.2 and 4.5 in~\cite{GroheKS17}]
A class of graphs $\Cc$ is nowhere dense if and only if for every $r$, there exists $\ell$ such that on every graph $G \in \Cc$, Splitter can win the radius-$r$ game in at most $\ell$ rounds. 
\end{theorem}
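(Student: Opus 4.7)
The plan is to prove both directions separately. For the easier ``only if'' direction (bounded Splitter game implies nowhere dense), I would argue by contraposition: assume $\Cc$ is not nowhere dense, so for some $d\in\N$, $\Cc$ contains $d$-subdivisions of arbitrarily large cliques $K_t$. Setting the radius to $r := d+1$, I would exhibit a Connector strategy on the $d$-subdivision of $K_t$ for sufficiently large $t$. At each round, Connector moves to a principal (branch) vertex of the $K_t$-subdivision, so that the radius-$r$ ball around it contains all other principal vertices together with the subdivision paths of length at most $d$ joining them. Since Splitter deletes only one vertex per round, this destroys at most a few subdivision paths through the chosen principal vertex. By taking $t$ much larger than $\ell$, Connector always finds a fresh principal vertex inside the current arena that remains connected via unscathed paths to many more principal vertices, so the game runs for more than $\ell$ rounds, contradicting the Splitter-bound.

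For the harder ``if'' direction (nowhere dense implies bounded Splitter game), I would invoke the characterization of nowhere dense classes via \emph{uniform quasi-wideness}: for every $r$ and $n\in\N$ there exist $m = m(r,n)$ and $s = s(r)$ such that every $G\in\Cc$ and every $A\subseteq V(G)$ with $|A|\geq m$ admits a set $S\subseteq V(G)$ with $|S|\leq s$ and a subset $A'\subseteq A\setminus S$ with $|A'|\geq n$ that is $r$-independent in $G-S$. Splitter's strategy would then be designed by strong induction on a pair $(r, n)$ parameter that bounds the complexity of the current arena, producing a bound $\ell = \ell(r,n)$ on the number of rounds. The intuition is: after Connector picks $c_i$ and the arena is restricted to the ball $B^r_{G_i}(c_i)$, apply uniform quasi-wideness to a suitable large set $A$ of ``far'' vertices inside this ball to obtain a small separator $S$; Splitter repeatedly targets the most constraining vertex of such separators, amortized over $s(r)$-many rounds, forcing the connected component containing the ``center'' $c_i$ to become properly smaller each phase, until it is exhausted.

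The main obstacle will be this ``if'' direction, specifically the design of an inductive invariant that correctly tracks the game's progress while aligning with the guarantees of uniform quasi-wideness. The delicacy lies in the mismatch between deleting a single vertex per round and the fact that uniform quasi-wideness provides a separator of size $s(r)$ at once: one must either amortize a constant number of rounds to simulate the removal of an entire $S$, or phrase the invariant so that Splitter can make local progress with every single removal. A clean formulation would ensure that after $\ell(r,n)$ rounds, if Connector has not yet lost, then the interaction history together with the uniform quasi-wideness machinery produces a shallow minor whose size grows with $n$, contradicting nowhere denseness of $\Cc$ for $n$ chosen large enough.
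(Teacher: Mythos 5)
The paper does not prove this statement; it is imported wholesale from Grohe, Kreutzer, and Siebertz~\cite{GroheKS17} (their Theorems~4.2 and~4.5), so there is no in-paper argument to compare against. Evaluating your attempt on its own terms: the overall shape is reasonable (contraposition on subdivided cliques for one direction, an inductive Splitter strategy driven by a nowhere-dense characterization for the other), but the harder direction is left with a genuine gap that you yourself acknowledge. Showing ``nowhere dense $\Rightarrow$ Splitter wins in boundedly many rounds'' requires more than invoking uniform quasi-wideness: quasi-wideness supplies a whole separator $S$ of size $s(r)$ at once, whereas Splitter deletes one vertex per round against an \emph{adaptive} Connector who sees every deletion and can re-center the arena before $S$ is exhausted. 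Neither of your two proposed fixes (amortizing over $s(r)$ rounds, or a single-step invariant) is carried out, and the crucial missing ingredient is a concrete inductive measure of the arena that provably strictly decreases with each Splitter move and whose initial value is bounded over $\Cc$. Until such an invariant is pinned down and the interaction with Connector's re-centering is handled, the direction is not proved.

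Two smaller remarks. First, the labels are swapped: in ``$A$ if and only if $B$'' with $A=$ ``nowhere dense'' and $B=$ ``Splitter wins quickly,'' the ``only if'' part is $A\Rightarrow B$, not $B\Rightarrow A$ as you write. Second, even in the easier direction, remember that after round $i$ the arena is the $r$-ball taken \emph{inside} $G_i$, not inside $G$; once Splitter deletes subdivision vertices, distances between branch vertices of the subdivided $K_t$ may exceed $r=d+1$ in the restricted arena. Your idea of taking $t\gg\ell$ so Connector can always pick a fresh branch vertex whose incident paths are all undamaged is sound, but the claim that such a vertex remains within the current arena (not just the current graph) needs to be written out explicitly; as stated it is only a sketch.
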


The \emph{$r$-Splitter number} of a graph $G$ is the minimal $\ell$ such that Splitter wins the radius-$r$ game in $\ell$ rounds.
Fix a nowhere dense class $\Cc$ and a radius $r$, and let $\ell$ be as in the theorem (hence $\ell$ is an upper bound to all $r$-Splitter numbers of graphs in $\Cc$).
Observe that for a given $\ell' \leq \ell$ there is a first-order sentence expressing that Splitter wins the radius-$r$ game in $\leq \ell'$ rounds, and therefore, there is a first-order sentence expressing that $G$ has Splitter number $\ell'$.
Given a Connector move $c \in V(G)$, we say that a Splitter move $s \in V(G)$ is \emph{$r$-progressing against $c$} if the $r$-Splitter number of $G[B_r(c)] - s$ is strictly smaller than the $r$-Splitter number of $G[B_r(c)]$.
In other words, playing $s$ is strictly better for Splitter than not playing any vertex.
Again, since an upper bound to Splitter numbers depends only on~$\Cc$, this can be expressed by a formula $\phi_r(s;c)$.
This leads to the following result.

\begin{theorem}\label{thm:progressive}
Let $\Cc$ be a nowhere dense class of graphs, and $r \in \N$.
There is a constant $k$ such that for every graph $G \in \Cc$, and every Connector move $c$, there are at most $k$ progressing moves against $c$ in $G$.
\end{theorem}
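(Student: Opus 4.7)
The plan is to argue by contradiction via the compactness theorem of first-order logic. Recall that the discussion preceding the statement already observes that ``$s$ is $r$-progressing against $c$'' is expressed by a first-order formula $\phi_r(s;c)$, using the uniform upper bound $\ell = \ell(\Cc, r)$ on the $r$-Splitter number on $\Cc$.

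I assume towards a contradiction that no uniform bound $k$ exists. Then for every $k \in \N$ there is a graph $G_k \in \Cc$ and a Connector move $c_k \in V(G_k)$ admitting at least $k$ distinct progressing responses. Expanding the signature of graphs with fresh constants $c, s_1, s_2, \dots$, I consider the theory
\[
  T \;=\; \mathrm{Th}(\Cc) \,\cup\, \{\phi_r(s_i;c) : i \in \N\} \,\cup\, \{s_i \neq s_j : i < j\},
\]
where $\mathrm{Th}(\Cc)$ is the first-order theory of $\Cc$. Every finite subset of $T$ mentions only finitely many $s_i$ and is satisfied by a suitable interpretation in $G_k$ for large $k$; by compactness $T$ admits a model $\str M$, in which the vertex $c$ has an infinite family $\{s_i\}_{i \in \N}$ of pairwise distinct progressing moves. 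Moreover $\str M \models \mathrm{Th}(\Cc)$ forces the Splitter number of $\str M[B_r(c)]$ to equal some value $q \leq \ell$.

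The final task is to derive a contradiction from the coexistence of a finite Splitter number $q$ on $\str M[B_r(c)]$ and an infinite family of moves each strictly decreasing it. By pigeonhole over the finitely many possible values $q' < q$, I may restrict to an infinite subfamily of the $s_i$ all reducing the Splitter number of $\str M[B_r(c)]$ to the same value $q'$. The key combinatorial lemma I then aim to prove is: in any graph $H$ with Splitter number equal to $q$, the set of vertices whose removal strictly decreases the Splitter number is finite, with a size bound depending only on $q$ and $r$. I plan to establish this by induction on $q$. The base case $q = 1$ forces $H$ to be a single vertex, so the unique progressing move is that vertex. For the inductive step, the winning Splitter strategies of length $\leq q-1$ on each $H - s_i$ would be spliced together --- via a Ramsey-type extraction of indiscernibility among the $s_i$ over $c$, which is available in the stable setting of nowhere dense classes --- into a winning Splitter strategy on $H$ of length strictly less than $q$, contradicting the Splitter number of $H$ being $q$.

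The main obstacle will be executing this splicing step cleanly. The difficulty is that each witness strategy on $H - s_i$ is defined only after $s_i$ has been removed, so combining infinitely many such strategies into a single strategy on the uncut arena $H$ requires a careful transfer of Connector's moves to Splitter's choice of which $s_i$ to mimic. Stability of first-order formulas on nowhere dense classes, via the theory of indiscernibles, should supply the necessary regularity in the infinite set $\{s_i\}$ to make the splicing canonical. Once the lemma is in place, it directly contradicts the infinite progressing set produced by the compactness argument above, yielding the uniform bound $k$ claimed in the theorem.
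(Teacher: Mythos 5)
Your opening compactness step is essentially the same reduction the paper makes: it suffices to rule out a model of $\mathrm{Th}(\Cc)$ in which some Connector move $c$ has infinitely many progressing responses. Up to that point you and the paper agree.

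The divergence, and the gap, is everything after. You propose to finish by proving a combinatorial lemma that says: in \emph{any} graph $H$ with $r$-Splitter number $q$, there are at most $f(q,r)$ vertices whose removal strictly lowers the Splitter number. You do not prove this lemma, and the ``splicing via indiscernibles'' plan you sketch is where the real work would be. You yourself flag this as the main obstacle, and indeed it is: combining infinitely many winning strategies on the graphs $H - s_i$ into one winning strategy on $H$ is not something that follows formally from an indiscernibility extraction, and the claimed bound depending only on $q$ and $r$ (rather than on the class $\Cc$) is a stronger and unsupported assertion. Invoking stability theory here is also a red herring --- the theorem needs no forking, no indiscernibles, only compactness and one simple monotonicity fact.

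The paper closes the argument with a \emph{second} compactness step, avoiding any combinatorial analysis of the game. Given $\str M \models \mathrm{Th}(\Cc)$ and $c \in \str M$ with infinitely many progressing moves, consider the theory consisting of: the full elementary diagram of $\str M$ (with a constant for every element of $\str M$), a new constant $s$, the sentence $\phi_r(s;c)$ asserting $s$ is progressing against $c$, and the sentences $s \neq m$ for every $m \in \str M$. Any finite fragment mentions only finitely many named elements of $\str M$, so it is satisfied in $\str M$ itself by interpreting $s$ as one of the remaining infinitely many progressing moves. Compactness yields $\str N \succ \str M$ containing a progressing move $s \notin \str M$. Since $\str M \prec \str N$, the graphs $\str M[B_r(c)]$ and $\str N[B_r(c)]$ have the same Splitter number, say $\ell'$. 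Because $s$ is progressing, $\str N[B_r(c)] - s$ has Splitter number $< \ell'$. But $s \notin \str M$, so $\str M[B_r(c)]$ is a subgraph of $\str N[B_r(c)] - s$; since the $r$-Splitter number is monotone under subgraphs, $\ell' \leq$ Splitter number of $\str N[B_r(c)] - s < \ell'$, a contradiction. This is the heredity/monotonicity observation that does all the work, replacing the lemma you were trying to prove. You should replace the second half of your argument with this step.
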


In particular, this gives an isomorphism-invariant strategy for Splitter: simply play all progressing moves (either one by one, in any order, or all at once in an extended variant of the game considered in~\cite{GroheKS17}, where Splitter can remove a bounded number of vertices in each turn, instead of just one.)
The idea of the proof is to extend, by compactness, progressive moves towards outside the model (in an elementary extension), and conclude by observing that ``being a progressive move'' is a definable and hereditary property.

\begin{proof}
Let $T$ be the theory of $\Cc$.
Note that $T$ contains the sentence ``Splitter wins the radius-$r$ game in $\leq \ell$ rounds''. Our aim is to prove that for some $k$, it contains the sentence ``for all connector moves $c$, there are at most $k$ progressing Splitter moves against $c$''.
We show that for any model of $T$ and any connector move $c$, there are finitely many progressing Splitter moves against $c$; the result then follows from an easy application of compactness.

Assume for contradiction that there is a model $\str M$ of $T$ and a connector move $c \in \str M$ such that Splitter has infinitely many progressing moves against $c$.
We now let $T'$ be the theory over the signature extended by a constant corresponding to each element $m \in \str M$ and an additional constant $s$, such that $T'$ consists of:
\begin{itemize}
    \item all sentences in $T$,
    \item all sentences (with parameters in $\str M$) which hold in $\str M$,
    \item a sentence expressing that $s$ is a progressing move against $c$, and
    \item for each $m \in \str M$, a sentence expressing that $s \neq m$.
\end{itemize}
Since every finite subset $T''$ of $T'$ mentions finitely many $m \in \str M$, one can construct a model of $T''$ by starting from $\str M$ and setting $s$ to be one of those progressing moves that are not mentioned. 
We conclude from compactness (Theorem~\ref{thm:compactness}) that $T'$ is consistent.

Let $\str N$ be a model of $T'$.
By construction $\str N$ is an elementary extension of $\str M$ --- in particular, $\str N[B_r(c)]$ has the same Splitter number $\ell'$ as $\str M[B_r(c)]$ --- and contains a progressing move $s \in \str N \setminus \str M$ against $c$.
This means that $\str N[B_r(c)] - s$ has Splitter number $< \ell'$.
But $\str M[B_r(c)]$ is a subgraph of $\str N[B_r(c)] - s$ with Splitter number $\ell'$: this is absurd.
\end{proof}

The next section presents more elaborate tools from stability theory that will allow us to extend the above idea to different settings.

\section{Stability, forking, and Finitary Substitution}\label{sec:stability}

This section collects notions and a few basic results from stability theory.
The purpose is to give a self-contained exposition culminating in our Finitary Substitution Lemma; for more context and explanations we refer to~\cite{tent_ziegler}.

\subsection{Stability and definability of types}

We say that a formula $\phi(\tup x;\tup y)$ defines a {\em{ladder}} of order $k$ \iffull (see Figure \ref{fig:ladder}) \else \fi in a model $\str M$ if there are sequences $\tup a_1,\ldots\tup a_{k}\in \str M^{\tup x}$ and $\tup b_1,\ldots,\tup b_k\in\str M^{\tup y}$ satisfying \[\str M\models\phi(\tup a_i;\tup b_j)\qquad\textrm{if and only if}\qquad i < j,\qquad\text{for $1\le i,j\le k$}.\]
For a formula $\phi(\tup x; \tup y)$ we call the largest $k$ such that $\phi$ defines a ladder of order $k$ the \emph{ladder index} of $\phi$ in $\str M$.
If no such $k$ exists, we say that the ladder index of $\phi$ is $\infty$.

\iffull
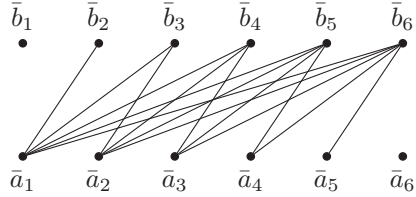
\begin{figure}[h]
    \centering
    
    \begin{tikzpicture} 
        \tikzstyle{vertex} = [draw, circle, fill, inner sep=1pt]
    
        \foreach \i in {1,2,3,4,5,6} {
        \node[vertex] (a\i) at (\i,0) [label=below:$\tup a_\i$] {};
        \node[vertex] (b\i) at (\i,1.5) [label=above:$\tup b_\i$] {};
        }
        \foreach \i/\j in {1/2,1/3,1/4,1/5,1/6,2/3,2/4,2/5,2/6,3/4,3/5,3/6,4/5,4/6,5/6} {
        \draw (a\i) -- (b\j);
        }
    \end{tikzpicture}
  
    \caption{A ladder of order $6$.}
    \label{fig:ladder}
\end{figure}
\else \fi

We say that $\phi$ is \emph{stable} in $\str M$ if its ladder index is finite.
We say that $\phi$ is stable in a theory $T$ if it is stable in all models of $T$.
Moreover, we say that a model (or a theory) is stable if every formula is stable.

We now state a fundamental result about stable formulas; it states that sets definable by stable formulas with parameters in some elementary extension can actually be defined from the model itself.

\begin{theorem}[Definability of types]\label{thm:definability-of-types}
  Let $\str M\prec \str N$ be two models and $\phi(\tup x; \tup y)$ be a stable formula of ladder index $d$ in $\str M$.
  For every $\tup n \in \str N^{\tup y}$ there is a formula $\psi(\tup x)$, which is a positive boolean combination of formulas of the form $\psi(\tup x; \tup m)$ using $2 d +1$ parameters $\tup m \in \str M^{\tup y}$, such that for every $\tup a \in \str M^{\tup x}$,
  \[
    \str N \models \phi(\tup a; \tup n) \quad \textrm{ if and only if } \quad \str M \models \psi(\tup a).
  \]
\end{theorem}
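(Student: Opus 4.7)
The strategy is to realize the desired $\psi$ as a \emph{majority-vote} formula over $2d+1$ instances of $\phi$. Concretely, I aim to find parameters $\tup m_1,\ldots,\tup m_{2d+1}\in \str M^{\tup y}$ such that, for every $\tup a\in \str M^{\tup x}$,
\[
  \str N\models \phi(\tup a;\tup n)\qquad\Longleftrightarrow\qquad \bigl|\{i\in[2d+1]:\str M\models \phi(\tup a;\tup m_i)\}\bigr|\geq d+1.
\]
Once these are in hand, the formula
\[
 \psi(\tup x)\;:=\;\bigvee_{\substack{I\subseteq[2d+1]\\ |I|=d+1}}\ \bigwedge_{i\in I}\phi(\tup x;\tup m_i)
\]
is visibly a positive Boolean combination of $2d+1$ substitutions of $\phi$, with parameters in $\str M$, and defines the set $X:=\{\tup a\in\str M^{\tup x}:\str N\models \phi(\tup a;\tup n)\}$, which is exactly what is required.

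\textbf{Construction of the parameters.} I would produce $\tup m_1,\tup m_2,\ldots$ greedily. At stage $k$, given $\tup m_1,\ldots,\tup m_k$, either the majority vote over these already classifies every $\tup a\in\str M^{\tup x}$ correctly (in which case I pad the sample up to $2d+1$ by repeating one of the $\tup m_i$, and stop), or there is a \emph{bad} point $\tup a_{k+1}\in\str M^{\tup x}$ on which the majority vote disagrees with the indicator $\chi_X$. In the latter case, I pick a new parameter $\tup m_{k+1}\in\str M^{\tup y}$ that agrees with $\tup n$ on $\tup a_{k+1}$ in the sense that $\str M\models \phi(\tup a_{k+1};\tup m_{k+1})$ iff $\tup a_{k+1}\in X$. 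The existence of such an $\tup m_{k+1}$ in $\str M$ is guaranteed by elementarity: the first-order statement ``there exists $\tup y$ with $\phi(\tup a_{k+1};\tup y)$'' (or its negation) is witnessed in $\str N$ by $\tup n$ itself, hence by $\str M\prec \str N$ it is also witnessed in $\str M$.

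\textbf{Termination and main obstacle.} The heart of the argument is to show that the greedy loop halts within $2d+1$ iterations; this is where stability enters and where all the real difficulty lies. Each newly produced pair $(\tup a_{k+1},\tup m_{k+1})$ records a fresh discrepancy: on $\tup a_{k+1}$, the new parameter $\tup m_{k+1}$ gives the ``correct'' answer while a strict majority of the previous parameters gave the wrong one (and dually for the earlier $\tup a_j$'s, which by construction were correctly handled by their own $\tup m_j$). The plan is to massage the accumulated incidence pattern of $\phi(\tup a_i;\tup m_j)$ into an honest $\phi$-ladder in $\str M$: by passing to alternating subsequences of rows and columns one extracts, from every two greedy steps, one new row of a ladder whose incidence matches the defining ladder pattern $\phi(\tup a_i;\tup m_j)\Leftrightarrow i<j$. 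Since the ladder index of $\phi$ is $d$ in $\str M$ (equivalently in $\str N$, by elementarity applied to the first-order existence of a ladder of given length), the accumulated ladder cannot exceed $d$ rows, so the greedy procedure must terminate in at most $2d+1$ iterations.

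The main obstacle is precisely this ladder-extraction bookkeeping: converting a ``majority was wrong'' witness into a clean incidence between past and future parameters, so that the resulting pattern is strictly monotone in the sense required for a $\phi$-ladder. Once this is done, the two constraints --- the ladder bound $d$ in $\str M$ and the doubling-factor of $2$ inherent to a majority vote --- combine to give the announced constant $2d+1$, concluding the proof.
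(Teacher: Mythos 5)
The paper states this theorem as known background (it is the classical \emph{definability of types} for stable formulas; see the textbook of Tent and Ziegler cited in the preliminaries) and supplies no proof, so there is no in-paper argument to compare against. Your majority-vote plan is the standard Harrington argument and is the right strategy in outline. There are, however, two genuine gaps.

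First, in the greedy construction you ask only that the new parameter $\tup m_{k+1}$ agree with $\tup n$ on the \emph{single} new bad point $\tup a_{k+1}$, justified by elementarity of the one-point statement ``$\exists\tup y.\,\phi(\tup a_{k+1};\tup y)$''. This is too weak. What the argument requires is that $\tup m_{k+1}$ agree with $\tup n$ on the \emph{entire} history, i.e.\ $\str M\models\phi(\tup a_i;\tup m_{k+1})$ iff $\str N\models\phi(\tup a_i;\tup n)$ for all $i\le k+1$; this is still a first-order condition with parameters in $\str M$ witnessed by $\tup n$ in $\str N$, so elementarity still applies, but you have to ask for the stronger thing. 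Under your weaker requirement the incidence matrix $\bigl(\phi(\tup a_i;\tup m_j)\bigr)_{i,j}$ has no upper-triangular structure (you only control the diagonal), so there is no ladder-shaped pattern to extract, and in fact the greedy loop need not terminate, since later parameters can silently undo what earlier ones achieved on earlier bad points. Second, even with the construction corrected so that $\phi(\tup a_i;\tup m_j)=\chi_X(\tup a_i)$ for all $i\le j$, the termination argument --- which you yourself flag as ``where all the real difficulty lies'' --- is not carried out: the majority-failure condition only gives a density bound (fewer than half of the entries $\phi(\tup a_j;\tup m_i)$ with $i<j$ agree with $\chi_X(\tup a_j)$), whereas a $\phi$-ladder requires an increasing index set on which \emph{every} below-diagonal entry disagrees, and the combinatorial step from this density bound to a clean monotone sub-pattern of length $>d$ --- precisely the step that would pin down the constant $2d+1$ --- remains to be supplied.
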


\subsection{Forking in stable theories}

We move on to the definition of forking, which was first defined by Shelah in order to study stable theories~\cite{classification_theory_v1}, and later grew to become the central notion of stability theory.
In stable theories, forking coincides with the simpler notion of dividing, so by a slight abuse we will only work with dividing (and call it forking).
We first need to formally introduce types, then we give a definition of forking in stable theories and a few useful properties.

\subparagraph*{Types.}
Fix a model $\str M$ over a signature $\Sigma$.
A set $\pi$ of formulas in variables $\tup x$ with parameters from $A \subseteq \str M$ is called a \emph{partial type over $A$} if it is \emph{consistent}: for every finite subset $\pi' \subseteq \pi$ there is $\tup m \in \str M^{\tup x}$ which satisfies all the formulas from $\pi'$ (i.e. for every formula $\phi(\tup x) \in \pi'$ we have $\str M \models \phi(\tup m)$).
We sometimes write $\pi(\tup x)$ to explicitly mention free variables.
Partial types $p$ which are maximal are called \emph{types}; this amounts to stating that for every formula $\phi(\tup x)$ with parameters from $A$, either $\phi(\tup x) \in p$ or $\neg \phi(\tup x) \in p$.
Observe that for sets $A \subseteq B \subseteq \str M$ every type $p$ over $A$ can be seen as a partial type over $B$.
We denote the set of types over $A$ in variables $\tup x$ by $S_{\tup x}(A)$.

For a tuple $\tup a \in \str M^{\bar x}$ and a set $A \subset \str M$ of parameters, the type of $\tup a$ over $A$, denoted $\tp(\tup a / A) \in S_{\tup x}(A)$, is the set of all formulas $\phi(\tup x)$ with parameters from $A$ such that $\str M \models \phi(\tup a)$.
It follows from compactness that for every $p \in S_{\tup x}(\str M)$ there is some $\str N \succ \str M$ and an $\tup x$-tuple $\tup n \in \str N^{\tup x}$ such that $\tp(\tup n / \str M) = p$.

\subparagraph*{Forking.}
Fix a stable model $\str M$ over a signature $\Sigma$ and a set $A \subset \str M$.
Let $\phi(\tup x; \tup y)$ be a formula without parameters and let $\tup b \in \str M^{\tup y}$.
We say that $\phi(\tup x; \tup b)$ \emph{forks over $A$} if there is an elementary extension $\str N \succ \str M$, a sequence $\tup b_1, \tup b_2, \ldots \in \str N^{\bar y}$ satisfying $\tp(\tup b_i / A) = \tp(\tup b / A)$ for every $i$ and an integer $k$ such that $S = \set{\phi(\tup x; \tup b_i): i \in \N}$ is $k$-inconsistent: no $k$-element subset of $S$ is consistent.
For a type $p \in S_{\tup x}(B)$ over a set $B \subseteq \str M$, we say that $p$ forks over $A$ if there is a formula $\phi(\tup x;\tup b) \in p$ which forks over $A$.

We will make use of the following important property of forking which is often called \emph{(full) existence}.

\begin{theorem}[{See \cite[Corollary 7.2.7]{tent_ziegler}}]
    \label{thm:existence-for-types}
    Let $\str M$ be a stable model and let $A \subseteq B \subset \str M$.
    For every $p \in S_{\tup x}(A)$ there is some $q \in S_{\tup x}(B)$ such that $p \subseteq q$ and $q$ does not fork over $A$.
\end{theorem}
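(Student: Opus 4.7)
The plan is to realize $q$ as any completion, to a complete type over $B$, of the set
\[
\Gamma(\tup x) \;=\; p(\tup x) \;\cup\; \set{\neg \psi(\tup x;\tup b) \,:\, \tup b \in B^{\tup y},\ \psi(\tup x;\tup b) \text{ forks over } A}.
\]
Any $q \in S_{\tup x}(B)$ containing $\Gamma$ automatically extends $p$ and excludes every formula that forks over $A$, so it satisfies the conclusion. The whole task therefore reduces to verifying that $\Gamma$ is consistent, which by compactness (Theorem~\ref{thm:compactness}) amounts to showing that for every finite list $\psi_1(\tup x;\tup b_1), \ldots, \psi_n(\tup x;\tup b_n)$ with parameters $\tup b_i \in B$ each forking over $A$, the set $p \cup \set{\neg\psi_i : 1 \le i \le n}$ has a realization.

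As a preliminary lemma I would first establish that, in a stable theory, a finite disjunction of formulas forking over $A$ itself forks over $A$; this is a standard consequence of stability (the equivalence of forking and dividing in stable theories), proved by extracting an $A$-indiscernible sequence via Ramsey and tracing $k$-inconsistency along it. Granted this lemma, the finite case collapses to the case of a single formula: if $\psi(\tup x;\tup b)$ with $\tup b \in B$ forks over $A$, then $p \cup \set{\neg\psi(\tup x;\tup b)}$ is consistent. Assuming otherwise, so that $p \vdash \psi(\tup x;\tup b)$, let $\tup b_1, \tup b_2, \ldots$ in some elementary extension $\str N \succ \str M$ witness the forking: each $\tup b_i$ realizes $\tp(\tup b/A)$ and $\set{\psi(\tup x;\tup b_i) : i \in \N}$ is $k$-inconsistent. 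Passing to a saturated monster, each $\tup b_i$ is $A$-conjugate to $\tup b$ via some automorphism $\sigma_i$ fixing $A$ pointwise; since $p$ is a type over $A$, it is $\sigma_i$-invariant, and applying $\sigma_i$ to $p \vdash \psi(\tup x;\tup b)$ yields $p \vdash \psi(\tup x;\tup b_i)$ for every $i$. Any realization $\tup a$ of $p$ then simultaneously satisfies every $\psi(\tup x;\tup b_i)$, contradicting $k$-inconsistency.

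The main obstacle is the preliminary disjunction lemma, which is exactly the point where stability enters in an essential way; once it is available, the rest of the argument is a routine compactness-and-automorphism manipulation. I expect this lemma to rest on the same Ramsey/indiscernibility combinatorics that underlies Theorem~\ref{thm:definability-of-types}, since both results ultimately exploit the inability to witness unbounded alternation on a stable formula. An alternative route that sidesteps the explicit disjunction step would be to apply Theorem~\ref{thm:definability-of-types} to an extension of $p$ over a model containing $B$ and verify that the resulting definable extension is non-forking; I would choose whichever route integrates more smoothly with the surrounding development.
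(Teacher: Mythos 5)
The paper does not prove this statement itself; it is cited directly from \cite[Corollary 7.2.7]{tent_ziegler}, so there is no internal argument to compare against. Your proposal is a correct rendering of the standard proof of full existence. The reduction by compactness to showing $p \not\vdash \bigvee_{i=1}^{n} \psi_i$ whenever each $\psi_i$ divides over $A$ is right, and your $A$-automorphism argument correctly disposes of a single dividing formula (and, notably, does so without any stability hypothesis). Stability enters exactly where you locate it: in the preliminary lemma that a finite disjunction of $A$-dividing formulas divides over $A$. In the classical terminology, where forking is defined as implying a disjunction of dividing formulas, this lemma is precisely the nontrivial direction of ``forking coincides with dividing'' in simple theories (Kim's lemma); since the paper has collapsed the two notions by fiat, the lemma cannot be bypassed, and your sketch via an $A$-indiscernible sequence together with pigeonholing the $k$-inconsistency across disjuncts is the right idea. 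The alternative route you mention, obtaining a definable non-forking extension via \Cref{thm:definability-of-types}, is also a valid and purely stable-theoretic proof. Overall the proposal is correct and is essentially the argument one finds in the cited reference.
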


\subparagraph*{Finitary formulas.}
We say that a formula $\phi(\tup x; \tup y)$ is \emph{finitary} in a theory $T$ if for every model $\str M$ of $T$, the set $\{\phi(\str M^{\tup x}; \tup m)\, \colon\, m \in \str M^{\tup y}\}$ is finite. By compactness, this is equivalent to the following assertion: there exists $k\in \N$ such that $\left|\{\phi(\str M^{\tup x}; \tup m)\, \colon\, m \in \str M^{\tup y}\}\right|\leq k$ for every model $\str M$ of $T$.
We now relate forking and finitary formulas.

\begin{theorem}[Special case of {\cite[Theorem 8.5.1]{tent_ziegler}}\protect\footnotemark]
    \footnotetext{Formally, \cite[Theorem 8.5.1]{tent_ziegler} speaks about definability with \emph{imaginaries}, which is known to be equivalent to the existence of finitary formulas (see for instance {\cite[Lemma 1.3.2 (5), Lemma 1.3.7]{chernikov_lecture_notes}}).}
    \label{thm:definable-forking}
    Let $\str M$ be a stable model, $\str N$ an elementary extension of $\str M$, $\phi(\tup x; \tup y)$ a formula, $\tup n \in \str N^{\tup y}$, and $A\subseteq \str M$.
    If $\tp(\tup n / \str M)$ does not fork over $A$, then there is a finitary formula $\phi'(\tup x; \tup z)$ and a tuple $\tup r \in \str M^{\tup z}$ such that $\phi(\str N^{\tup x}; \tup n) \cap \str M^{\tup x} = \phi'(\str M^{\tup x}; \tup r)$.
\end{theorem}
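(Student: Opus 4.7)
The plan is to combine Theorem~\ref{thm:definability-of-types} with the characterization of non-forking over $A$ as a statement that the ``canonical parameter'' of the $\phi$-definition of $\tp(\tup n/\str M)$ is algebraic over $A$, and then unwrap the latter into a finitary formula.

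First, I would apply Theorem~\ref{thm:definability-of-types} to $\phi$, yielding a formula $\psi(\tup x;\tup z)$ (a positive Boolean combination of instances of $\phi$) and a parameter $\tup m\in\str M^{\tup z}$ such that
\[
\phi(\str N^{\tup x};\tup n)\cap \str M^{\tup x}\;=\;\psi(\str M^{\tup x};\tup m).
\]
This reduces the task to producing a finitary formula $\phi'(\tup x;\tup z')$ and a parameter $\tup r\in\str M^{\tup z'}$ with $\phi'(\str M^{\tup x};\tup r)=\psi(\str M^{\tup x};\tup m)$. I would then introduce the $\emptyset$-definable equivalence relation $E(\tup z_1,\tup z_2):=\forall\tup x(\psi(\tup x;\tup z_1)\leftrightarrow\psi(\tup x;\tup z_2))$, whose classes are in bijection with the sets of the form $\psi(\str M^{\tup x};\cdot)$.

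The core reformulated goal becomes: find a formula $\theta(\tup z)$ with parameters in $A$ such that $\str M\models\theta(\tup m)$ and such that $\theta$ is satisfied, in every model of the theory, by tuples lying in only a uniformly bounded number of $E$-classes. Granting such $\theta$, the formula $\phi'(\tup x;\tup z):=\psi(\tup x;\tup z)\wedge\theta(\tup z)$ and $\tup r:=\tup m$ do the job: for any parameter $\tup s$, either $\theta(\tup s)$ fails, giving the empty set, or $\tup s$ lies in one of finitely many $E$-classes intersecting $\theta$, giving finitely many possible subsets via $\psi$. Compactness then upgrades this to a uniform finite bound across all models, as required.

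The heart of the argument is the construction of $\theta$. Suppose for contradiction that no such $\theta$ exists: every formula $\theta(\tup z)$ over $A$ satisfied by $\tup m$ is compatible, in some elementary extension, with unboundedly many $E$-inequivalent realizations. By compactness, I would then obtain a sequence $\tup m_1,\tup m_2,\ldots$ of realizations of $\tp(\tup m/A)$ in an elementary extension that are pairwise $E$-inequivalent, and translate this back, via the explicit shape of $\psi$ as a positive Boolean combination of instances of $\phi$, into a sequence of $A$-conjugates $\tup n_1,\tup n_2,\ldots$ of $\tup n$ whose instances $\phi(\tup x;\tup n_i)$ exhibit $k$-inconsistency for some $k$; this is where I would invoke Theorem~\ref{thm:existence-for-types} to inductively build the $\tup n_i$ preserving the non-forking data. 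Such a sequence witnesses that $\phi(\tup x;\tup n)$ forks over $A$, contradicting the hypothesis.

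The main obstacle I expect is this last translation step: carefully transporting the ``unboundedly many $E$-inequivalent $\tup m_i$'' into an actual forking witness on the side of $\tup n$. The $\tup m$ comes out of Theorem~\ref{thm:definability-of-types} only indirectly from $\tup n$, so the bookkeeping that links conjugates of $\tup m$ over $A$ to conjugates of $\tup n$ over $A$ is delicate, and essentially re-derives a fragment of the canonical base theory for stable formulas. Handling this cleanly will likely require iterating the definability-of-types construction along the sequence $\tup m_1,\tup m_2,\ldots$ and exploiting the bounded ladder index of $\phi$ to extract a finite inconsistent subfamily of the witnessing instances of~$\phi$.
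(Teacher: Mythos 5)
The paper does not contain its own proof of this theorem: it is cited as a special case of Tent--Ziegler, Theorem~8.5.1, together with a routine translation between ``definability over $\acleq(A)$'' and ``existence of a finitary formula'' (pointed to in the footnote). What you propose is essentially to re-derive that cited result. Your outline correctly reproduces the standard architecture --- apply Theorem~\ref{thm:definability-of-types} to get $\psi(\tup x;\tup z)$ and $\tup m$, reduce to showing the $E$-class of $\tup m$ is algebraic over $A$, then cut $\psi$ down by an algebraicity-witnessing formula $\theta$. One point to repair in the final extraction: for $\phi'$ to be finitary in the ambient theory (and not merely over $Th(\str M,a)_{a\in A}$) you should build the boundedness assertion into the formula itself, e.g.\ take $\phi'(\tup x;\tup z\tup w):=\psi(\tup x;\tup z)\wedge\theta(\tup z;\tup w)\wedge\bigl[\theta(\,\cdot\,;\tup w)\text{ meets at most }k\text{ many }E\text{-classes}\bigr]$; otherwise an arbitrary substitution for the $A$-parameters in some other model could yield an unbounded family.

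The genuine gap is in the contradiction step. Forking over $A$, as defined in the paper, requires exhibiting a formula $\chi(\tup y;\tup b)\in p=\tp(\tup n/\str M)$ with parameter $\tup b\in\str M$ and $A$-conjugates $\tup b_1,\tup b_2,\dots$ of $\tup b$ in some elementary extension for which the family $\{\chi(\tup y;\tup b_i)\}_i$ is $k$-inconsistent. Your sketch instead proposes to produce $A$-conjugates $\tup n_i$ of $\tup n$ itself, living outside $\str M$, with $k$-inconsistent instances $\phi(\tup x;\tup n_i)$; but $\phi(\tup x;\tup n_i)$ is a formula in the $\tup x$-variables with parameter $\tup n_i\notin\str M$, so this is not a forking witness for $p$ over $A$ in the required sense --- the roles of the variable blocks are swapped and the parameters live in the wrong place. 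Passing from ``infinitely many pairwise $E$-inequivalent $A$-conjugates of the canonical parameter $[\tup m]_E$'' to an actual forking configuration for $p$ is precisely the substantive content of Tent--Ziegler 8.5.1; a single appeal to Theorem~\ref{thm:existence-for-types} does not close it, and the known arguments route through Morley sequences over $A$ and the finite equivalence relation theorem. You flag this step yourself as the delicate one, and that self-assessment is accurate: as written the step is missing, and supplying it would amount to re-proving the theorem the paper deliberately cites rather than proves.
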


Combining \Cref{thm:existence-for-types,thm:definable-forking} yields the following statement.

\begin{lemma}\label{lem:mt-core}
Let $\str M$ be a stable model over the signature $\Sigma$, $\phi(\tup x;\tup y)$ a $\Sigma$-formula, and $\psi$ a sentence over signature $\Sigma \cup \{R\}$, where $R \notin \Sigma$ has arity $|\tup x|$.
Let $\tup s \in \str M^{\tup y}$ be such that $\str M \models \psi[R(\tup x) / \phi(\tup x;\tup s)]$.
Then there is an elementary extension $\str N$ of $\str M$, a tuple $\tup s' \in \str N^{\tup y}$, a finitary formula $\phi'(\tup x; \tup z)$ and a tuple $\tup r \in \str M^{\tup z}$, such that $\str N \models \psi[R(\tup x) / \phi(\tup x; \tup s')]$ and $\phi(\str N^{\tup x};\tup s') \cap \str M^{\tup x} = \phi'(\str M^{\tup x}; \tup r)$.
\end{lemma}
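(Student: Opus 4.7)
The plan is to combine the two headline results of the preceding subsection --- full existence (\Cref{thm:existence-for-types}) and definability of forking (\Cref{thm:definable-forking}) --- with the observation that, once we treat $\tup y$ as a free variable, the assertion $\psi[R(\tup x)/\phi(\tup x;\tup y)]$ becomes a single $\Sigma$-formula without parameters, whose truth in any model at $\tup y$ depends only on the complete type of $\tup y$ over $\emptyset$.

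First, I would set $p = \tp(\tup s/\emptyset) \in S_{\tup y}(\emptyset)$ and apply \Cref{thm:existence-for-types} with $A = \emptyset$ and $B = \str M$ to obtain a complete type $q \in S_{\tup y}(\str M)$ that extends $p$ and does not fork over $\emptyset$. A standard compactness argument then produces an elementary extension $\str N \succ \str M$ together with a realization $\tup s' \in \str N^{\tup y}$ satisfying $\tp(\tup s'/\str M) = q$.

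The second conclusion $\phi(\str N^{\tup x};\tup s') \cap \str M^{\tup x} = \phi'(\str M^{\tup x}; \tup r)$ follows at once by applying \Cref{thm:definable-forking} to $\str M \prec \str N$, the formula $\phi(\tup x;\tup y)$, the tuple $\tup s'$, and the set $A = \emptyset$: since $q = \tp(\tup s'/\str M)$ does not fork over $\emptyset$, the theorem supplies the required finitary $\phi'(\tup x;\tup z)$ and $\tup r \in \str M^{\tup z}$. For the first conclusion $\str N \models \psi[R(\tup x)/\phi(\tup x;\tup s')]$, I would reason as follows. By elementarity $\str M \prec \str N$, the hypothesis $\str M \models \psi[R(\tup x)/\phi(\tup x;\tup s)]$ lifts to $\str N \models \psi[R(\tup x)/\phi(\tup x;\tup s)]$; and because $q$ extends $p$, the tuples $\tup s$ and $\tup s'$ share the same complete type over $\emptyset$, so they satisfy the same parameter-free $\Sigma$-formula $\psi[R(\tup x)/\phi(\tup x;\tup y)]$ in $\str N$, giving the claim.

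I do not foresee a genuine obstacle, as both ingredients have been imported wholesale. The only point needing care is the syntactic observation that, because $\phi$ is a $\Sigma$-formula and $\psi$ is a sentence over $\Sigma \cup \{R\}$, the substitution $\psi[R(\tup x)/\phi(\tup x;\tup y)]$ introduces no parameters from $\str M$ beyond $\tup y$ itself, and is therefore a genuine $\Sigma$-formula in $\tup y$ whose satisfaction in a fixed model depends only on $\tp(\tup y/\emptyset)$; this is what licenses the transfer of satisfaction from $\tup s$ to $\tup s'$ in the previous paragraph.
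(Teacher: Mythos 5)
Your proof is correct and follows exactly the same route as the paper's: apply full existence to extend $\tp(\tup s/\emptyset)$ to a non-forking type over $\str M$, realize it in an elementary extension to get $\tup s'$, observe that $\tp(\tup s'/\emptyset)=\tp(\tup s/\emptyset)$ gives the first conclusion, and invoke definability of forking for the second. The only difference is that you spell out a bit more explicitly why sharing a type over $\emptyset$ transfers $\psi[R(\tup x)/\phi(\tup x;\cdot)]$, which the paper leaves implicit.
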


\begin{proof}
Consider $p = \tp(\tup s / \emptyset)$.
By \Cref{thm:existence-for-types}, $p$ extends to a type $q \in S_{\tup y}(\str M)$ which does not fork over $\emptyset$.
By compactness there is an elementary extension $\str N \succ \str M$ and a tuple $\tup s' \in \str N^{\bar y}$ such that $\tp(\tup s' / \str M) = q$.
In particular $\tp(\tup s' / \emptyset) = p = \tp(\tup s / \emptyset)$, and therefore $\str N \models \psi[R(\tup x) / \phi(\tup x; \tup s')]$ as required.
Applying Theorem~\ref{thm:definable-forking} we get a finitary formula $\phi'(\tup x; \tup z)$ and a tuple $\tup r \in \str M^{\tup z}$ with the wanted properties.
\end{proof}

\subsection{Finitary Substitute Lemma}

Recall from Section~\ref{sec:splitter} that applying our method requires a mechanism for moving the wanted property $\psi$ back towards the structure $\str M$ we started from.
This is formalized by the following definition.
In a theory~$T$, and given two sentences $\psi$ and $\psi'$ over the signature $\Sigma \cup \{R\}$, we say that a sentence $\psi$ \emph{induces $\psi'$ on semi-elementary substructures} if for every model $\str M$ of~$T$, for every elementary extension $\str N$ and for every $\Rr \subseteq \str N^{k}$, where $k$ is the arity of $R$,
\[
    \str N[R / \Rr] \models \psi \qquad \textrm{implies} \qquad \str M[R / \Rr|_{\str M}] \models \psi'.
\]
As an important special case, if $\psi$ is hereditary then $\psi$ induces $\psi$ on semi-elementary substructures.
We are now ready to state our main model-theoretic tool.

\begin{lemma}[Finitary Substitute Lemma]\label{lem:finitary-substitute}
    Let $T$ be a theory with signature $\Sigma$, $\phi(\tup x;\tup y)$ a stable formula, and $\psi,\psi'$ be sentences over the signature $\Sigma \cup \{R\}$, where $R \notin \Sigma$ has arity $|\tup x|$, such that $\psi$ induces $\psi'$ on semi-elementary substructures.
    Assume that $T \models \exists \tup s. \psi[R(\tup x) / \phi(\tup x; \tup s)]$.
    Then there is a finitary formula $\phi'(\tup x; \tup z)$ such that $T \models \exists \tup s. \psi'[R(\tup x) / \phi'(\tup x; \tup s)]$.
\end{lemma}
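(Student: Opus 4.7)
The plan is to apply Lemma~\ref{lem:mt-core} model-by-model and then uniformize across models using compactness. Fix an arbitrary model $\str M \models T$; by the hypothesis, pick a witness $\tup s \in \str M^{\tup y}$ with $\str M \models \psi[R(\tup x)/\phi(\tup x;\tup s)]$. Applying Lemma~\ref{lem:mt-core} produces an elementary extension $\str N \succ \str M$, a tuple $\tup s' \in \str N^{\tup y}$, a finitary $\Sigma$-formula $\phi'_{\str M}(\tup x; \tup z)$, and a tuple $\tup r \in \str M^{\tup z}$ satisfying $\str N \models \psi[R(\tup x)/\phi(\tup x; \tup s')]$ and $\phi(\str N^{\tup x}; \tup s') \cap \str M^{\tup x} = \phi'_{\str M}(\str M^{\tup x}; \tup r)$.

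Next, set $\Rr := \phi(\str N^{\tup x}; \tup s') \subseteq \str N^{\tup x}$, so that $\str N[R/\Rr] \models \psi$. Since $\psi$ induces $\psi'$ on semi-elementary substructures and $\str M \prec \str N$, one obtains $\str M[R/\Rr|_{\str M}] \models \psi'$; using $\Rr|_{\str M} = \phi'_{\str M}(\str M^{\tup x}; \tup r)$, this says $\str M \models \exists \tup z.\, \psi'[R(\tup x)/\phi'_{\str M}(\tup x;\tup z)]$. Writing $\sigma_{\phi'}$ for the sentence $\exists \tup z.\, \psi'[R(\tup x)/\phi'(\tup x;\tup z)]$, the above shows that every model of $T$ satisfies $\sigma_{\phi'}$ for some finitary (in $T$) $\Sigma$-formula $\phi'$. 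Consequently the set $T \cup \{\neg \sigma_{\phi'} : \phi' \textrm{ finitary in } T\}$ is unsatisfiable, and compactness provides finitely many finitary formulas $\phi'_1, \ldots, \phi'_n$ with $T \models \bigvee_{i=1}^{n} \sigma_{\phi'_i}$.

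Finally, to produce the single finitary formula demanded by the statement, combine $\phi'_1, \ldots, \phi'_n$ by appending a selector parameter, for instance $\phi'(\tup x; \tup z_1, \ldots, \tup z_n, u_1, \ldots, u_n, v) := \bigvee_{k} \bigl( v = u_k \wedge \bigwedge_{j \neq k} v \neq u_j \wedge \phi'_k(\tup x; \tup z_k) \bigr)$. Tuning the selector forces $\phi'(\cdot)$ to equal exactly one $\phi'_k(\cdot; \tup z_k)$ or the empty relation, so the family of sets definable by $\phi'$ is bounded (hence $\phi'$ is finitary in $T$), while $T \models \sigma_{\phi'}$ follows from $T \models \bigvee_i \sigma_{\phi'_i}$. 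The main obstacle is verifying that the formula output by Lemma~\ref{lem:mt-core} can be taken to be finitary in $T$, and not merely in $\mathrm{Th}(\str M)$, so that the compactness argument is applied to a genuine family of $T$-finitary sentences; this hinges on the uniform ladder-index bound for $\phi$ across models of $T$ given by stability, which makes the canonical ``forking-definition'' formula from stability theory finitary in $T$ with a finitariness bound independent of the chosen model.
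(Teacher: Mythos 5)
Your proof is correct and follows essentially the same route as the paper: applying Lemma~\ref{lem:mt-core}, using compactness to extract a finite family of finitary formulas, and merging them into a single finitary formula via a parameter-encoded selector (the paper uses a binary encoding into $\lceil\log k\rceil$ variables rather than your $n$-marker/selector scheme, but this is a cosmetic difference). The worry you flag at the end is real but is handled exactly as you suggest — the paper also implicitly relies on the formula from Lemma~\ref{lem:mt-core} being finitary in $T$, not merely in $\mathrm{Th}(\str M)$, which holds because the construction behind Theorem~\ref{thm:definable-forking} depends only on the ladder index of $\phi$, which is uniform across models of $T$.
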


\iffull
\begin{proof}
Define $\Delta_n$ to be the set of finitary formulas of size $\leq n$.
Assume towards contradiction that for all $n$, there is a model $\str M_n$ of $T$ such that for all $\phi' \in \Delta_n$, $\str M_n \not\models \exists \tup s. \psi'[R(\tup x) / \phi'(\tup x; \tup s)]$.
Then by compactness there is a model $\str M$ of $T$ such that for all finitary formulas, $\str M \not\models \exists \tup s. \psi'[R(\tup x) / \phi'(\tup x; \tup s)]$.
However, since $\str M$ is a model of $T$, our assumption gives us $\tup s \in \str M^{\tup y}$ such that $\str M \models \psi[R(\tup x) / \phi(\tup x; \tup s)]$, so we may apply Lemma~\ref{lem:mt-core}.

This gives an elementary extension $\str N$ of $\str M$, a tuple $\tup s' \in \str N^{\tup y}$, a finitary formula $\phi'(\tup x; \tup z)$ and a tuple $\tup r \in \str M^{\tup z}$ such that $\str N \models \psi[R(\tup x) / \phi(\tup x; \tup s')]$ and $\phi'(\str M^{\tup x};\tup r) = \phi(\str N^{\tup x}; \tup s') \cap \str M^{\tup x}$.
Stated differently, the interpretation $\Rr = \phi(\str N^{\tup x}; \tup s')$ of $R$ over $\str N$ satisfies $\str N \models \psi[R / \Rr]$ and $\Rr|_{\str M}(\tup x) = \phi(\str N^{\tup x}; \tup s') \cap \str M^{\tup x} = \phi'(\str M^{\tup x};\tup r)$.
Since $\psi$ induces $\psi'$ on semi-elementary substructures, we get that $\str M \models \psi'[R(\tup x) / \phi'(\tup x; \tup r)]$, a contradiction.

Therefore, there is $n$ such that for every model $\str M$ of $T$, there is $\phi' \in \Delta_n$ with $\str M \models \exists \tup s. \psi'[R(\tup x) / \phi'(\tup x; \tup s)]$.
Enumerate $\Delta_n=\{\phi'_0,\phi'_1,\dots,\phi'_{k-1}\}$ and take $\tup t=t_0t_1 \dots t_\ell$ to be a tuple of variables with $\ell = \lceil \log k \rceil$.
For each $i \in \{0,\dots, k-1\}$, define a formula
\[
    \enc_i(\tup t) = \bigwedge_{j=1}^{\ell} \begin{cases}
        t_j = t_0 \text{ if the } j \text{-th bit in }i \text{'s binary encoding is 0} \\
        t_j \neq t_0 \text{ otherwise},
    \end{cases}
\]
which declares that $\tup t$ encodes the number $i$.
Then let
\[
    \phi''(\tup x; \tup z \tup t) = \bigvee_{i=0}^{k-1} [\enc_i(\tup t) \wedge \phi'_i(\tup x; \tup z)].
\]
Note that $\phi''$ is finitary since each $\phi'_i$ is, and for any model $\str M$ of $T$,
\[
    \{\phi''(\str M^{\tup x}; \tup m) \colon \tup m \in \str M^{\tup z\tup t}\} \setminus \set{\emptyset}= \bigcup_i \ \{\phi_i'(\str M^{\tup x}; \tup m) \colon \tup m \in \str M^{\tup z}\}.
\]
Let us now assume without loss of generality that models of $T$ have $\geq 2$ elements (models of size $1$ can be dealt with by hardcoding the choice of $\phi'_i$ in $\phi''$ according to which relations hold).
We now show that $\phi''$ satisfies the conclusion of the Corollary: $\exists \tup s. \psi'[R(\tup x) / \phi''(\tup x; \tup s)] $ is entailed in all models of $T$.

Let $\str M$ be a model of $T$, and let $i \in \{0,\dots, k-1\}$ and $\tup s \in \str M^{\tup z}$ be such that $\str M \models \psi'[R(\tup x) / \phi'_i(\tup x; \tup s)]$.
Let $b_0$ and $b_1$ be any two different elements in $\str M$ and let $\tup b$ be the unique tuple in $\{b_0,b_1\}^{\tup t}$ such that $\str M \models \enc_i(\tup b)$.
Then $\str M \models \psi'[R(\tup x) / \phi''(\tup x; \tup s \tup b)]$ which yields the wanted result.
\end{proof}
\else The proof follows from \Cref{lem:mt-core} by applying compactness; we refer to the attached full version for details.\fi

\section{Canonization of graphs of bounded shrubdepth}\label{sec:shrubdepth}

In this section, we prove \Cref{thm:canonization,thm:gi} which we now recall for convenience.

\mainshrubdepth*
\gi*

\medskip

The proof is broken into three parts.
\begin{itemize}
    \item The first part combines insights about classes of bounded shrubdepth with our Finitary Substitute Lemma developed in the previous section, to conclude that the first level in a shrubdepth decomposition (which we will call a dicing, defined below) can be defined using finitary formulas.
    This result is stated as Theorem~\ref{thm:shrub-inductive} below.
    \item The second part builds on Theorem~\ref{thm:shrub-inductive} to propose a canonical transformation from classes of bounded shrubdepth to classes of bounded treedepth. This proves Theorem~\ref{thm:canonization}.
    \item In the third part, we show how~\Cref{thm:gi} is derived from Theorem~\ref{thm:canonization}, and also establish a stronger result about the canonization problem.
\end{itemize}
We start by recalling a few preliminaries about shrubdepth in Section~\ref{subsec:sd-prelim}, and proceed with the three parts outlined above in Sections~\ref{subsec:sd-definable},~\ref{subsec:sd-construction} and~\ref{subsec:sd-isomorphism}.

\subsection{Preliminaries on shrubdepth}\label{subsec:sd-prelim}

\subparagraph*{Shrubdepth.}
The decomposition notion underlying shrubdepth is that of {\em{connection models}}, defined as follows. Let $G$ be a graph. A {\em{connection model}} for $G$ consists of:
\begin{itemize}
 \item a finite set of labels $\Labels$;
 \item a labelling $\lbl\colon V(G)\to \Labels$;
 \item a rooted tree $T$ whose leaf set coincides with the vertex set of $G$; and
 \item for every non-leaf node $x$ of $T$, a symmetric relation $\Adj(x)\subseteq \Labels\times \Labels$, called the {\em{adjacency table}} at $x$.
\end{itemize}
\iffull
The rule is as follows: for every distinct vertices $u,v$ of $G$, 
$$
\textrm{$u$ and $v$ are adjacent in $G$}\qquad\textrm{if and only if}\qquad (\lbl(u),\lbl(v))\in \Adj(x),$$
where $x$ is the lowest common ancestor of $u$ and $v$ in $T$. 
\else 
The rule is as follows: for every distinct vertices $u,v$ of $G$, $u$ and $v$ are adjacent in $G$ if and only if $(\lbl(u),\lbl(v))\in \Adj(x)$,
where $x$ is the lowest common ancestor of $u$ and $v$ in $T$. 
\fi

The {\em{depth}} of a connection model is the depth of $T$. The {\em{shrubdepth}} of a graph class $\Cc$ is the least integer $d$ with the following property: there exists a finite set of labels $\Labels$ such that every graph $G\in \Cc$ has a connection model of depth at most $d$ that uses label set $\Labels$.

\subparagraph*{Dicings.}
Our inductive proof requires manipulating the first level (just below the root) of a connection model; we will call this a dicing.
Formally, for a graph $G$, a pair $(\Pp,\Ll)$ of partitions of the vertex set of $G$ is called a {\em{dicing}} of $G$ if for every pair of vertices $u,v$ belonging to different parts of $\Pp$, whether $u$ and $v$ are adjacent in $G$ depends only on the pair of parts of $\Ll$ that $u$ and $v$ belong to. In other words, there is a symmetric relation $Z\subseteq \Ll\times \Ll$ such that for all $u,v$ belonging to different parts of $\Pp$,
$$\textrm{$u$ and $v$ are adjacent in $G$}\qquad\textrm{if and only if}\qquad (\Ll(u),\Ll(v))\in Z,$$
where $\Ll(w)$ denotes the part of $\Ll$ to which $w$ belongs. In the context of a dicing $(\Pp,\Ll)$, partition $\Pp$ will be called the {\em{component partition}}, and partition $\Ll$ will be called the {\em{label partition}}. The {\em{order}} of a dicing $(\Pp,\Ll)$ is $|\Ll|$, the number of parts in the label partition.

\iffull
Note that if $(\Pp,\Ll)$ is a dicing of a graph $G$ of order $\ell$ and $H$ is an induced subgraph of~$G$, then restricting $\Pp$ and $\Ll$ to $H$ yields a dicing of order $\leq \ell$ for $H$.
Likewise, refining the label partition $\Ll$ preserves the property of being a dicing (though, of course, not the order).
This is not true in general for the component partition $\Pp$. 
\else \fi
Dicings appear naturally in connection models for shrubdepth: given a connection model for a graph $G$, using ``having a common ancestor below the root'' as component partition $\Pp$ and $\lbl$-classes as label partition $\Ll$ defines a dicing of $G$.

\iffull
\subparagraph*{Closure under transductions.}
An important result about classes of bounded shrubdepth, which is also used in our proof, is that they are closed under first-order transductions~\cite{GanianHNOM19}.

\begin{theorem}[Theorem 4.7 in~\cite{GanianHNOM19}]\label{thm:sd-closed-under-transductions}
If a class of graphs $\Cc$ has bounded shrubdepth and a class of graphs $\Dd$ can be tranduced from $\Cc$, then $\Dd$ has bounded shrubdepth as well.
\end{theorem}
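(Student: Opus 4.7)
The plan is to reduce this statement to a standard alternative characterization of bounded shrubdepth, and then invoke the fact that first-order transductions compose. The characterization, due to Ganian et al.~\cite{GanianHNOMR12} and implicit in the earlier discussion of connection models, says that a graph class $\Cc$ has bounded shrubdepth if and only if there is a class $\Tt$ of rooted labelled forests of bounded depth (encoded as relational structures with an ancestor/descendant relation and unary label predicates) such that $\Cc$ is FO-transducible from $\Tt$. The ``only if'' direction amounts to rephrasing the existence of a connection model of bounded depth as a $1$-dimensional FO-interpretation, while the ``if'' direction needs a short structural argument showing that any FO-transduct of a bounded-depth forest class admits a connection model of bounded depth.

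Given this characterization, closure under transductions becomes essentially formal. Fix $\Cc$ of bounded shrubdepth and let $\Tt, \Ts_1$ witness the characterization, so that $\Cc \subseteq \Ts_1(\Tt)$ with $\Tt$ of bounded treedepth. Assume $\Dd$ is transduced from $\Cc$ via some transduction $\Ts_2$, so $\Dd \subseteq \Ts_2(\Cc) \subseteq \Ts_2(\Ts_1(\Tt)) = (\Ts_2 \circ \Ts_1)(\Tt)$. If we can show that $\Ts_2 \circ \Ts_1$ is itself an FO-transduction, then $\Dd$ is FO-transducible from the bounded-treedepth class $\Tt$, and the reverse direction of the characterization yields that $\Dd$ has bounded shrubdepth.

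The step that I expect to require the most care is verifying this composition of FO-transductions. Concretely, writing $\Ts_i = (k_i, U_1^{(i)}, \dots, U_{\ell_i}^{(i)}, \Is_i)$, the composed transduction should use $k_1 \cdot k_2$ copies of the input, the disjoint union of the unary colours of $\Ts_1$ and $\Ts_2$, and an interpretation obtained by substituting the defining formulas of $\Is_1$ into the formulas of $\Is_2$. Two points deserve scrutiny: first, the copy-equality relation $S$ that $\Ts_2$ expects on its input must be FO-definable in the $k_1 k_2$-fold copy of the original structure after applying $\Is_1$ on each $k_1$-block; second, the guessing of the $U_j^{(2)}$ colours on the output of $\Ts_1$ must be simulated by guessing corresponding unary predicates directly on the input. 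Both points are routine, in the sense that transductions are well-known to compose, but keeping track of the dimensions, the copy-equality relations, and the substitution of formulas yields the bulk of the bookkeeping; apart from this and the characterization, there is nothing else to prove.
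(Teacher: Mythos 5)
The paper does not prove this statement: it is imported verbatim as Theorem~4.7 of~\cite{GanianHNOM19}, so there is no in-paper proof to compare against. Evaluating your argument on its own terms: the route you take — pass through the characterization ``bounded shrubdepth $\iff$ FO-transducible from a class of bounded-depth labelled rooted forests'' and then invoke closure of FO-transductions under composition — is indeed the standard proof, and is essentially how Ganian et al.\ argue it in the cited work. Both ingredients are genuine prerequisites, and you correctly flag where the real work sits: (i) the equivalence between having bounded-depth connection models and being a transduct of bounded-depth forests (the ``if'' direction is the direction that needs a short structural argument, as you say), and (ii) the bookkeeping in the composition of transductions (the $k_1k_2$-fold copying, simulating the copy-equality relation $S$ expected by the outer transduction, and guessing the outer colours directly on the input while ensuring they are constant across the $k_2$ blocks — typically done by guessing unary copy-markers $C_1,\dots,C_{k_1k_2}$ constrained so that each $S$-class meets each $C_i$ exactly once). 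Both facts are standard and your sketch is a faithful outline of the accepted argument; there is no gap beyond the deliberately-elided routine verifications.
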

\else \fi

\subsection{Definability of canonical dicings}\label{subsec:sd-definable}

We say that a formula $\varphi(\tup x;\tup y)$ with $|\tup x|=2$ {\em{defines a partition}} if for every graph $G$ and $\tup b\in G^{\tup y}$, $\varphi(G^{\tup x};\tup b)$ is an equivalence relation on the vertex set of $G$. (Note that for different choices of $\tup b$, $\varphi$ can yield different equivalence relations.) Abusing the notation, by $\varphi(G^{\tup x};\tup b)$ we will also denote the partition of the vertex set into the equivalence classes of $\varphi(G^{\tup x};\tup b)$.
Recall that a formula $\phi(\tup x; \tup y)$ is said to be finitary in (the theory of) a graph class $\Cc$ if there exists $k$ such that for all graph $G \in \Cc$,
\[
    |\{\phi(G^{\tup x}; \tup b)\,\colon\,\tup b \in G^{\tup y}\}| \leq k.
\]

This section is focused on establishing the following result.

\begin{theorem}\label{thm:shrub-inductive}
    Let $\Cc$ be a class of shrubdepth at most $d$, where $d>1$. Then there exists a hereditary class $\Cc'$ of shrubdepth at most $d-1$, finitary first-order formulas $\varphi(\tup x;\tup y)$ and $\lambda(\tup x;\tup y)$, each defining a partition, and $\ell\in \N$, such that the following holds: for every graph $G\in \Cc$ there exists $\tup s\in G^{\tup y}$ such that
    \begin{itemize}
     \item $(\varphi(G^{\tup x};\tup s),\lambda(G^{\tup x};\tup s))$ is a dicing of $G$ of order at most $\ell$; and
     \item for every part $A$ of $\varphi(G^{\tup x};\tup s)$, we have $G[A]\in \Cc'$.
    \end{itemize}
\end{theorem}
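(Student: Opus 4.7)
The plan is to marry the combinatorics of connection models with the Finitary Substitute Lemma. The argument proceeds in three stages.

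\textbf{Existence of a suitable dicing.} Every $G\in\Cc$ admits a connection model of depth $d$ using a fixed finite label set $\Labels$. Reading off the root level --- taking $\Pp_0$ to be the partition of $V(G)$ into subtrees rooted at children of the root, and $\Ll_0$ the partition into $\lbl$-classes --- yields a dicing $(\Pp_0,\Ll_0)$ of order at most $|\Labels|$. Each $\Pp_0$-class $A$ inherits a connection model of depth $d-1$, so $G[A]$ belongs to a class $\Cc'$ of shrubdepth at most $d-1$; passing to the hereditary closure preserves this bound. By the forbidden-induced-subgraph characterization of bounded-shrubdepth classes \cite{GanianHNOMR12}, $\Cc'$ is captured by a single first-order sentence $\chi$.

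\textbf{Parameterised definability.} The next step is to exhibit a pair of stable formulas $\phi_\Pp(x_1,x_2;\tup y)$ and $\phi_\Ll(x_1,x_2;\tup y)$ of fixed arity such that for every $G\in\Cc$ some tuple $\tup s\in G^{\tup y}$ makes $(\phi_\Pp(G^{\tup x};\tup s),\phi_\Ll(G^{\tup x};\tup s))$ equal to a valid choice of $(\Pp_0,\Ll_0)$. This is the combinatorial heart of the argument, and exploits the rigidity of connection models: since the label partition has bounded size $|\Labels|$, a bounded number of well-chosen representatives suffice, and adjacency patterns toward such a parameter set recover each vertex's label (across distinct components, edges depend only on the pair of labels), and together with the induced refinement also recover the component partition.

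\textbf{Finitariness via Lemma~\ref{lem:finitary-substitute}.} Let $\psi(R_1,R_2)$ be the sentence over $\Sigma\cup\{R_1,R_2\}$ asserting that $R_1,R_2$ are equivalence relations, $R_2$ has at most $|\Labels|$ classes, $(R_1,R_2)$ is a dicing of the graph, and every $R_1$-class satisfies $\chi$. Each conjunct is first-order and preserved under taking induced substructures, so $\psi$ is hereditary and hence induces itself on semi-elementary substructures. Classes of bounded shrubdepth are stable \cite{GanianHNOMR12}, so the formulas from the previous stage are stable in the theory of $\Cc$. Packaging the two relations into one (for instance by letting an additional boolean parameter in $\tup y$ select which of $R_1$ and $R_2$ is being queried) and applying Lemma~\ref{lem:finitary-substitute} produces the desired finitary formulas $\varphi$ and $\lambda$.

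The principal obstacle is the second stage: producing a \emph{single} formula of fixed arity that yields a valid dicing across every graph in $\Cc$. Once this is secured, the heredity of $\psi$ together with stability of $\Cc$ let the Finitary Substitute Lemma perform the required abstraction from ``definable with parameters'' to ``finitary''.
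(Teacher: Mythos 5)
Your overall architecture --- extract a dicing from the first level of the connection model, show it is definable with parameters by a stable formula, then invoke the Finitary Substitute Lemma against a hereditary sentence --- is exactly the architecture the paper uses. However, your second stage, which you correctly identify as ``the combinatorial heart of the argument,'' is not actually carried out, and the sketch you give for it does not work as described.

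Two concrete problems. First, you assert that ``adjacency patterns toward such a parameter set recover each vertex's label,'' but the label partition $\Ll_0$ read off from the connection model is an \emph{arbitrary} bounded-size partition: there is no reason it should coincide with (or even be refined by) the partition into neighbourhood types over any small vertex set. What the paper actually does is invoke a nontrivial VC-dimension argument (Lemma~\ref{thm:incremental}, citing \cite{boundedLocalCliquewidth}) to find a \emph{different} label partition $\Ll_S$ of neighbourhood types over a set $S$ of size $\Oh(|\Labels|^2)$ that still yields a dicing when paired with $\Pp_0$. This changes the order bound from $|\Labels|$ (as in your $\psi$) to $2^{\Oh(|\Labels|^2)}$. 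Second, and more seriously, you claim that the component partition is ``recovered'' from the label information ``together with the induced refinement,'' but the component partition $\Pp_0$ is simply not first-order definable in general: the dicing condition constrains adjacency only \emph{across} $\Pp_0$-parts and says nothing about adjacency within parts, so two vertices in the same $\Pp_0$-part cannot be distinguished by their neighbourhood data from two vertices in different $\Pp_0$-parts. The paper sidesteps this by defining not $\Pp_0$ itself but a \emph{refinement} $\Pp'$: flip $G$ according to the dicing relation $Z$ to obtain a graph $H$ whose edges all lie within $\Pp_0$-parts, then take the connected components of $H$. First-order definability of $\Pp'$ requires a further structural fact: $H$ still has bounded shrubdepth (classes of bounded shrubdepth are closed under transductions), hence excludes long induced paths, hence its components have bounded diameter, so ``same component of $H$'' is expressible by a bounded-radius reachability formula. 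None of this machinery appears in your sketch, and without it the parameterised definability you need before you can apply Lemma~\ref{lem:finitary-substitute} is unestablished.

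Your stage~3 is essentially correct modulo the wrong order bound (which would be fixed once stage~2 is repaired), and your packaging of the two equivalence relations into one formula before applying the Finitary Substitute Lemma is a reasonable variant of the paper's product construction. But as written, the proposal does not prove the theorem: it defers the core difficulty to a step that is asserted rather than demonstrated.
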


On a high level, this proves that connection models can be defined using first-order formulas $\phi(\tup x; \tup y),\lambda(\tup x; \tup y)$ and parameters $\tup s \in G^{\tup y}$.
While a good start towards sparsification, this alone would be insufficient for our needs, as different choices of $\tup s$ may lead to many different connection models, and choosing an arbitrary $\tup s$ would not give an isomorphism-invariant construction.
This difficulty is overcome by the finitariness of $\phi$ and~$\lambda$: our construction will take into account all of the (boundedly many) possible dicings (see Section~\ref{subsec:sd-construction}).

The proof of Theorem~\ref{thm:shrub-inductive} is broken into three parts as follows.
\begin{itemize}
    \item The first part consists of proving that the label partition $\Ll$ can be chosen to be definable as a partition $\lambda(G^{\tup x}; \tup s)$ into $\tup s$-types.
    This is achieved thanks to a more general result of Bonnet et al.~\cite{boundedLocalCliquewidth} pertaining to classes of bounded VC-dimension.
    \item We then show that the component partition $\Pp$ can be chosen to be definable by a formula $\phi(G^{\tup x}; \tup s)$ using the same parameters $\tup s$.
    This relies on known properties of classes of bounded shrubdepth~\cite{GanianHNOM19}\iffull, including Theorem~\ref{thm:sd-closed-under-transductions}.\else. \fi
    \item We then apply our Finitary Substitute Lemma (\Cref{lem:finitary-substitute}) and prove that $\phi$ and $\lambda$ can be taken to be finitary.
\end{itemize}

\subparagraph*{Definability of the label partition.}
For a subset of vertices $S$ of a graph $G$ we let $\Ll_S$ denote the partition of the vertex set of $G$ into neighborhood classes with respect to $S$: $u$ and $v$ belong to the same part of $\Ll_S$ if and only if 
\[
    \{w\in S~|~u\textrm{ and }w\textrm{ are adjacent}\}=\{w\in S~|~v\textrm{ and }w\textrm{ are adjacent}\}.
\]
Note that we have $|\Ll_S|\leq 2^{|S|}$.
It turns out that label partitions can be taken of this form.

\begin{restatable}[follows from Theorem 3.5 of~\cite{boundedLocalCliquewidth}]{lemma}{labelpartition}\label{thm:incremental}
 Let $\Cc$ be graph class of bounded shrubdepth. Then for every graph $G\in \Cc$ and dicing $(\Pp,\Ll)$ of $G$ of order at most $t$, there exists $S\subseteq V(G)$ with $|S|\leq \Oh(t^2)$ such that $(\Pp,\Ll_S)$ is also a dicing of $G$.
\end{restatable}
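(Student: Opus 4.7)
The plan is to derive the lemma from Theorem~3.5 of~\cite{boundedLocalCliquewidth}, which is a general ``small definable witness'' result for graph classes of bounded VC-dimension. Since classes of bounded shrubdepth are stable (and hence have bounded VC-dimension of the adjacency relation), that theorem is applicable in our setting.

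The first step is to reduce the conclusion to a cleaner form: it suffices to find $S \subseteq V(G)$ with $|S| \leq \Oh(t^2)$ such that $\Ll_S$ \emph{refines} $\Ll$, in the sense that $\Ll_S(u) = \Ll_S(v)$ implies $\Ll(u) = \Ll(v)$. Indeed, if this holds then the adjacency relation $Z \subseteq \Ll \times \Ll$ associated with the original dicing lifts via the quotient map $\Ll_S \to \Ll$ to a relation $Z' \subseteq \Ll_S \times \Ll_S$ witnessing that $(\Pp, \Ll_S)$ is a dicing: for any $u, v$ in different $\Pp$-classes, $u \sim v$ is determined by $(\Ll(u), \Ll(v))$ and therefore by $(\Ll_S(u), \Ll_S(v))$. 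A minor technicality is that two labels in $\Ll$ with identical external adjacency patterns need not be separated by $S$; we may pre-process $\Ll$ by merging such labels before applying the lemma, which only decreases the order of the dicing.

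The key step is then to find a small $S$ that separates all externally distinguishable label classes across \emph{all} $\Pp$-classes simultaneously. For each pair of labels $\ell_1 \neq \ell_2$ there is a ``distinguisher'' vertex $s$ in some $\Pp$-class that is adjacent to vertices of one label but not the other. A naive approach collecting distinguishers for every pair of labels and every $\Pp$-class would yield a set of size $\Omega(t^2 \cdot |\Pp|)$, far exceeding the target $\Oh(t^2)$. This is where Theorem~3.5 of~\cite{boundedLocalCliquewidth} supplies the essential input: bounded VC-dimension ensures, via a Sauer--Shelah / $\varepsilon$-net style argument, that a single set $S$ of size $\Oh(t^2)$ can distinguish all label pairs uniformly across all components of $\Pp$.

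The principal obstacle, and essentially the content of \cite[Theorem~3.5]{boundedLocalCliquewidth}, is precisely this uniformity: producing a single small set $S$ that works simultaneously for every component of $\Pp$, rather than a different one for each. Once that theorem has been invoked, the remaining verification that the resulting $S$ makes $(\Pp, \Ll_S)$ a dicing is a direct consequence of the refinement property set up in the first step.
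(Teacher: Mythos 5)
Your high-level intuition is right — bounded shrubdepth gives bounded VC-dimension, and Theorem~3.5 of~\cite{boundedLocalCliquewidth} is what supplies a single small $S$ that works uniformly over all components of $\Pp$. But there are two issues in the way you actually wire the theorem in.

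First, you reformulate the goal as finding $S$ such that $\Ll_S$ \emph{refines} a pre-processed $\Ll$, and then derive the dicing property from that refinement. This is an unnecessary detour and it is not what Theorem~3.5 gives. The theorem's conclusion is directly the statement that for all ``distant'' pairs $u,v$ (in the right instantiation: pairs lying in different $\Pp$-parts), $E(u,v)$ depends only on $E(u,S)$ and $E(S,v)$, i.e.\ only on $(\Ll_S(u),\Ll_S(v))$. That \emph{is} the definition of $(\Pp,\Ll_S)$ being a dicing. There is no guarantee, and no need, that $\Ll_S$ is finer than $\Ll$; the two label partitions just happen to both determine the cross-$\Pp$ edges. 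Your refinement argument would also require, for every pair of non-externally-equivalent labels $\ell_u\ne\ell_v$, a distinguishing vertex that lies outside \emph{both} relevant $\Pp$-parts, which is not guaranteed to exist in general.

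Second, and more substantively, you never explain how to cast the dicing problem as an instance of Theorem~3.5's hypotheses, which are stated in terms of a pseudometric $\dist$ with a radius parameter $r$. The paper's actual application is driven by the following observation: define $\dist(u,v)=0$ if $u,v$ lie in the same $\Pp$-part and $\dist(u,v)=+\infty$ otherwise; this is a pseudometric, and with it the hypothesis ``$E(u,v)$ depends only on $\Ll$ whenever $\dist(u,v)>r$'' is exactly the dicing property of $(\Pp,\Ll)$, while the conclusion ``$E(u,v)$ depends only on $E(u,S)$, $E(S,v)$ whenever $\dist(u,v)>5r$'' is exactly the dicing property of $(\Pp,\Ll_S)$ (since $5r$ and $r$ both just mean ``different $\Pp$-parts''). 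The duality-of-order-$k$ hypothesis is what bounded VC-dimension (from stability, from bounded shrubdepth) supplies, giving $|S|=\Oh(kt^2)=\Oh(t^2)$. Without this $\{0,+\infty\}$-valued pseudometric trick, ``invoke Theorem 3.5'' is a black box that your sketch doesn't actually open; with it, the lemma falls out immediately and your first-step reduction becomes superfluous.
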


\iffull
We prove Lemma~\ref{thm:incremental} in Appendix~\ref{app:label-partition}.
\else \fi

\subparagraph*{Definability of the component partition.}
We now show that the component partition $\Pp$ can also be defined using a first-order formula.

\begin{lemma}\label{lem:def-dicing}
    Let $\Cc$ be a graph class of bounded shrubdepth and $t\in \N$ be an integer.
    There exist formulas $\varphi(\tup x;\tup y)$ and $\lambda(\tup x;\tup y)$, both defining a partition, such that the following holds: for every graph $G\in \Cc$ and dicing $(\Pp,\Ll)$ of $G$ of order at most $t$, there exists $\tup s\in G^{\tup y}$ such that
    $$(\Pp',\Ll')=\left(\varphi(G^{\tup x};\tup s),\lambda(G^{\tup x};\tup s)\right)$$
    is a dicing of $G$ of order at most $2^{\Oh(t^2)}$. Further, every part of $\Pp'$ is entirely contained in some part of~$\Pp$.
\end{lemma}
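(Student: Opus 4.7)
The plan is to first apply Lemma~\ref{thm:incremental} to obtain a set $S\subseteq V(G)$ with $|S|=\Oh(t^2)$ for which $(\Pp,\Ll_S)$ is still a dicing of $G$. The tuple $\tup s$ will include $S$, and we define $\lambda(x_1,x_2;\tup s)$ to express ``$x_1$ and $x_2$ have the same neighborhood in $S$''; then $\lambda(G^{\tup x};\tup s)=\Ll_S$, which has order at most $2^{|S|}=2^{\Oh(t^2)}$, as required. It remains to produce a partition $\Pp'$ that refines $\Pp$ and for which $(\Pp',\Ll_S)$ is again a dicing of $G$.

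For this step, let $Z\subseteq \Ll_S\times \Ll_S$ be the symmetric relation describing the between-part adjacency of the dicing $(\Pp,\Ll_S)$, and consider the auxiliary graph $H$ on $V(G)$ in which $uv$ is an edge precisely when the actual adjacency of $u,v$ in $G$ \emph{disagrees} with the prediction of $Z$, i.e.\ when ``$uv\in E(G)$'' differs from ``$(\Ll_S(u),\Ll_S(v))\in Z$''. Since $(\Pp,\Ll_S)$ is a dicing, any two vertices lying in different parts of $\Pp$ are adjacent in $G$ exactly as $Z$ predicts, so every edge of $H$ lies within a single part of $\Pp$. Taking $\Pp'$ to be the partition of $V(G)$ into connected components of $H$ therefore yields a refinement of $\Pp$; moreover any two vertices in distinct parts of $\Pp'$ also lie in distinct connected components of $H$, hence their adjacency in $G$ must agree with $Z$. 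We conclude that $(\Pp',\Ll_S)$ is a dicing.

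To realise $\Pp'$ as $\varphi(G^{\tup x};\tup s)$ for a fixed formula $\varphi$, I expand $\tup s$ to also encode $Z$. Since $|\Ll_S|\leq 2^{\Oh(t^2)}$, the symmetric relation $Z$ is determined by at most $2^{\Oh(t^2)}$ bits, and each such bit can be stored by appending to $\tup s$ two fresh distinguished vertices $v_0,v_1$ together with one bit-parameter in $\{v_0,v_1\}$ per unordered pair of label classes. With this encoding, ``$uv$ is an edge of $H$'' becomes first-order expressible from $\tup s$. The structure $G$ enriched with these parameters is a first-order transduction of $G$, so by Theorem~\ref{thm:sd-closed-under-transductions} the resulting class of enriched structures, and hence also the class of associated auxiliary graphs $H$, has bounded shrubdepth. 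Using the standard fact that connected graphs from a class of bounded shrubdepth have uniformly bounded diameter, there exists a constant $D$ depending only on $\Cc$ and $t$ such that any two vertices in the same connected component of $H$ lie at $H$-distance at most $D$. We therefore let $\varphi(x_1,x_2;\tup s)$ assert that $x_1$ and $x_2$ lie at $H$-distance at most $D$, falling back to equality in the (first-order definable) case where this relation fails to be transitive, so that $\varphi$ always defines a partition.

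The main technical obstacle, as I see it, is not in the logical manipulation but in marshalling two standard facts about bounded shrubdepth: its closure under first-order transductions (exactly Theorem~\ref{thm:sd-closed-under-transductions}) and the uniform diameter bound for connected graphs in such classes. Together these turn the otherwise non-first-order notion of ``being in the same connected component of $H$'' into a bounded-distance predicate, which is first-order expressible from the parameters. Once these ingredients are assembled, the encoding of $Z$ into $\tup s$ and the verification that $(\Pp',\Ll_S)$ is a dicing become routine.
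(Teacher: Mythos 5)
Your proof follows the same overall strategy as the paper's: apply Lemma~\ref{thm:incremental} to obtain a small set $S$ so that $(\Pp,\Ll_S)$ is still a dicing, flip $G$ according to the associated symmetric relation $Z$ to obtain an auxiliary graph $H$ whose edges all lie inside parts of $\Pp$, take the connected components of $H$ as the new component partition $\Pp'$, verify directly that $(\Pp',\Ll_S)$ is a dicing, bound the diameter of components of $H$ via transduction-closure of bounded shrubdepth and the fact that such classes exclude long induced paths, and then express ``same component of $H$'' as a bounded-distance first-order predicate (with a fallback to ensure $\varphi$ always defines a partition). These are exactly the steps in the paper's Claims~\ref{cl:comps-cont}--\ref{cl:H-conn-in-FO}.

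The one place where you are more careful than the paper's exposition is in making the definition of $H$ genuinely first-order in the parameters. The paper's Claim~\ref{cl:H-conn-in-FO} takes $\tup y$ to be a tuple of $c$ variables enumerating only $S$, and asserts that the formula ``$u,v$ in the same component of $H$'' depends only on $\Cc$; but $H$ depends on the flip relation $Z$, which is not in general recoverable from $G$ and $S$ alone (the same $S$ can underlie several dicings with different $\Pp$'s and hence different $Z$'s, yielding different graphs $H$ and different component partitions). Your fix --- lengthening $\tup s$ by a constant number of bit-parameters taken from $\{v_0,v_1\}$ to encode $Z$ explicitly, with a canonical enumeration of unordered pairs of $\Ll_S$-classes via $S$-neighborhood patterns --- makes the formula honestly depend only on $\Cc$ and $t$, with $|\tup y| = 2^{\Oh(t^2)}$ still a constant. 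This is a genuine tightening of a point the paper glosses over. Two small remarks: the enrichment by constants is not literally a transduction, so when invoking \Cref{thm:sd-closed-under-transductions} it is cleaner to transduce the class of graphs $H$ directly (guess a bounded partition and one of finitely many flip relations, as in the paper's $\widehat{\Cc}$); and the degenerate case $|G|=1$, where the two distinguished vertices $v_0 \neq v_1$ cannot be chosen, needs to be excluded separately --- but that case is trivial.
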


\iffull
   \begin{proof}
    Let $G \in \Cc$ and fix a dicing $(\Pp,\Ll)$ of $G$ order at most $t$.
    By \Cref{thm:incremental}, there exists a vertex subset $S$ with $|S|\leq \Oh(t^2)$ such that $(\Pp,\Ll_S)$ is also a dicing of $G$. In other words, there is a symmetric relation $Z\subseteq \Ll_S\times \Ll_S$ such that for vertices $u,v$ belonging to different parts of $\Pp$, we have
   \begin{equation}\label{eq:wydra}
   \textrm{$u$ and $v$ are adjacent in $G$}\qquad\textrm{if and only if}\qquad (\Ll_S(u),\Ll_S(v))\in Z.
   \end{equation}
   Since $|S|\leq \Oh(t^2)$, there exists a constant $c\in \Oh(t^2)$, depending on $\Cc$ and $t$ but not on the choice of $G \in \Cc$, such that $|S|\leq c$.
   Consequently, $|\Ll_S|\leq 2^c$.
   
   Let $H$ be the graph obtained from $G$ by {\em{flipping}} according to $Z$: the vertex sets of $H$ and $G$ coincide, and for distinct $u,v\in V(G)$ we have
   $$\textrm{$u,v$ are adjacent in $H$}\quad\textrm{if and only if}\quad \textrm{($u,v$ are adjacent in $G$)}\ \mathsf{xor}\ \textrm{(}(\Ll_S(u),\Ll_S(v))\in Z\textrm{)}.$$
   Note that by \eqref{eq:wydra}, we have the following.
   
   \begin{claim}\label{cl:comps-cont}
    Every edge of $H$ connects two vertices that are in the same part of $\Pp$. Therefore the vertex set of every connected component of $H$ is entirely contained in a single part of $\Pp$.
   \end{claim}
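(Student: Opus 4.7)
The plan is to prove both sentences of the claim by a direct unfolding of the definition of $H$ combined with the dicing property already recorded as equation \eqref{eq:wydra}. The graph $H$ was constructed precisely to cancel out exactly those adjacencies in $G$ that are forced by the dicing between different $\Pp$-parts, so intuitively no cross-$\Pp$-part edge can survive in $H$, and this intuition should translate into a one-line argument.

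First I would fix an arbitrary edge $uv$ of $H$ and argue by contradiction that $u$ and $v$ must lie in the same part of $\Pp$. Suppose $\Pp(u)\neq \Pp(v)$. Then equation \eqref{eq:wydra} applies and yields the equivalence between the propositions ``$u$ and $v$ are adjacent in $G$'' and ``$(\Ll_S(u),\Ll_S(v))\in Z$''. The defining rule of $H$ declares $uv$ to be an edge of $H$ precisely when these two propositions disagree, so they can neither both hold nor both fail --- contradicting the equivalence just noted. Hence every edge of $H$ stays inside a single part of $\Pp$, which is the first sentence.

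The second sentence then follows by transitivity. Any two vertices in the same connected component of $H$ are joined by a walk in $H$; by the first sentence each edge of this walk keeps its two endpoints inside a common $\Pp$-part. Since $\Pp$ is a partition, the equivalence relation ``lying in the same part of $\Pp$'' contains the edge relation of $H$ and is therefore a superset of the reachability relation in $H$. Consequently the vertex set of each connected component of $H$ sits entirely inside one part of $\Pp$.

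I do not anticipate any real obstacle here: the claim is essentially a sanity check for the construction of $H$. The only bookkeeping point to be careful about is aligning the XOR in the definition of $H$ with the biconditional of \eqref{eq:wydra}, but once the two propositions are named this is immediate.
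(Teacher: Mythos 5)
Your proof is correct and follows exactly the route the paper intends: the paper simply asserts the claim as an immediate consequence of the dicing equivalence \eqref{eq:wydra}, and your unfolding of the XOR in the definition of $H$ against that biconditional, followed by the standard walk argument for components, is precisely that reasoning spelled out.
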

   
   Moreover, let us now prove the following.
   
   \begin{claim}
    There exists a constant $d\in \N$ depending only on $\Cc$ such that every connected component of $H$ has diameter at most $d$.
   \end{claim}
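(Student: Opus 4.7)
The plan is to realise $H$ as the output of a fixed first-order transduction applied to $G$, and then invoke the closure of bounded shrubdepth under transductions (Theorem~\ref{thm:sd-closed-under-transductions}) together with the fact that connected graphs of bounded shrubdepth have bounded diameter. All data needed to reconstruct $H$ from $G$ consists of the set $S$ (with $|S|\le c$ for some $c\in\Oh(t^2)$ depending only on $\Cc$ and $t$) together with the symmetric relation $Z\subseteq \Ll_S\times \Ll_S$. I would encode $S$ as a unary predicate marking its elements, and specify the choice of $Z$ by finitely many additional unary colourings of~$S$ (there are only boundedly many candidate~$Z$'s, since $|\Ll_S|\le 2^c$). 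The edge relation of $H$ is then definable in the coloured graph by the formula asserting $x\ne y$ and that adjacency in $G$ is flipped precisely when the pair of $\Ll_S$-classes of $x$ and $y$ lies in~$Z$.

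Next, by Theorem~\ref{thm:sd-closed-under-transductions} the class $\Dd$ of all graphs $H$ produced this way has bounded shrubdepth, with the bound depending only on $\Cc$ and~$t$. Every connected component of $H$ is an induced subgraph of $H$ and therefore still lies in a class of bounded shrubdepth. Finally, I would appeal to the standard fact that for every $\delta$ there is a number $D(\delta)$ such that every connected graph of shrubdepth at most $\delta$ has diameter at most $D(\delta)$; applying this to $\Dd$ yields the required constant~$d$.

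The main substantive ingredient is the last one, the diameter bound for connected graphs of bounded shrubdepth. It can be proved directly by induction on the depth of a connection model together with the size of the label set, using that the adjacency table at an internal node forces a bounded-distance path in the corresponding induced subgraph between any two of its descendants whenever that subgraph is connected. Alternatively, one can appeal to the characterisation of bounded shrubdepth as exactly those classes that do not transduce the class of all paths, from which the diameter bound for connected graphs follows immediately. Either route is short and decouples cleanly from the rest of the argument.
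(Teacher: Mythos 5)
Your argument is correct and follows essentially the same route as the paper: construct a fixed transduction producing $H$ from $G$ by marking $S$ and encoding $Z$ with colours, invoke closure of bounded shrubdepth under transductions (Theorem~\ref{thm:sd-closed-under-transductions}), and then bound the diameter via the fact (from~\cite{GanianHNOM19}) that every class of bounded shrubdepth excludes some path as an induced subgraph, so shortest paths --- and hence component diameters --- are bounded. The only cosmetic difference is that you spell out the transduction and phrase the final step as a "bounded diameter" fact rather than directly as induced-path exclusion, but these are the same argument.
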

   \begin{claimproof}
    Let $\widehat{\Cc}$ be the class of graphs that can be obtained as follows.
    Take any $J\in \Cc$ and any partition $\Ll$ of the vertex set of $J$ into at most $2^c$ parts. Then for any symmetric $W\subseteq \Ll\times \Ll$, apply the flipping according to $W$ (as described above) to $J$ and include the obtained graph~in~$\widehat{\Cc}$.
    
    Clearly, we have $H\in \widehat{\Cc}$. Moreover, $\widehat{\Cc}$ can be  transduced from $\Cc$, and thus Theorem~\ref{thm:sd-closed-under-transductions} concludes that $\widehat{\Cc}$ also has bounded shrubdepth.
    As proved in~\cite{GanianHNOM19}, every class of bounded shrubdepth excludes some path as an induced subgraph. Since shortest paths are induced, it follows that the diameter of every connected component of $H$ cannot be larger than the length of the path excluded in $\widehat{\Cc}$.
   \end{claimproof}
   
   Let $\tup s\in G^{\tup y}$ be an arbitrary enumeration of $S$ as a tuple, where $\tup y$ is a tuple of $c$ variables; in case $|S|<c$, some of the elements of $\tup s$ are repeated.
   Since reachability by a path of length at most $d$ can be defined using a first-order formula, and the edge relation in $H$ can be interpreted from the edge relation in $G$ and the set $S$, we conclude the following.
   
   \begin{claim}\label{cl:H-conn-in-FO}
        There is a first-order formula $\varphi(x,x';\tup y)$ depending only on $\Cc$ and where $\tup y$ is a tuple of $t$ variables, such that for $u,v\in V(G)$, we have
        \[
            G\models \varphi(u,v;\tup s) \quad \textrm{if and only if}\quad u\textrm{ and }v\textrm{ belong to the same connected component of }H.
        \]
   \end{claim}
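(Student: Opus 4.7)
The plan is to combine two ingredients: a first-order formula $\eta(x,x';\tup y)$ expressing the edge relation of $H$ in $G$ from suitable parameters, and the standard first-order formula for bounded-distance reachability applied with the diameter bound $d$ furnished by the preceding claim. The second claim guarantees that bounded-distance reachability captures exactly connectivity in $H$, so once the edges of $H$ are first-order definable we are done. For the edge relation, recall that $H$ arises from $G$ by flipping exactly those edges $\{u,v\}$ with $(\Ll_S(u),\Ll_S(v))\in Z$. The class $\Ll_S(u)$ is fully determined by the adjacency pattern of $u$ to $S$, so for each $L\subseteq \{1,\dots,|\tup s|\}$ the predicate ``$u$ lies in the $\Ll_S$-class with neighborhood pattern $L$ in $\tup s$'' is the Boolean combination $\bigwedge_{i\in L}E(u,s_i)\wedge\bigwedge_{i\notin L}\neg E(u,s_i)$, so $\Ll_S$ itself is first-order definable from $\tup s$.

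To incorporate the symmetric relation $Z\subseteq \Ll_S\times \Ll_S$, I would enlarge the parameter tuple $\tup y$ beyond $\tup s$ by finitely many slots encoding $Z$: for instance, by adjoining two distinguished markers $p_0\neq p_1$ together with one selector per pair of $\Ll_S$-classes that equals $p_0$ iff the pair lies in $Z$; equivalently, one may take a disjunction over all (boundedly many) possible $Z$'s and use the selectors simply to choose which disjunct is active. Either option yields a first-order formula $\eta(x,x';\tup y)$ such that, for the enlarged parameter tuple (still denoted $\tup s$), $G\models \eta(u,v;\tup s)$ iff $u$ and $v$ are adjacent in $H$.

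For reachability, the preceding claim gives that every connected component of $H$ has diameter at most a constant $d$ depending only on $\Cc$, so $u$ and $v$ lie in the same connected component of $H$ iff there is a walk of length at most $d$ between them using $\eta$-edges. This is captured by the standard formula
\[
    \varphi(x,x';\tup y) \;\equiv\; \bigvee_{k=0}^{d}\ \exists z_0\cdots z_k\ \Bigl(z_0=x\ \land\ z_k=x'\ \land\ \bigwedge_{i<k}\eta(z_i,z_{i+1};\tup y)\Bigr),
\]
which depends only on $\Cc$ via the constants $c$ and $d$, completing the proof.

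The main, and quite mild, obstacle is bookkeeping around the length of the parameter tuple: our enriched $\tup y$ accommodates an enumeration of $S$ plus an encoding of $Z$, inflating its length beyond the literal ``$t$ variables'' phrased in the claim to a constant depending only on $\Cc$ and $t$. This is harmless for the enclosing Lemma~\ref{lem:def-dicing}, since the bound $2^{\Oh(t^2)}$ on the order of the resulting dicing already absorbs such a constant.
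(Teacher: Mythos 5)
Your proof is correct and follows the same high-level route as the paper: define the edges of $H$ in first-order logic from parameters, then use the bounded-diameter bound from the preceding claim to express $H$-connectivity as bounded-distance reachability. The argument for recovering $\Ll_S(u)$ from the adjacency pattern of $u$ to the entries of $\tup s$ is exactly right, and the final disjunction over walk lengths $0,\dots,d$ is the standard bounded-reachability formula.

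Where you go further than the paper is the explicit handling of the flip relation $Z$. The paper asserts that ``the edge relation in $H$ can be interpreted from the edge relation in $G$ and the set $S$,'' but this is not literally true: $Z$ is the adjacency table witnessing that $(\Pp,\Ll_S)$ is a dicing, so it depends on $\Pp$, which is not recoverable from $G$ and $S$ alone. (For instance, with $S=\emptyset$ and $G$ two disjoint triangles, the ``correct'' $Z$ for $\Pp=\{\text{triangles}\}$ is empty, yielding two components, while the other choice of $Z$ yields a connected $H$ spanning both $\Pp$-parts.) Your fix --- enlarging $\tup y$ with a constant number of slots encoding $Z$, either via two markers and selectors, or via a disjunction over the finitely many possible $Z$'s --- is exactly what is needed to make the claim precise, and you correctly observe that this harmlessly increases $|\tup y|$ from $c$ (the paper writes $t$, apparently a typo) to a larger constant depending only on $\Cc$ and $t$. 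Since the statement of Lemma~\ref{lem:def-dicing} places no constraint on $|\tup y|$, this does not affect the enclosing argument. In short: same core approach, but you supply a small repair that the paper skips over.

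One cosmetic point: the marker-pair encoding of $Z$ requires the model to have at least two elements; this is the same harmless technicality that appears in the paper's own proof of the Finitary Substitute Lemma, and can be handled identically (hard-coding the one-element case).
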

   
   Let $\lambda(x,x';\tup y)$ be the formula stating that $x$ and $x'$ have the same neighbors among the entries of $\tup y$. 
   Then we claim that formula $\varphi(x,x';\tup y)$ provided by Claim~\ref{cl:H-conn-in-FO} together with $\lambda(x,x';\tup y)$ satisfy all the required properties.
   First, $\lambda$ clearly defines a partition, and we may assume that so does $\varphi$. (This can be done by checking, within $\varphi$, whether the defined relation is an equivalence relation, and if this is not the case, yielding any equivalence relation instead, for instance the complete one.) Second, observe that $\varphi(G^{\tup x};\tup s)$ is the partition of the vertex set of $G$ into the connected components of $H$, while $\lambda(G^{\tup x};\tup s)$ is $\Ll_S$. Then:
   \begin{itemize}
        \item as in $H$ there are no edges between different connected components of $H$,  $(\varphi(G^{\tup x};\tup s),\Ll_S)$ is a dicing of $G$; and
        \item by Claim~\ref{cl:comps-cont}, every part of $\varphi(G^{\tup x};\tup s)$ is entirely contained in a single part of $\Pp$.
   \end{itemize}
   Finally, $|\Ll_S|\leq 2^c\leq 2^{\Oh(t^2)}$. This concludes the proof.
\end{proof}
\else 
\begin{proof}[Proof sketch.]
    By \Cref{thm:incremental}, there exists a vertex subset $S$ with $|S|\leq \Oh(t^2)$ such that $(\Pp,\Ll_S)$ is also a dicing of $G$, with relation $Z \subseteq \Ll_S \times \Ll_S$.
    We let $H$ denote the graph obtained by ``flipping according to the dicing $(\Pp,\Ll_S)$'', meaning that we exchange edges and non-edges between pairs of parts in $\Ll_S$ that belong to $Z$.
    Since $(\Pp,\Ll_S)$ is a dicing, connected components of $H$ are contained in single parts of $\Pp$; let $\Pp'$ denote the partition of $V(G)=V(H)$ into connected components in $H$.
    Since $H$ can be transduced from $G$, it has bounded shrubdepth, and thus we get that each part of $\Pp'$ have diameter bounded by a constant; this is because every class of bounded shrubdepth does not admit arbitrarily long induced paths~\cite{GanianHNOM19}.
    We deduce that there is a formula expressing that two vertices belong to the same $\Pp'$-component, and the result follows.
\end{proof}
\fi

\subparagraph*{Finitariness of the definition.}
\iffull Before moving on to the proof of Theorem~\ref{thm:shrub-inductive}, we need an additional insight which was observed by Ganian et al.~\cite{GanianHNOM19} using an earlier work of Ding~\cite{Ding92}.
Recall here that a graph class $\Cc$ is {\em{hereditary}} if it is closed under taking induced subgraphs.

\begin{theorem}[follows from Corollary~3.9 of~\cite{GanianHNOM19}]
 For every hereditary graph class $\Cc$ of bounded shrubdepth there exists a finite set $\Forb(\Cc)$ of graphs such that for every graph $G$,
 \[
    G\in \Cc \textrm{ if and only if } G\textrm{ does not contain any member of }\Forb(\Cc)\textrm{ as an induced subgraph.}
\]
\end{theorem}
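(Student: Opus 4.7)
The plan is to derive the statement directly from Ganian et al.'s Corollary~3.9, which itself builds on Ding's classical well-quasi-ordering theorem for graphs of bounded treedepth. The content I would invoke from that cited corollary is twofold: (a)~for every integer $d$, the class $\Cc_d$ of all graphs of shrubdepth at most $d$ is well-quasi-ordered by the induced subgraph relation; and (b)~$\Cc_d$ itself is characterized by a finite set of forbidden induced subgraphs, which we denote $\Forb(\Cc_d)$.

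Fix now a hereditary class $\Cc$ of shrubdepth at most $d$ and define $\Forb(\Cc)$ to be the set of minimal (in the induced subgraph order) graphs that do not belong to $\Cc$. By hereditariness of $\Cc$, for every graph $G$ we have $G\in\Cc$ if and only if no member of $\Forb(\Cc)$ embeds into $G$ as an induced subgraph, so it suffices to prove that $\Forb(\Cc)$ is finite. I would do this by a case distinction on each $H\in\Forb(\Cc)$. In the first case $H\in\Cc_d$: then the set of such $H$ forms an antichain in $\Cc_d$ (namely, the minimal elements of $\Cc_d\setminus\Cc$), hence is finite by (a). In the second case $H\notin\Cc_d$: then every proper induced subgraph $H-v$ lies in $\Cc\subseteq\Cc_d$, which means that $H$ is itself a minimal forbidden induced subgraph of $\Cc_d$, that is, $H\in\Forb(\Cc_d)$, and there are only finitely many such $H$ by (b). Combining the two cases shows that $\Forb(\Cc)$ is a finite union of finite sets, completing the proof.

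The main obstacle lies in establishing the well-quasi-ordering assertion (a) for bounded-shrubdepth classes, which is the substantive contribution of Ganian et al.'s argument and requires adapting Ding's treedepth proof to the dense shrubdepth setting via connection models. Since we may invoke that corollary as a black box, the remainder of the derivation reduces to the transparent two-case analysis above, and no further tool beyond the vocabulary of shrubdepth already introduced in the excerpt is needed. A minor subtlety worth flagging is the separation of (a) and (b): assertion (a) on its own would only give a finite obstruction set \emph{within} $\Cc_d$, so statement (b), which controls obstructions living outside $\Cc_d$, is essential for characterizing $\Cc$ among \emph{all} graphs.
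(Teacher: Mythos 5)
The paper offers no argument of its own for this statement: it is invoked directly as Corollary~3.9 of Ganian et al., whose surrounding results are precisely the two facts you extract --- well-quasi-ordering of the relevant bounded-shrubdepth classes under the induced subgraph relation (via a Ding-style argument adapted to connection models), and the consequent finite forbidden-induced-subgraph characterizations. Your two-case analysis (minimal obstructions of $\Cc$ lying inside the ambient class form an antichain, hence are finite by wqo; minimal obstructions lying outside it are themselves minimal obstructions of the ambient class, hence finite by its characterization) is a sound reconstruction of that derivation, and your remark that the wqo assertion alone does not control obstructions living outside the ambient class is well taken.

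One imprecision needs repair: there is no sensible class $\Cc_d$ of ``all graphs of shrubdepth at most $d$''. Shrubdepth is a parameter of a graph \emph{class}, not of a single graph (the paper is explicit about this), and once arbitrarily many labels are allowed every graph has a connection model of depth $1$, so bounding the depth alone yields the class of all graphs --- which is certainly not well-quasi-ordered under induced subgraphs (cycles already form an infinite antichain). The ambient class must fix \emph{both} parameters: graphs admitting a connection model of depth at most $d$ over a fixed finite label set $\Labels$ (the classes $\mathrm{TM}_m(d)$ of Ganian et al.). Any class of bounded shrubdepth is contained in such a class, and these are the classes to which the cited wqo theorem and Corollary~3.9 apply; with that substitution your argument goes through verbatim. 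In fact the two cases can be merged: every minimal obstruction of $\Cc$ becomes, after deleting any vertex, a graph of the ambient class, and re-adding one vertex costs only one extra level and one extra label, so all of $\Forb(\Cc)$ lies in a single slightly larger class $\mathrm{TM}_{m+1}(d+1)$ and forms an antichain there, whence wqo alone bounds it.
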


From the above it immediately follows that classes of bounded shrubdepth are first-order definable in the following sense.

\begin{corollary}\label{cor:fo-def}
 For every hereditary graph class $\Cc$ of bounded shrubdepth there exists a first-order sentence $\psi_\Cc$ such that for every graph $G$,
 \[
    G\in \Cc \textrm{ if and only if } G\models \psi_\Cc.
 \]
\end{corollary}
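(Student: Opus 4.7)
The plan is to combine the preceding theorem about a finite forbidden induced subgraph characterization with the standard fact that forbidding a fixed induced subgraph is a first-order property. By the theorem cited just above, there is a finite list of graphs $\Forb(\Cc) = \{H_1, \ldots, H_m\}$ such that a graph $G$ belongs to $\Cc$ if and only if none of the $H_i$'s embeds into $G$ as an induced subgraph. Since there are finitely many such $H_i$'s and conjunctions of first-order sentences are first-order, it suffices to write, for each $H_i$ separately, a first-order sentence $\chi_i$ expressing that $G$ contains no induced copy of $H_i$, and then set $\psi_\Cc := \bigwedge_{i=1}^m \chi_i$.

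For a fixed graph $H$ on vertices $\{v_1,\ldots,v_k\}$, the sentence $\chi_H$ is standard: we quantify universally over $k$ variables $x_1,\ldots,x_k$ and state that it is \emph{not} the case that the $x_i$'s are pairwise distinct and form an induced copy of $H$. Concretely,
\[
\chi_H \;:=\; \forall x_1 \ldots \forall x_k.\, \neg\Bigl(\bigwedge_{i \neq j} x_i \neq x_j \;\wedge\; \bigwedge_{i<j,\, v_iv_j \in E(H)} E(x_i,x_j) \;\wedge\; \bigwedge_{i<j,\, v_iv_j \notin E(H)} \neg E(x_i,x_j)\Bigr).
\]
A graph $G$ satisfies $\chi_H$ if and only if $G$ has no induced subgraph isomorphic to $H$. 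Taking the conjunction over $H \in \Forb(\Cc)$ and using the characterization from the previous theorem yields the desired sentence $\psi_\Cc$. There is no real obstacle here; the entire content of the corollary is packed into the preceding theorem, and the translation from ``avoids a finite family of induced subgraphs'' to ``satisfies a first-order sentence'' is routine.
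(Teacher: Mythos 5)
Your proof is correct and takes exactly the route the paper intends: the paper states the corollary follows immediately from the finite forbidden-induced-subgraph characterization (cited just above), and your argument simply spells out the routine translation of that characterization into a first-order sentence.
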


With these tools in place, we are ready to derive the theorem.
\else 
We are now ready to derive the theorem.
\fi
\iffull
\begin{proof}[Proof of Theorem~\ref{thm:shrub-inductive}]
    Consider any $G\in \Cc$ and let $(\Labels,\lbl,T,\Adj)$ be a connection model for $G$, where $T$ has depth at most $d$ and $\Labels$ is a fixed label set depending only on $\Cc$.
    Let $r$ be the root of $T$ and let $\Pp$ be the partition of the vertex set of $G$ according to the subtrees of $T$ rooted at the children of $r$: vertices $u,v$ are in the same part of $\Pp$ if and only if there is a child $x$ of $r$ that is a common ancestor of $u$ and $v$.
    Further, let $\Ll$ be the label partition: vertices $u,v$ are in the same part of $\Ll$ if and only if $\lbl(u)=\lbl(v)$.
    By the definition of a connection model, $(\Pp,\Ll)$ is a dicing of $G$ of order at most $|\Labels|$.

    Let $\Cc'$ be the class of all graphs that admit a connection model of depth at most $d-1$ using the label set $\Labels$. Note that $\Cc'$ is hereditary and has shrubdepth at most $d-1$. Further, $G[A]\in \Cc'$ for every part $A$ of $\Pp$. 
   
    By \Cref{lem:def-dicing}, there exist formulas $\varphi(\tup x;\tup y)$ and $\lambda(\tup x;\tup y)$, depending only on $\Cc$, such that there is $\tup s\in G^{\tup y}$ for which $(\varphi(G^{\tup x};\tup s),\lambda(G^{\tup x};\tup s))$ is a dicing of $G$ of order at most $\ell$, where $\ell\in 2^{\Oh(|\Labels|^2)}$ is a constant depending only on $\Cc$. Moreover, every part of $\varphi(G^{\tup x};\tup s)$ is entirely contained in a single part of $\Pp$, which implies that for every part $A'$ of $\varphi(G^{\tup x},\tup s)$ we have $G[A']\in \Cc'$.
    There remains to transform $\phi$ and $\lambda$ into finitary formulas.

Let $R$ be a relation symbol of arity $4$ and consider the following assertion:
\begin{center}
``$R$ is the product of two partitions $\Pp$ and $\Ll$ such that $(\Pp, \Ll)$ is a dicing of $G$ of \\ order at most $\ell$.
Moreover, for every part $A$ of $\Pp$ it holds that~$G[A] \in \Cc'$''.
\end{center}
We claim that it can be expressed by a first order sentence $\psi$ over the signature $\{E,R\}$: there is no difficulty for the first sentence, and for the second, we make use of Corollary~\ref{cor:fo-def} and relativise to parts of $\Pp$.
Moreover, since dicings project to induced subgraphs as remarked above, and since $\Cc'$ is hereditary, the sentence $\psi$ can be chosen to be hereditary (if it holds on a structure then it holds on induced substructures), and in particular, in the vocabulary of Section~\ref{sec:stability}, $\psi$ induces itself on semi-elementary substructures.

Define $\eta(\tup x_1,\tup x_2; \tup y)$, where $\tup x_1$ and $\tup x_2$ are both tuples of variables of size 2, to be the formula given by
\[
    \eta(\tup x_1, \tup x_2; \tup y) = \phi(\tup x_1; \tup y) \wedge \lambda(\tup x_2; \tup y).
\]
Let $T$ be the theory of $\Cc$.
What we have proved so far can be stated as 
\[
    T \quad\textrm{implies}\quad \exists \tup s. \psi[R(\tup x_1, \tup x_2) / \eta(\tup x_1, \tup x_2; \tup s)].
\]
Therefore, by the Finitary Substitute Lemma (Lemma~\ref{lem:finitary-substitute}) there is a finitary formula $\eta'(\tup x_1, \tup x_2; \tup y)$ such that 
\[
    T\quad\textrm{implies}\quad \exists \tup s.\psi[R(\tup x_1, \tup x_2) / \eta'(\tup x_1, \tup x_2; \tup s)].
\]
Then the formulas
\[
    \phi'(\tup x; \tup y) = \exists z. \eta'(\tup x, z, z; \tup y) \quad \text{and} \quad \lambda'(\tup x; \tup y) = \exists z. \eta'(z, z ,\tup x; \tup y)
\]
yield the wanted result.
\end{proof}
\else 
\begin{proof}[Proof sketch for Theorem~\ref{thm:shrub-inductive}]
    Let $\Labels$ be a large enough set of labels so that graphs in $\Cc$ admit connection models with labels in $\Labels$, and let $\Cc'$ be the class of all graphs that admit a connection model of depth at most $d-1$ using the label set $\Labels$.   
    By \Cref{lem:def-dicing}, there exist formulas $\varphi(\tup x;\tup y)$ and $\lambda(\tup x;\tup y)$, depending only on $\Cc$, such that there is $\tup s\in G^{\tup y}$ for which $(\varphi(G^{\tup x};\tup s),\lambda(G^{\tup x};\tup s))$ is a dicing of $G$ of order at most $\ell$, where $\ell\in 2^{\Oh(|\Labels|^2)}$ is a constant depending only on $\Cc$. Moreover, every part of $\varphi(G^{\tup x};\tup s)$ is entirely contained in a single part of $\Pp$, which implies that for every part $A'$ of $\varphi(G^{\tup x},\tup s)$ we have $G[A']\in \Cc'$.
    It remains to transform $\phi$ and $\lambda$ into finitary formulas. Let $T$ be the theory of $\Cc$.

Let $R$ be a relation symbol of arity $4$ and consider the following assertion:
\begin{center}
``$R$ is the product of two partitions $\Pp$ and $\Ll$ such that $(\Pp, \Ll)$ is a dicing of $G$ of \\ order at most $\ell$.
Moreover, for every part $A$ of $\Pp$ it holds that~$G[A] \in \Cc'$''.
\end{center}
It follows from~\cite[Corollary 3.9]{GanianHNOM19} that the assertion above can be expressed by a first order sentence $\psi$ over the signature $\{E,R\}$.
Moreover, $\psi$ is hereditary, so we may apply the Finitary Substitute Lemma to the formula $\eta(\tup x_1,\tup x_2; \tup y) = \phi(\tup x_1; \tup y) \wedge \lambda(\tup x_2, \tup y)$; we get a finitary $\eta'(\tup x_1,\tup x_2; \tup y)$ such that 
\[
    T\quad\textrm{implies}\quad \exists \tup s.\psi[R(\tup x_1, \tup x_2) / \eta'(\tup x_1, \tup x_2; \tup s)].
\]
Then the formulas
\[
    \phi'(\tup x; \tup y) = \exists z. \eta'(\tup x, z, z; \tup y) \quad \text{and} \quad \lambda'(\tup x; \tup y) = \exists z. \eta'(z, z ,\tup x; \tup y)
\]
yield the wanted result.
\end{proof}
\fi

\subsection{Canonical reduction to bounded treedepth}\label{subsec:sd-construction}

With \Cref{thm:shrub-inductive} in hand, we proceed to the proof of~\Cref{thm:canonization}.
Fix a class $\Cc$ of shrubdepth at most $d$.

\subparagraph*{Properties of the construction.} We describe a construction that, given a graph $G\in \Cc$, constructs a structure $\Aa(G)$ of the following shape.
\begin{itemize}
 \item $\Aa(G)$ is a structure over a signature consisting of several unary relations and one binary relation.
 Thus, we see $\Aa(G)$ as a vertex-colored directed graph, and we will apply the usual directed graphs terminology to $\Aa(G)$.
 \iffull For instance, the elements of $\Aa(G)$ are called {\em{vertices}} and pairs belonging to the binary relation are called {\em{arcs}}. \else \fi
 \item The vertex set of $G$ is contained in the vertex set of $\Aa(G)$. 
 The elements of $V(G)$ will be called {\em{leaves}} of $\Aa(G)$.
 In $\Aa(G)$ there is a unary predicate marking all the leaves.
 \item In $\Aa(G)$ there is a specified vertex, called the {\em{root}}, such that for every vertex $u$ of $\Aa(G)$ there is an arc from $u$ to the root. The root is identified using a unary predicate.
\end{itemize}
The construction will satisfy the following properties.
\begin{itemize}
 \item The mapping $G\mapsto \Aa(G)$ is isomorphism-invariant within the class $\Cc$.
 \item For every vertex $u$ of $\Aa(G)$, there are at most $c$ arcs with tail at $u$, for some constant $c$ depending only on $\Cc$.
 \item There is a transduction $\Ts$ depending on $\Cc$ such that $\Aa(G)\in \Ts(G)$.
 \item The class $\{\Aa(G) \mid G \in \Cc\}$ has bounded treedepth.
 \item There is an interpretation $\Is$ depending on $\Cc$ such that $G=\Is(\Aa(G))$. 
 \item Given $G$, $\Aa(G)$ can be computed in time $\Oh(n^2)$, where $n$ is the vertex count of $G$.
 \end{itemize}
We proceed by induction on $d$, the shrubdepth of $\Cc$\iffull.

\subparagraph*{Base case.} When $d=1$, all graphs in the class $\Cc$ have at most one vertex. Hence, for $G\in \Cc$, one can just output $\Aa(G)$ to be $G$ and declare the only vertex $u$ of $G$ to be the root of $\Aa(G)$.
(This involves also adding an arc $(u,u)$ and marking $u$ using unary predicates for the root and for the leaves.) Thus we can set $c=1$ and the definitions of $\Ts$ and $\Is$ are obvious.
\else; the base case $d=1$ is~obvious.\fi

\subparagraph*{Preparation for the inductive construction.} Suppose $d>1$. Let $\varphi(\tup x;\tup y)$, $\lambda(\tup x;\tup y)$, $\ell\in \N$, and $\Cc'$ be the finitary formulas, the bound, and the class provided by \Cref{thm:shrub-inductive}.
Since the shrubdepth of $\Cc'$ is at most $d-1$, by induction assumption we get a suitable mapping $\Aa'(\cdot)$, constant $c'$, transduction $\Ts'$, and interpretation $\Is'$ that satisfy the properties stated above for $\Cc'$.

Call a tuple $\tup s\in G^{\tup y}$ {\em{good}} if $(\varphi(G^{\tup x};\tup s),\lambda(G^{\tup x};\tup s))$ is a dicing of $G$ of order at most $\ell$ satisfying that for every part $A$ of $\varphi(G^{\tup x},\tup s)$ it holds that $G[A]\in \Cc'$.
Define
$$\Ff=\{\,(\varphi(G^{\tup x};\tup s),\lambda(G^{\tup x};\tup s))\ \colon\ \tup s\in G^{\tup y}\textrm{ is a good tuple}\,\}.$$
By Theorem~\ref{thm:shrub-inductive}, we have
$$1\leq |\Ff|\leq k$$
for some constant $k\in \N$ depending only on $\Cc$.

Let $\wh{\Ll}$ be the coarsest partition that refines all label partitions of the dicings belonging to~$\Ff$; that is, $u,v$ are in the same part of $\wh{\Ll}$ if and only if $u,v$ are in the same part of $\Ll$ for each $(\Pp,\Ll)\in \Ff$. Similarly, let $\wh{\Pp}$ be the coarsest partition that refines all component partitions of the dicings belonging to $\Ff$. Since $|\Ff|\leq k$ and $|\Ll|\leq \ell$ for each label partition featured in $\Ff$, we have
$$|\wh{\Ll}|\leq \ell^k.$$
Moreover, every part of $\wh{\Pp}$ is contained in a single part of any component partition featured in~$\Ff$, hence $G[B]\in \Cc'$ for every part $B$ of $\wh{\Pp}$.

Let $\wh{\Ff}=\{(\Pp,\wh{\Ll})\colon (\Pp,\Ll)\in \Ff\}$. Since $\wh{\Ll}$ refines each label partition featured in $\Ff$, it follows that every element of $\wh{\Ff}$ is a dicing of $G$. Then, for a component partition $\Pp$ featured in $\Ff$, let $Z_\Pp\subseteq \wh{\Ll}\times \wh{\Ll}$ be the symmetric relation witnessing that $(\Pp,\wh{\Ll})$ is a dicing.

\subparagraph*{Definition of the construction.}

We now describe the structure $\Aa(G)$; see Figure~\ref{fig:construction}.
Construct:
\begin{itemize}
 \item a root vertex $r$;
 \item for every part $L\in \wh{\Ll}$, a vertex $x_L$;
 \item for every component partition $\Pp$ featured in $\Ff$, a vertex $y_{\Pp}$;
 \item for every component partition $\Pp$ featured in $\Ff$, and every part $A\in \Pp$, a vertex $z_{\Pp,A}$; and
 \item for every component partition $\Pp$ featured in $\Ff$, and every (unordered) pair $LL'\in Z_\Pp$, a vertex $q_{\Pp,LL'}$. (Note here that $Z_\Pp$ is symmetric, so we may treat its elements as unordered pairs of elements of $\wh{\Ll}$.)
\end{itemize}
Further, for every part $B$ of $\wh{\Pp}$ we have $G[B]\in \Cc'$, hence we may apply the construction $\Aa'$ to $G[B]$, yielding a structure $H_B=\Aa'(G[B])$. Let $r_B$ be the root of $H_B$. We add all structures $H_B$ obtained in this way to $\Aa(G)$. We then connect these with the following arcs:
\begin{enumerate}
 \item for every vertex $u$ of $\Aa(G)$ there is an arc $(u,r)$;
 \item for every vertex of the form $z_{\Pp,A}$ there is an arc $(z_{\Pp,A},y_{\Pp})$;
 \item for every vertex of the form $q_{\Pp,LL'}$, there are arcs $(q_{\Pp,LL'},x_{L})$, $(q_{\Pp,LL'},x_{L'})$, and $(q_{\Pp,LL'},y_{\Pp})$;
 \item for every part $B$ of $\wh{\Pp}$ and every component partition $\Pp$ featured in $\Ff$, there is an arc $(r_B,z_{\Pp,A})$, where $A$ is the unique part of $\Pp$ that contains $B$;
 \item for every vertex $u$ of $G$ there is an arc $(u,x_L)$, where $L$ is the unique part of $\wh{\Ll}$ that contains~$u$. (Recall that the vertex set of $G$ is the union of the leaf sets of $H_B$ for $B\in \wh{\Pp}$.)

\end{enumerate}
\begin{figure}[t]
    \begin{center}
    \def\svgwidth{0.8\textwidth}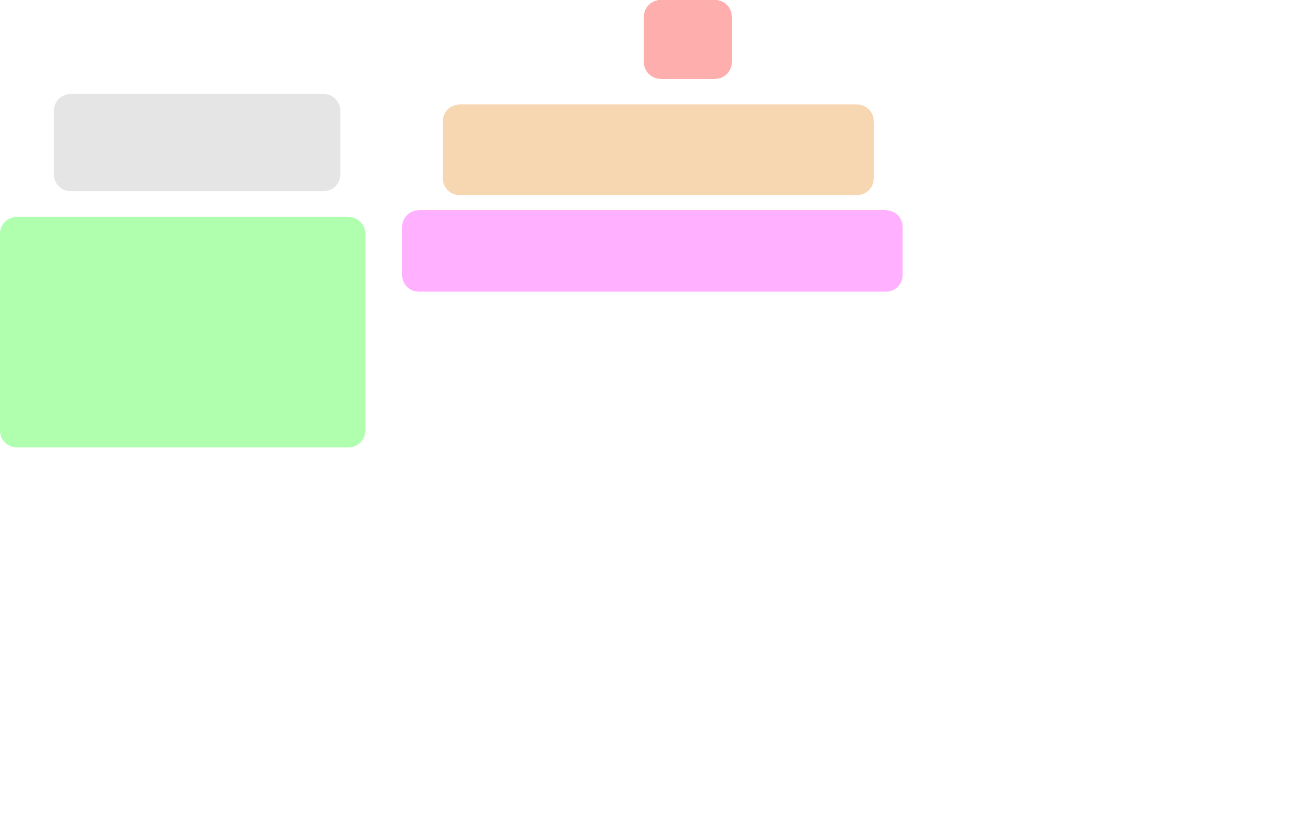
    \end{center}
    \caption{Inductive construction of $\Aa(G)$. Vertices of the form $r$, $y_{\Pp}$, $z_{\Pp,A}$, $q_{\Pp,LL'}$ are depicted in red, orange, violet, and green, respectively. Vertices of the form $x_L$ are depicted in the top-left corner of the figure in different soft colors (which do not correspond to unary predicates), matching the colors of vertices of $G$ that point to them; thus the soft color partition is $\wh{\Ll}$. We depict a few representatives for each type of arcs.} \label{fig:construction}
\end{figure}

Finally, we add five fresh unary predicates, called $R,X,Y,Z,Q$, respectively selecting the root~$r$, the vertices of the form $x_L$, the vertices of the form $y_{\Pp}$, the vertices of the form $z_{\Pp,A}$, and the vertices of the form $q_{\Pp,LL'}$. Note here that $H$ contains more unary predicates: those that come with structures $H_B$ constructed by induction. These include a unary relation selecting the leaves.

This concludes the construction of $\Aa(G)$. \iffull
 We now verify the required properties.

\subparagraph*{First properties.} The first two properties follow easily from the construction.

\begin{claim}\label{cl:iso-inv}
    The mapping $G\mapsto \Aa(G)$ is isomorphism-invariant.
\end{claim}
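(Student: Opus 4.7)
The plan is to observe that every ingredient used in the construction of $\Aa(G)$ is determined canonically from $G$ itself, without any arbitrary choice, and then extend any given graph isomorphism step by step through the construction. Concretely, let $\pi\colon G\to G'$ be an isomorphism of graphs in $\Cc$. Since the formulas $\varphi(\tup x;\tup y)$ and $\lambda(\tup x;\tup y)$ are fixed (independent of $G$), applying $\pi$ coordinate-wise maps good tuples of $G$ bijectively to good tuples of $G'$, and it conjugates the partitions defined by these formulas: for every good $\tup s\in G^{\tup y}$, the dicing $(\varphi(G^{\tup x};\tup s),\lambda(G^{\tup x};\tup s))$ is sent by $\pi$ to the dicing $(\varphi((G')^{\tup x};\pi(\tup s)),\lambda((G')^{\tup x};\pi(\tup s)))$. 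Consequently, the action of $\pi$ on partitions induces a bijection $\Ff\to\Ff'$ between the families of dicings defined for $G$ and $G'$, and therefore also bijections $\wh{\Pp}\to\wh{\Pp}'$ and $\wh{\Ll}\to\wh{\Ll}'$ between the coarsest common refinements. Moreover, for each component partition $\Pp$ featured in $\Ff$, the associated symmetric relation $Z_{\Pp}\subseteq \wh{\Ll}\times\wh{\Ll}$ is uniquely determined by the dicing $(\Pp,\wh{\Ll})$, hence $\pi$ sends $Z_{\Pp}$ to $Z_{\pi(\Pp)}$.

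I then define the extension $\widehat{\pi}\colon V(\Aa(G))\to V(\Aa(G'))$ by dictating the images of the new vertices according to their canonical labels. Namely, $\widehat{\pi}(r)=r'$, $\widehat{\pi}(x_L)=x_{\pi(L)}$ for $L\in\wh{\Ll}$, $\widehat{\pi}(y_{\Pp})=y_{\pi(\Pp)}$, $\widehat{\pi}(z_{\Pp,A})=z_{\pi(\Pp),\pi(A)}$, and $\widehat{\pi}(q_{\Pp,LL'})=q_{\pi(\Pp),\pi(L)\pi(L')}$. For every part $B\in\wh{\Pp}$, the restriction $\pi|_B\colon G[B]\to G'[\pi(B)]$ is a graph isomorphism between members of $\Cc'$; by the inductive hypothesis, this lifts to an isomorphism of $H_B=\Aa'(G[B])$ with $H_{\pi(B)}=\Aa'(G'[\pi(B)])$. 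Gluing all these inductive lifts together with the coordinate-wise definition above yields the required $\widehat{\pi}$.

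It then remains to verify that $\widehat{\pi}$ is a structure isomorphism, that is, that it preserves every binary arc of types (1)--(5) listed in the construction, as well as every unary predicate. This is a routine check: each arc has both endpoints indexed by data (parts of $\wh{\Ll}$ or $\wh{\Pp}$, component partitions, roots of $H_B$), and $\widehat{\pi}$ maps these indices correspondingly; the five top-level predicates $R,X,Y,Z,Q$ are preserved by construction, while predicates inherited from the structures $H_B$ are preserved by the inductive isomorphisms.

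The main conceptual point throughout is that the entire construction is equivariant under isomorphisms of the input graph; no obstacle beyond systematic bookkeeping is to be expected. In particular, the finitariness of $\varphi$ and $\lambda$ is not needed for isomorphism-invariance itself (it is needed later for the size bound and the treedepth property); what matters here is purely that $\Aa(G)$ is defined out of objects that are themselves first-order definable from $G$ and out of the recursive structures on canonically chosen induced subgraphs.
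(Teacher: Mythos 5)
Your proof is correct and matches the paper's approach: the paper itself only says the claim is ``immediate from the construction,'' and what you have written out is exactly the routine equivariance check that the authors leave implicit --- fixed formulas send good tuples to good tuples, hence $\Ff$, $\wh\Ll$, $\wh\Pp$, and each $Z_\Pp$ are conjugated by $\pi$, the label-indexed new vertices map correspondingly, the $H_B$'s are handled by the inductive hypothesis, and arcs and predicates are preserved by bookkeeping. Your closing remark that finitariness plays no role here, only in the size and treedepth bounds, is also accurate and a useful clarification.
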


\begin{claim}\label{cl:outdeg}
    There is a constant $c$ depending only on $\Cc$ such that every vertex of $\Aa(G)$ is a tail of at most $c$ arcs.
\end{claim}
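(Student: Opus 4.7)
The plan is to bound the out-degree by inspecting each of the five types of new vertices introduced in the construction, plus the vertices inherited from the substructures $H_B$, and showing that each has only a constant number of outgoing arcs.

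First I would fix any part $B$ of $\wh{\Pp}$ and recall that by the inductive assumption on $\Aa'$, every vertex of $H_B = \Aa'(G[B])$ is the tail of at most $c'$ arcs internal to $H_B$. Hence, for any vertex of $\Aa(G)$ that comes from some $H_B$, it suffices to count how many \emph{additional} arcs of $\Aa(G)$ have it as a tail. A non-root, non-leaf vertex of $H_B$ acquires only the single arc to $r$ from rule~(1), so its out-degree in $\Aa(G)$ is at most $c' + 1$. A leaf $u \in V(G)$ lies in exactly one part $L$ of $\wh{\Ll}$ and gains the arc to $r$ (rule~(1)) and the arc to $x_L$ (rule~(5)), hence is a tail of at most $c' + 2$ arcs. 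The root $r_B$ of $H_B$ additionally is the tail of one arc $(r_B, z_{\Pp, A})$ for each of the at most $k = |\Ff|$ component partitions $\Pp$ featured in $\Ff$ (rule~(4)), so its out-degree is at most $c' + k + 1$.

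Next I would handle the five new types of vertices. The root $r$ has only the self-loop $(r,r)$ from rule~(1). Each $x_L$ and each $y_{\Pp}$ has the single arc to $r$. Each $z_{\Pp, A}$ has the arc to $r$ and the arc to $y_\Pp$ from rule~(2), giving out-degree $2$. Each $q_{\Pp, LL'}$ has the arc to $r$ and the three arcs of rule~(3), giving out-degree $4$. None of these depend on $G$.

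Taking the maximum over all cases, every vertex of $\Aa(G)$ is the tail of at most $c := c' + k + 1$ arcs, where $c'$ is the inductive constant from $\Aa'$ and $k$ is the constant from Theorem~\ref{thm:shrub-inductive} bounding $|\Ff|$. Both constants depend only on $\Cc$, which yields the claim. The main thing to be careful about is to double-count correctly the arcs at $r_B$: the rule~(4) arcs go from $r_B$ into the ``new'' part of $\Aa(G)$, and because they are indexed by $\Pp \in \Ff$ (not by parts $A$ of any given $\Pp$), their number is bounded uniformly by $|\Ff| \le k$; no other subtlety is expected.
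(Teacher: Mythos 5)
Your argument is the same case analysis the paper uses (count inherited arcs from $H_B$ plus arcs from rules (1)--(5), and verify the five new vertex types have out-degree at most $4$). One small slip: your cases are not disjoint. In the base case $d=1$ the root $r_B$ of $H_B$ is also a leaf of $H_B$, hence a vertex of $G$, and then it receives the arc from rule~(1), up to $|\Ff|\le k$ arcs from rule~(4), \emph{and} the arc to $x_L$ from rule~(5), giving out-degree up to $c'+k+2$ rather than your stated maximum $c'+k+1$. This is exactly the paper's constant $c'+2+|\Ff|\le c'+2+k$; the claim of course still holds with the slightly larger constant, so the gap is cosmetic, but you should take the union of the arc types a single vertex can receive rather than a maximum over exclusive-seeming cases.
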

\begin{claimproof}
    Claim~\ref{cl:iso-inv} is immediate from the construction.
    For Claim~\ref{cl:outdeg}, vertices originating from the $H_B$'s are the tail of at most $c'+2+|\Ff|\leq c'+2+k$ arcs, while those of the form $r$, $x_L$, $y_\Pp$, $z_{\Pp,A}$, or $q_{\Pp,LL'}$ are the tail of at most $4$ arcs.
\end{claimproof}

\subparagraph*{Transducing $\Aa(G)$ from $G$.} We now show that $\Aa(G)$ can be transduced from $G$.

\begin{claim}\label{cl:trans}
    There is a first-order transduction $\Ts$ such that $\Aa(G)\in \Ts(G)$ for all $G\in \Cc$.
\end{claim}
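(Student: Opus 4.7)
The plan is to argue by induction on the shrubdepth $d$ of $\Cc$. The base case $d=1$ is trivial since $\Aa(G) = G$. For the inductive step, I would invoke the finitary formulas $\varphi, \lambda$, the bound $\ell$, and the class $\Cc'$ provided by Theorem~\ref{thm:shrub-inductive}, along with the transduction $\Ts'$ producing $\Aa'(H)$ from $H \in \Cc'$ given by induction.

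Then I would express $\Ts$ as the composition of three simpler transductions, exploiting the standard closure of first-order transductions under composition. The first transduction uses a coloring to mark at most $k$ tuples $\tup s_1, \ldots, \tup s_k \in G^{\tup y}$ witnessing every dicing in $\Ff$; this is possible precisely because $|\Ff| \leq k$ by finitariness of $\varphi$ and $\lambda$. From these parameters the refinements $\wh{\Pp}$ and $\wh{\Ll}$ become FO-definable as common refinements of the partitions defined by $\varphi(\tup x; \tup s_i)$ and $\lambda(\tup x; \tup s_i)$.

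The second transduction uses several copies of $G$ to materialize the top-layer vertices. The root $r$, the $y_{\Pp}$'s, and the $q_{\Pp, LL'}$'s are bounded in number and marked directly via fresh colorings on dedicated copies. The $x_L$'s and $z_{\Pp,A}$'s are obtained, in separate copies, by designating one representative per $\wh{\Ll}$-class and per (dicing, component)-class, with the remaining elements discarded via the domain formula. All arcs (1)--(5) of the construction are then FO-definable from these ingredients and the $\widehat{}$-partitions. The third transduction applies $\Ts'$ in parallel to each subgraph $G[B]$ for $B \in \wh{\Pp}$: since $\wh{\Pp}$ is FO-definable after the first step, I would relativize the interpretation of $\Ts'$ to $\wh{\Pp}$-classes and allocate disjoint copies and colorings per class. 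Composing the three transductions yields the required $\Ts$ with $\Aa(G) \in \Ts(G)$.

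The main obstacle I anticipate is bookkeeping: each inductive step needs a bounded number of fresh copies and predicates, and these must be combined consistently with those supplied by $\Ts'$, while simultaneously supporting the ``parallel'' application of $\Ts'$ within each $\wh{\Pp}$-class. This should not cause any true blow-up since the recursion has depth $d$ and each level multiplies the number of copies by a bounded factor, but it requires some care to package the result as a single first-order transduction, rather than a chain of transductions.
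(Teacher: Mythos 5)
Your proposal follows essentially the same route as the paper: guess the (at most $k$) good tuples $\tup s_1,\dots,\tup s_{k'}$ via unary colorings, interpret $\wh{\Pp}$ and $\wh{\Ll}$ from them, materialize the auxiliary vertices $r$, $x_L$, $y_\Pp$, $z_{\Pp,A}$, $q_{\Pp,LL'}$ using extra copies, and apply the inductively-obtained transduction $\Ts'$ in parallel to each $\wh{\Pp}$-class by relativization. The only cosmetic difference is that you phrase the argument as an explicit composition of three transductions and as a self-contained induction on $d$, whereas the paper presents it as one step inside the enclosing inductive construction and relies on Corollary~\ref{cor:fo-def} to make ``$\tup s$ is good'' FO-expressible; both come to the same thing.
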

\begin{claimproof}
    The transduction first uses $k'|y|$ colors, where $k' \leq k$, to mark tuples $\tup s_1,\ldots,\tup s_{k'}\in G^{\tup y}$ such that:
    \begin{itemize}
     \item every tuple $\tup s_i$ is good;
     \item the dicings $(\Pp_i,\Ll_i)=(\varphi(G^{\tup x},\tup s_i),\lambda(G^{\tup x},\tup s_i))$ for $i\in \{1,\ldots,k'\}$ are pairwise different;
     \item for every good tuple $\tup s$, there is $i$ such that $(\varphi(G^{\tup x},\tup s),\lambda(G^{\tup x},\tup s))=(\Pp_i,\Ll_i)$.
    \end{itemize}
    By \Cref{cor:fo-def}, that a tuple is good can be expressed by a first-order formula.
    Therefore, since we know that $\tup s_1,\dots,\tup s_{k'}$ as above exist, we may just guess them non-deterministically and verify, in first-order logic, that they satisfy the required properties. 
    Thus we constructed $\Ff=\{(\Pp_i,\Ll_i)\colon i\in \{1,\ldots,k'\}\}$.

    Once $\Ff$ is constructed, it is straightforward to interpret the partitions $\wh{\Pp}$ and $\wh{\Ll}$, and then construct (by copying suitable vertices of $G$) all the vertices of the form $r$, $x_L$, $y_\Pp$, $z_{\Pp,A}$, and $q_{\Pp,LL'}$, as well as interpret arcs between them. Next we apply transduction $\Ts'$ to $G[B]$ for every $B\in \wh{\Pp}$ in parallel. (This is done by suitably relativizing formulas featured in $\Ts'$ to parts of $\wh{\Pp}$.) This yields structures $H_B$ for all $B\in \wh{\Pp}$. Finally, the remaining arcs --- of the form $(r_B,z_{\Pp,A})$, $(u,r)$, or $(u,x_L)$, where $u$ is a vertex of $G$ --- are easy to interpret.
\end{claimproof}

Note that from Claim~\ref{cl:trans} we immediately obtain the following.

\begin{claim}
 If $G\in \Cc$ has $n$ vertices, then $\Aa(G)$ has $\Oh(n)$ elements.
\end{claim}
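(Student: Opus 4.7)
The claim follows in two essentially equivalent ways: by invoking Claim~\ref{cl:trans} together with the definition of a transduction, or by direct counting. I would present the direct counting argument, since the bounds it yields are transparent and make the inductive structure visible.

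First, I would enumerate the five kinds of ``auxiliary'' vertices introduced at the top of the construction and bound their numbers using the constants produced in the preceding paragraphs. The root contributes one vertex. The vertices of the form $x_L$ are in bijection with $\wh{\Ll}$, so there are at most $\ell^k$ of them; the vertices of the form $y_{\Pp}$ are in bijection with the component partitions featured in $\Ff$, so there are at most $|\Ff|\leq k$ of them; and the vertices of the form $q_{\Pp, LL'}$ are at most $k\cdot|\wh{\Ll}|^2\leq k\cdot\ell^{2k}$. All four of these quantities depend only on $\Cc$, so together they contribute $\Oh(1)$ vertices. The vertices of the form $z_{\Pp,A}$ are slightly different: for each of the $\leq k$ partitions $\Pp$ featured in $\Ff$, we add one vertex per part of $\Pp$, and $\Pp$ is a partition of $V(G)$, so it has at most $n$ parts. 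This contributes $\Oh(n)$ vertices overall.

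The remaining vertices are those contributed by the substructures $H_B = \Aa'(G[B])$ for $B \in \wh{\Pp}$. Here I would invoke the inductive hypothesis: the statement being proved holds for $\Cc'$, so there is a constant $c''$ (depending only on $\Cc'$, and hence only on $\Cc$) with $|H_B|\leq c''\cdot|B|$ for every $B$. Since $\wh{\Pp}$ is a partition of $V(G)$, summing gives
\[
    \sum_{B\in\wh\Pp} |H_B| \;\leq\; c''\sum_{B\in\wh\Pp}|B| \;=\; c''\cdot n.
\]
Combining the contributions yields $|\Aa(G)|\leq c''\cdot n + \Oh(1) = \Oh(n)$, as required.

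There is no real obstacle: the statement is a straightforward accounting exercise on top of the construction in the previous subsection, and the inductive step goes through because the shrubdepth drops and the ``extra'' vertices introduced at each level are bounded by a constant times $n$ (with constants depending only on $\Cc$). Alternatively, since Claim~\ref{cl:trans} provides a transduction $\Ts$ such that $\Aa(G) \in \Ts(G)$, and since every transduction in the paper's sense is $1$-dimensional and blows up the universe by at most a fixed factor $k$, we directly conclude $|\Aa(G)|\leq k\cdot n$; I would mention this as a one-line alternative derivation.
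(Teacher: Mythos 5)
Your proof is correct. The paper itself dispatches this claim in a single line: it observes that the claim follows immediately from Claim~\ref{cl:trans}, because the output of a transduction has universe contained in $k\times\str M$ for the fixed constant $k$ built into the transduction, and the interpretation's domain formula $\delta(x)$ is unary, so $|\Aa(G)|\le k\cdot n$. That is exactly the ``one-line alternative'' you mention at the end. Your primary argument is a direct accounting of the vertices in $\Aa(G)$: the root, the $x_L$, $y_\Pp$, and $q_{\Pp,LL'}$ vertices number $\Oh(1)$ (depending only on $\Cc$ via $k$ and $\ell$), the $z_{\Pp,A}$ vertices number at most $k\cdot n$, and the recursive pieces $H_B$ contribute at most $c''\cdot n$ by induction since $\wh\Pp$ partitions $V(G)$. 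This is also correct and has the virtue of making the linear bound explicit and independent of the transduction machinery, at the modest cost of re-deriving by hand what Claim~\ref{cl:trans} already implies for free; the paper's route keeps the exposition shorter by reusing that claim. Both derivations are sound, and you correctly noticed that they give the same result.
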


\subparagraph*{Bounding the treedepth of $\Aa(G)$.} A bound on the treedepth easily follows.

\begin{claim}\label{cl:td-bnd}
    The class $\{\Aa(G)\colon G\in \Cc\}$ has bounded treedepth.
\end{claim}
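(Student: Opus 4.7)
\begin{claimproof}
The plan is to proceed by induction on the shrubdepth $d$ of $\Cc$, producing a constant $c_d$ such that $\td(\Aa(G))\le c_d$ for every $G\in\Cc$. The base case $d=1$ is immediate, since $\Aa(G)=G$ is a single vertex.

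For the inductive step, assume the bound $c_{d-1}$ holds on $\Cc'$ for the inductively constructed mapping $\Aa'$. Work in the Gaifman graph of $\Aa(G)$. The first observation is that arcs of type (1) make $r$ a universal vertex, so $\td(\Aa(G))=1+\td(\Aa(G)-r)$. The plan is then to peel off the auxiliary ``top layer'' $S=X\cup Y\cup Q\cup Z$ introduced at this level of the recursion. Direct inspection of the five arc types shows that every edge of the Gaifman graph that is not internal to some $H_B$ is incident to a vertex of $\{r\}\cup S$; consequently, after removing $\{r\}\cup S$, the Gaifman graph splits as the disjoint union of (restrictions of) the $H_B$'s, each of treedepth at most $c_{d-1}$ by induction. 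Using the standard inequality $\td(H)\le |T|+\td(H-T)$, this yields
\[
\td(\Aa(G))\;\le\;1+|S|+c_{d-1}.
\]

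It remains to bound $|S|$ by a constant depending only on $\Cc$. The easy contributions are $|X|\le|\wh\Ll|\le\ell^k$, $|Y|\le|\Ff|\le k$, and $|Q|\le k\cdot|\wh\Ll|^2$, where $\ell$ and $k$ are the constants supplied by Theorem~\ref{thm:shrub-inductive}. The main obstacle, and the crux of the argument, is bounding $|Z|=\sum_{\Pp\in\Ff}|\Pp|$: the finitariness of $\varphi$ only ensures that $|\Ff|\le k$, so one must additionally show that each individual partition $\Pp\in\Ff$ has a uniformly bounded number of parts. This will be established by exploiting the specific form of the finitary formula $\varphi$ produced by the Finitary Substitute Lemma in Theorem~\ref{thm:shrub-inductive}, combined with the constraint that $(\Pp,\wh\Ll)$ is a valid dicing of $G$ whose label partition has order at most $\ell$. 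Once a uniform bound $|\Pp|\le M$ is obtained, we get $|Z|\le kM$, and setting $c_d=1+|S|+c_{d-1}$ closes the induction.
\end{claimproof}
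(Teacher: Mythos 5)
The approach you sketch is fundamentally different from the paper's, and unfortunately it has a gap at exactly the point you flag as ``the crux of the argument'' --- and that gap cannot be closed. The number of parts $|\Pp|$ of a component partition appearing in $\Ff$ is \emph{not} bounded by a constant of $\Cc$. For example, take $\Cc$ to contain all disjoint unions of edges (a class of shrubdepth~$2$): here any dicing has a component partition whose parts are contained in single edges, so $|\Pp|$ is on the order of $n$. The finitariness of $\varphi$ provided by Theorem~\ref{thm:shrub-inductive} only bounds the \emph{number of distinct partitions} that arise (so $|\Ff|\le k$), not the number of parts inside any one of them. Consequently $|Z|$, and hence $|S|$, can be linear in $n$, and the inequality $\td(\Aa(G))\le 1+|S|+c_{d-1}$ gives no uniform bound. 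Strengthening the Finitary Substitute Lemma to control $|\Pp|$ is hopeless precisely because a dicing is \emph{supposed} to split the graph into arbitrarily many components.

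The paper circumvents this by not bounding the size of the ``top layer'' at all. Instead it observes (Claim~\ref{cl:trans}) that $\Aa(G)$ is transducible from $G$, so by~\Cref{thm:sd-closed-under-transductions} the class $\Dd=\{\Aa(G)\colon G\in\Cc\}$ has bounded shrubdepth; and separately (Claim~\ref{cl:outdeg}) every vertex is the tail of at most $c$ arcs, so the Gaifman graph has at most $cn$ edges, hence is $2c$-degenerate. Bounded shrubdepth excludes long induced paths~\cite{GanianHNOM19}, and a degenerate class with no long induced paths has bounded treedepth~\cite[Proposition~6.4]{sparsity}. This sidesteps the need to count vertices in any layer; degeneracy is the right global quantity because it is insensitive to a few vertices of high in-degree such as $r$ or the $z_{\Pp,A}$'s. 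If you want to salvage your local ``peel off a layer'' induction, you would need a different set to delete --- but $r$ together with, say, $X\cup Y\cup Q$ (which \emph{is} bounded) does not disconnect the $H_B$'s from each other, since arcs of type $(4)$ go through the unbounded set $Z$.
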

\begin{proof}
    Let $\Dd$ be the class in question.
    It follows from~\Cref{thm:sd-closed-under-transductions} and Claim~\ref{cl:trans} that $\Dd$ has bounded shrubdepth. Consider now any $H\in \Dd$. By Claim~\ref{cl:outdeg}, every subgraph of the Gaifman graph of $H$, say on $n$ vertices, has at most $cn$ edges, hence it contains a vertex of degree at most~$2c$. It follows that $\Dd$ is $2c$-degenerate, and it is well-known\footnote{This can be argued as follows: A class of bounded shrubdepth excludes some path as an induced subgraph~\cite{GanianHNOM19}, which together with the assumption that the class is degenerate, implies that the class has bounded treedepth~\cite[Proposition~6.4]{sparsity}.} that degenerate classes of bounded shrubdepth in fact have bounded treedepth.
\end{proof}

\subparagraph*{Interpreting $G$ in $\Aa(G)$.} We derive from Claims~\ref{cl:outdeg} and~\ref{cl:trans} that the following holds.

\begin{claim}\label{cl:inter}
    There is an interpretation $\Is$ such that $G=\Is(\Aa(G))$. 
    \end{claim}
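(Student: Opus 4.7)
The plan is to exploit the fact that $\Aa(G)$ was designed explicitly to encode, in a first-order way, all the combinatorial data needed to reconstruct $G$: the partition $\wh{\Pp}$ (through the $z$-vertices), the refined label partition $\wh{\Ll}$ (through the $x$-vertices), the dicing relations $Z_\Pp$ (through the $q$-vertices), and the inductively constructed structures $H_B$ encoding each induced subgraph $G[B]$. The interpretation $\Is$ will be $1$-dimensional: the domain formula $\delta(x)$ simply selects the leaves, which are marked by a unary predicate and coincide with $V(G)$ by construction.

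The main work is to define the edge formula $\alpha_E(x,y)$ on two distinct leaves $u,v$. I would split cases according to whether $u$ and $v$ lie in the same part $B$ of $\wh{\Pp}$. This test is first-order expressible: by the inductive construction every vertex of $H_B$ has an arc to the root $r_B$, and $r_B$ is marked by $\Aa'$'s root predicate (distinct from the global predicate $R$). Hence ``$u$ and $v$ belong to the same $B$'' is captured by saying that they share a common arc-target carrying the inductive root predicate.

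In the \emph{same-part} case, adjacency in $G$ coincides with adjacency in $G[B]$, which by the inductive hypothesis is read off from $H_B$ via $\Is'$. The domain of $H_B$ is recoverable as the set of vertices reaching $r_B$ along arcs (using the bounded out-degree from Claim~\ref{cl:outdeg} and the inductive description), so I would simply relativize the edge formula of $\Is'$ to $H_B$. In the \emph{different-part} case, by the definition of $\wh{\Pp}$ as a common refinement of the component partitions in $\Ff$, there exists some $\Pp \in \Ff$ in which $u$ and $v$ lie in different parts, and adjacency of $u$ and $v$ in $G$ is then equivalent to $(\wh{\Ll}(u), \wh{\Ll}(v)) \in Z_\Pp$. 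Both conditions are explicitly encoded in $\Aa(G)$, so I would write a single existential formula guessing a vertex $y_\Pp \in Y$, two distinct $z,z' \in Z$ with $r_{B_u} \to z \to y_\Pp$ and $r_{B_v} \to z' \to y_\Pp$ (witnessing distinct $\Pp$-parts), the $X$-vertices $x_L, x_{L'}$ pointed to by $u, v$ respectively, and a $q \in Q$ with arcs to $x_L, x_{L'}, y_\Pp$. By construction, such a $q$ exists exactly when $(L,L') \in Z_\Pp$, which in turn is equivalent to $u \sim v$ in $G$.

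I do not expect any serious obstacle: the whole argument is bookkeeping, since $\Aa(G)$ was tailored to make each reconstruction step first-order readable. The one subtle point worth double-checking is that in the same-part case the relativized interpretation $\Is'$ genuinely produces $G[B]$ on the embedded copy of $H_B$, which is ensured by the fact that $H_B$ lives in a fixed signature depending only on $\Cc'$ and that the global construction only adds \emph{outgoing} arcs from vertices of $H_B$ to fresh vertices outside it, without altering any relation internal to $H_B$.
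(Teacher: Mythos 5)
Your proposal follows essentially the same route as the paper: domain formula selects the leaves; adjacency splits on whether $u,v$ share a $\wh{\Pp}$-part (detected via a shared arc-target carrying the inductive root predicate); the same-part case delegates to the relativized $\Is'$; the different-part case decodes a $q$-vertex after using distinct $z,z'$-vertices pointing to a common $y_\Pp$ to witness that $u,v$ lie in different parts of some $\Pp\in\Ff$. All of this matches the paper's construction.

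There is, however, a gap in the different-part case. You assert that a $q\in Q$ with arcs to $x_L$, $x_{L'}$, $y_\Pp$ exists exactly when $(L,L')\in Z_\Pp$. This is only correct when $L\neq L'$. When $u$ and $v$ fall in the \emph{same} part $L=L'$ of $\wh{\Ll}$ (while lying in different $\wh{\Pp}$-parts, which is entirely possible), the condition degenerates to ``some $q\in Q$ has arcs to $x_L$ and $y_\Pp$,'' and this holds as soon as $(L,L'')\in Z_\Pp$ for \emph{some} label $L''$, not only for $L''=L$. In that situation your formula would wrongly declare an edge between two non-adjacent vertices of the same label class $L$ whenever $L$ is $Z_\Pp$-related to some other label but $(L,L)\notin Z_\Pp$. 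The paper's proof closes this hole by additionally requiring that the outgoing arcs from $q$ are \emph{exactly} those to $x_L$, $x_{L'}$, $y_\Pp$ and the root $r$ --- a first-order condition since out-degrees are bounded by Claim~\ref{cl:outdeg} --- which uniquely identifies $q$ as $q_{\Pp,LL'}$. With this extra clause (and the observation that $x_L, x_{L'}$ may coincide), your argument is correct.
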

    \begin{claimproof}
    First, the domain formula $\gamma_\Is(u)$ just selects all the leaves of $\Aa(G)$; recall that these are marked using a unary predicate.
     
     The adjacency formula $\alpha_\Is(u,v)$ works as follows. Given vertices $u$ and $v$, we first verify that they are distinct leaves. Then, we check whether there is a part $B\in \wh{\Pp}$ such that $u,v\in B$. This can be done by checking whether there exists a vertex of the form $r_B$ for $B\in \wh{\Pp}$ such that both arcs $(u,r_B)$ and $(v,r_B)$ are present. (Recall here that by construction, all vertices of the form $r_B$ for $B\in \wh{\Pp}$ are marked using a unary predicate.) 
     
     If this is the case, then whether the edge $uv$ is present in $G$ can be inferred using the interpretation $\Is'$ obtained by induction for the class $\Cc'$. This interpretation is applied to the structure $H_B$ by suitably relativizing all formulas involved to vertices of $H_B$, that is, vertices~$w$ for which the arc $(w,r_B)$ is present.
     
     If this is not the case, that is, $u\in B$ and $v\in B'$ where $B$ and $B'$ are distinct parts of~$\wh{\Pp}$, then adjacency between $u$ and $v$ in $G$ is equivalent to the following assertion: if $L,L'$ are the parts of $\wh{\Ll}$ to which $u$ and $v$ belong, respectively, and $\Pp$ is any component partition featured in $\Ff$ such that $u$ and $v$ belong to different parts of $\Pp$ (note that such $\Pp$ must exist by the definition of $\wh{\Pp}$), then the vertex $q_{\Pp,LL'}$ exists. This is equivalent to the following first-order formula: there exist vertices $r_B,r_{B'}$, vertices $x,x'$ satisfying $X$ (possibly equal), distinct vertices $z,z'$ satisfying $Z$, a vertex $y$ satisfying $Y$, and a vertex $q$ satisfying $Q$, such that arcs 
     $$(u,x),(v,x'),(u,r_B),(v,r_{B'}),(r_B,z),(r_{B'},z'),(z,y),(z',y),(q,y),(q,x),(q,x')$$
     are all present, and $\{(q,z),(q,x),(q,x')\}$ are all the arcs with tail $q$. This concludes the definition of $\alpha_\Is$.
\end{claimproof}

\subparagraph*{Construction of $\Aa(G)$ in $\Oh(n^2)$ time.} To prove the claimed upper bound, we will invoke the following algorithmic meta-theorem.

\begin{restatable}[Obtained by combining~\cite{FominK22} and~\cite{KazanaS13}]{lemma}{cwmetathm}\label{lem:meta-thm}
    Let $\Cc$ be a class of colored graphs of bounded cliquewidth and $\phi(\tup x)$ be a formula.
    For every $G \in \Cc$ of vertex count $n$ one may, after an $\Oh(n^2)$ preprocessing,
    \begin{enumerate}[(i)]
        \item enumerate $\varphi(G^{\tup x})$ with constant delay;
        \item given a tuple $\tup u \in G^{\tup x}$, determine whether $G \models \phi(\tup u)$ in constant time.
    \end{enumerate}
\end{restatable}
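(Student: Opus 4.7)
The plan is to combine two off-the-shelf results. First, I would invoke the algorithm of Fomin and Korhonen~\cite{FominK22} to compute, in time $\Oh(n^2)$, a cliquewidth expression for $G$ of width bounded in terms of the cliquewidth of $\Cc$. Equivalently, this produces a rooted labeled tree $T$ with $\Oh(n)$ nodes whose leaves are in bijection with the vertices of $G$, whose internal nodes are labeled from a fixed finite alphabet of cliquewidth operations (depending only on $\Cc$), and whose evaluation returns $G$ as a colored graph. As a tree, $T$ is a structure of treewidth $1$.

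Second, since $G$ is obtained from $T$ by evaluating a fixed cliquewidth expression, there is a simple (one-dimensional) FO interpretation $\Is$, depending only on $\Cc$, whose domain formula selects the leaves of $T$ and which satisfies $G = \Is(T)$. Pulling back $\phi(\tup x)$ through $\Is$ yields an FO formula $\widetilde{\phi}(\tup x)$ over the signature of $T$ with $\phi(G^{\tup x}) = \widetilde{\phi}(T^{\tup x})$, so that both enumeration of satisfying tuples and membership testing transfer between $G$ and $T$ via the leaf bijection.

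Third, I would invoke the theorem of Kazana and Segoufin~\cite{KazanaS13}: for any fixed FO formula on structures of bounded treewidth, one can after linear-time preprocessing enumerate the satisfying tuples with constant delay and decide membership of any given tuple in constant time. Applied to $\widetilde{\phi}$ on $T$, this yields both (i) and (ii); using the leaf bijection to translate the output back to $G$ is free.

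The main obstacle, and the only super-linear step, is the $\Oh(n^2)$ computation of the cliquewidth decomposition: all downstream work is linear in $|T| = \Oh(n)$, so the total preprocessing is $\Oh(n^2)$ as claimed. The only technical point to check carefully is that the translation $\phi \mapsto \widetilde{\phi}$ is compatible with one-dimensional enumeration, which is immediate because $\Is$ is one-dimensional and interprets each vertex of $G$ as a single leaf of $T$.
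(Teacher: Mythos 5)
Your high-level plan matches the paper's: compute a bounded-width clique expression tree $T$ in $\Oh(n^2)$ via Fomin--Korhonen, interpret $G$ in $T$, then apply the Kazana--Segoufin enumeration machinery. However, there is a genuine gap in your second step. You claim there is a one-dimensional \emph{first-order} interpretation $\Is$ with $G = \Is(T)$, where $T$ is just the clique expression tree over a finite alphabet of operations. This is false in general. To decide adjacency of two leaves $u,v$ of $T$, one must trace how the labels of $u$ and of $v$ evolve under the relabeling operations along the respective root-paths, and check whether some common ancestor is a join node applied to the pair of labels they carry at that point. Tracking label evolution along a path of unbounded length is essentially running a finite automaton over the path, which is not first-order definable on the tree (it requires a form of transitive closure). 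The adjacency relation is definable in $\mathsf{MSO}$, not $\mathsf{FO}$, over $T$ --- and the paper's proof correctly writes an $\mathsf{MSO}$ formula $\alpha(y,z)$ for adjacency and obtains $\varphi'$ as an $\mathsf{MSO}$ formula by substitution.

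Consequently, you cannot invoke an FO-on-bounded-treewidth enumeration result; you must invoke the Kazana--Segoufin result for $\mathsf{MSO}$ queries on labeled trees (and, for point (ii), the corresponding $\mathsf{MSO}$ query-answering result). The rest of your argument --- the leaf bijection, the one-dimensional nature of the domain formula, the linear-time downstream work, and the $\Oh(n^2)$ bottleneck being the clique expression computation --- is correct and matches the paper. The fix is simply to replace ``FO interpretation'' with ``$\mathsf{MSO}$ interpretation with first-order free variables'' and to cite the $\mathsf{MSO}$ (rather than FO) enumeration and testing theorems.
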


We include a proof in Appendix~\ref{app:cw-metathm}.
The wanted result follows from the lemma.

\begin{claim}\label{cl:algo}
    Given $G\in \Cc$ with $n$ vertices, $\Aa(G)$ can be constructed in time $\Oh(n^2)$. 
\end{claim}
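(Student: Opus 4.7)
The plan is to proceed by induction on the shrubdepth $d$ of $\Cc$; the case $d=1$ is immediate since $\Aa(G)$ is essentially $G$ itself. For the inductive step on an input $G\in\Cc$, I would first invoke Lemma~\ref{lem:meta-thm}: bounded shrubdepth implies bounded cliquewidth, so an $\Oh(n^2)$ preprocessing of $G$ yields constant-time evaluation and constant-delay enumeration for any fixed first-order formula (where the formulas involved depend only on $\Cc$). This single preprocessing is the backbone for everything that follows.

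I would then extract the list of distinct good dicings greedily. Starting from any good tuple $\tup s_1$ (which is FO-detectable via the definability of bounded-shrubdepth classes from Corollary~\ref{cor:fo-def}, as already used in the proof of Claim~\ref{cl:trans}), I would iteratively search for a good $\tup s_{i+1}$ whose induced dicing differs from those of $\tup s_1,\ldots,\tup s_i$; the latter is a first-order condition using $\tup s_1,\ldots,\tup s_i$ as parameters. Since $\varphi$ and $\lambda$ are finitary, this loop terminates after $k'=\Oh(1)$ steps. For each representative $\tup s_i$, I would materialize the partitions $\varphi(G^{\tup x};\tup s_i)$ and $\lambda(G^{\tup x};\tup s_i)$ in $\Oh(n^2)$ time via constant-time evaluation of the defining formula on all pairs; with only $\Oh(1)$ representatives the cumulative cost remains $\Oh(n^2)$. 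The coarsest common refinements $\wh{\Pp}$ and $\wh{\Ll}$ are then obtained in $\Oh(n)$ time by labeling each vertex with a bounded-size tuple of class identifiers and bucket-sorting.

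With $\wh{\Pp}$, $\wh{\Ll}$ and the individual dicings in hand, I would directly construct all auxiliary vertices $r$, $x_L$, $y_\Pp$, $z_{\Pp,A}$, $q_{\Pp,LL'}$ and their incident arcs in $\Oh(n)$ time, using Claim~\ref{cl:outdeg} to bound the out-degree and the earlier $\Oh(n)$-size bound on $\Aa(G)$ to bound the vertex count. Finally, the recursive call $\Aa'(G[B])$ on each $B\in\wh{\Pp}$ costs $\Oh(|B|^2)$ by the inductive hypothesis, and since $\sum_B|B|=n$ we have $\sum_B|B|^2 \leq n\cdot\max_B|B| \leq n^2$, bounding the total recursion cost by $\Oh(n^2)$. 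Summing all contributions yields the desired $\Oh(n^2)$ bound.

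The main obstacle I foresee is the fine-grained accounting for the greedy enumeration of distinct dicings, because the query formula acquires a new parameter tuple at each iteration. I expect this to be handled either by re-running the $\Oh(n^2)$ preprocessing of Lemma~\ref{lem:meta-thm} a bounded number of times (once per added parameter), or by observing that one may preprocess once for a single parametrized schema with at most $k\cdot|\tup y|$ constant slots, keeping the total preprocessing cost at $\Oh(n^2)$. Everything else reduces to routine data-structural manipulation.
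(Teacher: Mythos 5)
Your proposal matches the paper's own proof: the same use of Lemma~\ref{lem:meta-thm} for the $\Oh(n^2)$ preprocessing, FO-definability of goodness via Corollary~\ref{cor:fo-def}, greedy extraction of the $k'=\Oh(1)$ distinct good dicings, materialization of the partitions via pointwise constant-time queries, and the $\sum_B |B|^2 \le n^2$ bound for the recursive calls. The concern you raise at the end about the growing parameter tuple is resolved exactly as your first suggested option: the paper marks the already-found tuples $\tup s_1,\ldots,\tup s_i$ with fresh unary colors (which keeps cliquewidth bounded, so Lemma~\ref{lem:meta-thm} still applies) and re-runs the preprocessing a constant number of times.
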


\begin{claimproof}
    As we have already observed, a tuple $\tup s$ being good can be expressed by a first-order formula.
    Therefore, we may obtain in time $\Oh(n^2)$ a good tuple $\tup s_1$.
    To obtain $\tup s_2$, one should use colors to mark $\tup s_1$, and then alter the formula to additionally require that the partition defined by $\tup s_2$ is different from the one defined by $\tup s_1$.
    Proceeding iteratively, since $k' \leq k$ is a constant, one obtains tuples $\tup s_1,\dots, \tup s_{k'}$ as in the proof of Claim~\ref{cl:trans} in time $\Oh(n^2)$.

    Then we have that $\Ff=\{(\varphi(G^{\tup x},\tup s_i),\lambda(G^{\tup x},\tup s_i)) \colon i \in \{1,\ldots,k'\}\}$, and each of the $k'$ different $(\varphi(G^{\tup x},\tup s_i),\lambda(G^{\tup x},\tup s_i))$ can be computed in $\Oh(n^2)$ by making $2n^2$ queries as in the second item of~\Cref{lem:meta-thm}.
    Once $\Ff$ is computed, partitions $\wh{\Pp}$ and $\wh{\Ll}$ can be computed in time $\Oh(n^2)$. Then vertices of the form $r$, $x_L$, $y_{\Pp}$, $z_{\Pp,A}$, and $q_{\Pp,LL'}$, as well as arcs between them, can be constructed in time $\Oh(n^2)$ by just following the definitions. Next, for every part $B\in \wh{\Pp}$ we construct $H_B=\Aa'(G[B])$ by applying the algorithm for the class $\Cc'$ provided by induction.
    This takes time $\Oh(|B|^2)$ per part $B\in \wh{\Pp}$, so time $\sum_{B\in \wh{\Pp}} \Oh(|B|^2)\leq \Oh(n^2)$ in total. Finally, arcs connecting structures $H_B$ with the other vertices can be easily constructed in time $\Oh(n^2)$. 
\end{claimproof}
   
From Claims~\ref{cl:iso-inv} and~\ref{cl:inter} it follows that for graphs $G,G'\in \Cc$, structures $\Aa(G)$ and $\Aa'(G)$ are isomorphic if and only if $G$ and $G'$ are. Together with the algorithmic statement provided by Claim~\ref{cl:algo} and the treedepth bound provided by Claim~\ref{cl:td-bnd}, we conclude with \Cref{thm:canonization}.
\else
~We do not include detailed proofs of the properties listed above, and refer instead to the attached full version. That the transformation is isomorphism-invariant and that every element is the tail of a bounded number of arcs follows directly from the construction.
Also, it is quite straightforward to transduce $\Aa(G)$ from $G$, by guessing good tuples $\tup s_1,\dots,\tup s_{k'}$; then it follows from the results of Ganian et al.~\cite{GanianHNOM19} that the class of $\DD=\{\Aa(G)\colon G\in \Cc\}$ has bounded shrubdepth. This, together with the sparsity of $\Aa(G)$ following from the bound on outdegrees, implies that $\DD$ in fact has bounded treedepth.
Further, there is no difficulty in interpreting $G$ in $\Aa(G)$. To compute $\Aa(G)$ in quadratic time, we rely on algorithmic meta-theorems over graphs of bounded cliquewidth obtained from combining~\cite{FominK22,KazanaS13}.
\Cref{thm:canonization} follows.
\fi

\subsection{Canonization and isomorphism test}\label{subsec:sd-isomorphism}

We now use \Cref{thm:canonization} to prove \Cref{thm:gi}, that is, give a quadratic-time isomorphism test for any class of graphs of bounded shrubdepth.
As mentioned in the introduction, in fact we solve the more general canonization problem, defined as follows.

For a class of structures $\Cc$, a {\em{canonization map}} for $\Cc$ is a mapping $\cn$ with the following property: for every $\str M\in \Cc$, $\cn(\str M)$ is a total order on elements of $\str M$ so that if $\str M,\str M'\in \Cc$ are isomorphic, then associating elements with same index in $\cn(\str M)$ and in $\cn(\str M')$ yields an isomorphism between $\str M$ and $\str M'$.
Note that if there is a canonization map $\cn$ for $\Cc$ that is efficiently computable, then this immediately gives an isomorphism test within the same time complexity.

For classes of bounded treedepth, Bouland et al.~\cite{BoulandDK12} gave a relatively easy fixed-parameter isomorphism test. Their techniques can be easily extended to the canonization problem for binary structures of bounded treedepth.

\begin{restatable}[Adapted from~\cite{BoulandDK12}]{theorem}{tdcanonization}\label{thm:td-canonization}
 Let $\Dd$ be a class of binary structures of bounded treedepth. There exists a canonization map $\cn$ on $\Dd$ that is computable in time $\Oh(n\log^2 n)$, where $n$ is the size of the universe of the input structure.
\end{restatable}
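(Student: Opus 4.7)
The plan is to follow closely the approach of Bouland, Dawar, and Kopczyński~\cite{BoulandDK12}, whose isomorphism test for graphs of bounded treedepth extends to canonization of binary structures with only minor modifications. The workflow has three phases: compute an elimination forest of bounded depth for the Gaifman graph of the input; canonize the forest and simultaneously the binary relations via a bottom-up labelling; finally, read off the total order from a canonical depth-first traversal.

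More concretely, fix the constant bound $d$ on the treedepth of structures in $\Dd$. Given $\str M \in \Dd$, I would first build an elimination forest $F$ of the Gaifman graph of $\str M$ of depth at most $d$, using any fixed-parameter algorithm for treedepth that runs in $\Oh(n)$ time for fixed $d$. Crucially, by definition of elimination forest, every pair of elements appearing in some binary relation of $\str M$ lies on a common root-to-leaf path of $F$; hence all relations are supported on the ancestor--descendant order of $F$. I would then define a canonical label $\ell(v)$ for each $v \in V(F)$, computed bottom up by aggregating (i) the unary type of $v$, (ii) the binary relations between $v$ and each of its at most $d{-}1$ ancestors in $F$, and (iii) the sorted multiset of labels of children of $v$. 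Equal labels witness isomorphic subtree data. The total order $\cn(\str M)$ is then the order in which elements are visited by a depth-first search of $F$ that, at each branching point, recurses into subtrees in the order of their labels, breaking equal labels consistently.

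The main obstacle is establishing \emph{canonicity}, since the elimination forest $F$ is not unique: two isomorphic structures may admit many elimination forests of minimum depth, and we must ensure we pick isomorphic ones. I would address this exactly as in~\cite{BoulandDK12}, by constructing $F$ greedily from the top: at each recursive step, the candidate ``top set'' of each connected component is constrained to the elements whose removal strictly decreases the treedepth, of which there are only $\Oh(1)$ many choices (depending on $d$). One then tries each candidate, canonizes recursively, and retains the one producing the lexicographically smallest label. This yields isomorphism-invariance, and because the branching factor is bounded in terms of $d$ only, the overhead is absorbed into the hidden constant.

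For the running time, the elimination forest is computed in $\Oh(n)$ time, and each of the $d$ levels of the bottom-up pass performs a sort of integer-valued labels with keys of length $\Oh(\log n)$; using radix sort this costs $\Oh(n \log n)$ per level. An additional logarithmic factor arises from comparing labels during the lexicographic tie-breaking on the bounded collection of candidate elimination forests. Summing over all levels yields $\Oh(d \cdot n \log^2 n) = \Oh(n \log^2 n)$ for fixed $d$, matching the bound stated in the theorem.
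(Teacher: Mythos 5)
Your overall strategy matches the paper's: both reduce to the connected case, exploit the bounded set of ``progressing'' vertices (those whose removal strictly decreases the treedepth), try each choice, recurse on the remainder (recording the binary relations between the removed vertex and the rest as extra unary/colour information), and keep the lexicographically smallest outcome. The paper packages the ancestor information by expanding the signature to include relations $R_\eta$ for each way of plugging in removed vertices; your ``aggregate the binary relations between $v$ and its $\leq d-1$ ancestors into the label'' serves the same purpose.

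However, there is a genuine gap in the running-time analysis, and it is precisely the point where the paper has to work to improve on~\cite{BoulandDK12}. You assert the candidate top set (the set $X$ of vertices whose removal decreases the treedepth) can be identified within the claimed time budget, but you give no mechanism for computing it. Testing ``$\mathrm{treedepth}(G-u) < \mathrm{treedepth}(G)$'' for a single vertex $u$ already takes $\Theta(n)$ time even using a linear-time FPT treedepth algorithm; testing all $n$ vertices naively costs $\Theta(n^2)$ per recursion level. This is consistent with the fact that~\cite{BoulandDK12} only achieve $f(d)\cdot n^3\log n$, and following them ``exactly'' as you propose would not yield $\Oh(n\log^2 n)$. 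The paper sidesteps this by observing that membership in $X$ is a first-order definable property (since bounded treedepth is characterised by finitely many forbidden induced subgraphs, Corollary~\ref{cor:fo-def}) and invoking the constant-delay enumeration of FO/MSO queries over bounded-treewidth structures due to Kazana and Segoufin~\cite{KazanaS13}, giving $X$ in $\Oh(n)$ total. Without an argument of this kind, your time bound does not follow. As a secondary point, the opening two paragraphs (compute an elimination forest once, then do one bottom-up labelling pass) do not produce an isomorphism-invariant object on their own and are superseded by the trial-and-error construction you describe afterwards; the time accounting should track the latter, including the cost of recomputing the candidate set in each recursive subinstance.
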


\iffull
A proof is included in Appendix~\ref{app:td-canonization}.  \else\fi
We can now prove the main result of this section.

\begin{theorem}\label{thm:shb-canonization}
 Let $\Cc$ be a class of graphs of bounded shrubdepth. There exists a canonization map $\cn$ on $\Cc$ that is computable in time $\Oh(n^2)$, where $n$ is the vertex count of the input graph.
\end{theorem}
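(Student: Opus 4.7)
The plan is to simply compose the two results stated immediately before the theorem: the isomorphism-invariant sparsification $\Aa\colon \Cc\to\Dd$ from Theorem~\ref{thm:canonization}, followed by the canonization map for bounded-treedepth binary structures provided by Theorem~\ref{thm:td-canonization}. Concretely, given $G\in \Cc$, first compute $\Aa(G)\in \Dd$ in time $\Oh(n^2)$; then apply the canonization map $\cn_{\Dd}$ of Theorem~\ref{thm:td-canonization} to obtain a total order on the elements of $\Aa(G)$ in time $\Oh(|\Aa(G)|\log^2 |\Aa(G)|)=\Oh(n\log^2 n)$. Since $V(G)$ sits inside the universe of $\Aa(G)$ and is distinguished by the unary ``leaf'' predicate built into the construction, we define $\cn(G)$ to be the restriction of $\cn_{\Dd}(\Aa(G))$ to $V(G)$. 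The total running time is dominated by the first step, giving the desired $\Oh(n^2)$ bound.

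The verification that this is indeed a canonization map for $\Cc$ proceeds as follows. Let $G,G'\in\Cc$ be isomorphic. By the isomorphism-invariance of $\Aa$, any isomorphism $G\to G'$ extends to an isomorphism $\Aa(G)\to\Aa(G')$; in particular $\Aa(G)$ and $\Aa(G')$ are isomorphic in $\Dd$. By the defining property of $\cn_{\Dd}$, the bijection $\pi$ matching elements of equal index in $\cn_{\Dd}(\Aa(G))$ and $\cn_{\Dd}(\Aa(G'))$ is itself an isomorphism of the binary structures $\Aa(G)$ and $\Aa(G')$. Such an isomorphism must preserve the unary leaf predicate, so $\pi$ restricts to a bijection $V(G)\to V(G')$. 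This restriction is the bijection induced by equal indices of $\cn(G)$ and $\cn(G')$, so it remains to see that it is an isomorphism of graphs.

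For this we use the third clause of Theorem~\ref{thm:canonization}: there is a simple (one-dimensional) interpretation $\Is$ with $G=\Is(\Aa(G))$ and $G'=\Is(\Aa(G'))$, where vertices of $G$ (resp.\ $G'$) are interpreted as themselves. Because interpretations commute with isomorphisms, $\pi$ viewed as an isomorphism $\Aa(G)\to\Aa(G')$ induces an isomorphism $\Is(\Aa(G))\to\Is(\Aa(G'))$, which is exactly an isomorphism $G\to G'$; and on the shared universe $V(G)$ this induced map coincides with $\pi\!\restriction_{V(G)}$. Hence $\cn$ satisfies the canonization property. Putting everything together,
\[
	\cn(G)\; \eqdef\; \cn_{\Dd}(\Aa(G))\!\restriction_{V(G)}
\]
is a canonization map for $\Cc$ computable in time $\Oh(n^2)+\Oh(n\log^2 n)=\Oh(n^2)$, finishing the proof. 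No real obstacle is expected beyond checking these two compatibilities; the ``hard'' work has already been done in establishing Theorem~\ref{thm:canonization}, and the only detail worth stating carefully is that the simplicity of the interpretation $\Is$ is what guarantees that restricting a canonical order on $\Aa(G)$ to $V(G)$ yields a canonical order on $G$.
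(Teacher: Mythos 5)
Your proposal is correct and follows exactly the paper's proof: compose the mapping $\Aa$ from Theorem~\ref{thm:canonization} with the canonization map for bounded-treedepth structures from Theorem~\ref{thm:td-canonization}, then restrict the resulting order to $V(G)$. Your extra verification that the restriction is indeed canonical — using the leaf predicate and the simplicity of $\Is$ — is a sound (and welcome) elaboration of a step the paper leaves implicit.
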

\begin{proof}
 Let $\DD$ be the class of bounded treedepth and $\Aa\colon \Cc\to \DD$ be the mapping provided by \Cref{thm:canonization} for the class $\Cc$. Then to get a suitable canonization map for $\Cc$, it suffices to compose $\Aa$ with the canonization map for $\DD$, provided by \Cref{thm:td-canonization}, and restrict the output order to the vertex set of the original graph. 
\end{proof}

As discussed, \Cref{thm:gi} follows immediately from \Cref{thm:shb-canonization}.

\iffull
\section{Canonical strategies in monadically stable classes}\label{sec:definable_strats}

In this section, we give canonical strategies in (a variant of) the Flipper game, whose termination in a bounded number of rounds characterizes monadic stability~\cite{flipper_game}.
Stating our main result requires a few definitions which are given now.

\subsection{Definitions and formal statement}

In this section, all structures are defined over the signature of graphs, possibly with additional constants and binary relations (which will always define equivalence classes of bounded index).

\paragraph*{Flips.} Let $\Pp$ be a partition of a structure $\str M$.
A $\Pp$-flip $\str M'$ of $\str M$ is any structure with the same domain, constants and equivalence relations as $\str M$, and such that there is a symmetric relation $Z \subseteq \Pp \times \Pp$ satisfying for all $u,v \in \str M$,
\[
    \textrm{$u,v$ are adjacent in $\str M'$}\quad\textrm{if and only if}\quad \textrm{($u,v$ are adjacent in $\str M$)}\ \mathsf{xor}\ \textrm{(}(\Pp(u),\Pp(v))\in Z\textrm{)},
\]
where $\Pp(x)$ denotes the part of $\Pp$ to which $x$ belongs.
Note that for a given part $A \in \Pp$, it may or may not be that $(A,A) \in Z$, giving rise to different flips.

Given a model $\str M$, a partition $\Pp$ of $\str M$, an integer $r \in \N$, and vertices $u,v \in \str M$, we say that $u$ and $v$ are $r$-independent over $\Pp$ (in $\str M$), written $u \ind[\Pp]r v$, if there is a $\Pp$-flip of $\str M$ in which there is no path of length $\leq r$ from $u$ to $v$.
Otherwise (in each $\Pp$-flip, there is a path of length $\leq r$ from $u$ to $v$), we write $u \nind[\Pp]r v$, and we let
\[
    B^r_{\Pp,\str M}(v) = \{u \in \str M \mid u \nind[\Pp]r v\}
\]
denote the corresponding ball around $v$.
The following property is easy to check.

\begin{lemma}\label{lem:flip_balls_and_substructures}
Let $\str M$ be an induced substructure of a structure $\str N$, let $\Pp$ be a partition of $\str M$ and let $\Qq$ be a partition of $\str M$ which is finer than the restriction of $\Pp$ to $\str M$.
Then for all $v \in \str M$ and all $r \in \N$, $B^r_{\Qq,\str M}(v) \subseteq B^r_{\Pp,\str N}(v)$.
\end{lemma}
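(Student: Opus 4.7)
The plan is to prove the contrapositive: if $u \ind[\Pp]r v$ in $\str N$, then $u \ind[\Qq]r v$ in $\str M$ (reading the statement with $\Pp$ being a partition of $\str N$, as required for the right-hand side $B^r_{\Pp,\str N}(v)$ to make sense). So, given $u,v \in \str M$ such that there exists a $\Pp$-flip $\str N'$ of $\str N$, witnessed by a symmetric $Z \subseteq \Pp \times \Pp$, in which there is no path of length at most $r$ from $u$ to $v$, I want to produce a corresponding $\Qq$-flip $\str M'$ of $\str M$ avoiding such a path.

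The core idea is to transfer $Z$ to $\Qq$ via the refinement. Since $\Qq$ refines the restriction of $\Pp$ to $\str M$, each part $A \in \Qq$ is contained in a unique part $A^+ \in \Pp$. I would then define a symmetric relation $Z' \subseteq \Qq \times \Qq$ by declaring $(A,B) \in Z'$ iff $(A^+,B^+) \in Z$, and let $\str M'$ be the resulting $Z'$-flip of $\str M$.

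The key verification is that $\str M'$ is an induced substructure of $\str N'$. This is a direct unfolding: for distinct $x,y \in \str M$, since $\str M$ is an induced substructure of $\str N$, adjacency of $x,y$ in $\str M'$ equals adjacency in $\str M$ xor $(\Qq(x),\Qq(y)) \in Z'$, which by definition of $Z'$ equals adjacency in $\str N$ xor $(\Pp(x),\Pp(y)) \in Z$, i.e., adjacency in $\str N'$. (The constants and equivalence relations agree by hypothesis.) Consequently, any path of length at most $r$ from $u$ to $v$ in $\str M'$ would lift to one in $\str N'$, contradicting the choice of $Z$. Hence $u \ind[\Qq]r v$ in $\str M$, as required.

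There is no substantial obstacle: the only care needed is that $Z'$ is well-defined (which is exactly the content of $\Qq$ refining $\Pp|_{\str M}$) and symmetric (inherited from $Z$), together with the routine check that the flip operation commutes with taking induced substructures under a compatible choice of flip relation.
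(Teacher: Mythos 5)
The paper omits a proof of this lemma, stating only that it ``is easy to check,'' so there is no proof to compare against. Your argument is correct and is the natural one: you push the witnessing flip relation $Z \subseteq \Pp \times \Pp$ down to a relation $Z' \subseteq \Qq \times \Qq$ via the map sending each part of $\Qq$ to the part of $\Pp$ containing it (well-defined precisely because $\Qq$ refines $\Pp|_{\str M}$), check that the resulting $\Qq$-flip of $\str M$ is an induced substructure of the given $\Pp$-flip of $\str N$ using $\Qq(x)^+ = \Pp(x)$ for $x \in \str M$, and conclude by lifting a short path. You were also right to read $\Pp$ as a partition of $\str N$; as stated the lemma says ``partition of $\str M$,'' which is a typo, since $B^r_{\Pp,\str N}(v)$ requires $\Pp$ to partition $\str N$ (and this is the form actually used in the proof of Theorem~\ref{thm:definable_strategies}, where $\Pp$ is a partition of the extension).
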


Given two partitions $\Pp,\Pp'$ of a set $X$, we let $\Pp \refine \Pp'$ denote their coarsest common refinement, meaning the partition
\[
    \Pp \refine \Pp' = \{P \cap P' \mid P \in \Pp \text{ and } P' \in \Pp'\} \setminus \{\emptyset\}.
\]
Note that if $\Pp$ and $\Pp'$ have index $\leq k$ and $\leq k'$, then $\Pp \refine \Pp'$ has index $\leq kk'$.

\paragraph*{Separation game.} The \emph{radius-$r$ separation\footnote{In~\cite{flipper_game}, this variant of the game (with index $2$) is called the ``Pseudo-Flipper game''.} game of index $k$} over a structure $\str M$ proceeds in rounds $i=1,2,\dots$ as follows.
Initially, the arena $A_1$ is the whole structure.
In the $i$-th round,
\begin{itemize}
\item Separator plays a partition $\Pp_i$ of index at most $k$;
\item Connector plays a vertex $c_i \in A_i$;
\item We set
\[
    A_{i+1} = A_i \cap B^r_{\Pp_1 \refine \dots \refine \Pp_i}(c_i),
\]
and if $|A_{i+1}|=1$, the game stops and Separator wins.
\end{itemize}
We sometimes abbreviate it as the $(r,k)$-separation game.
We also remark that unlike in the Splitter game, the ball $B^r_{\Pp_1 \refine \dots \refine \Pp_i}(c_i)$ is computed in $\str M$ and not in the current arena.

An important variant is the \emph{atomic} separation\footnote{In~\cite{flipper_game}, this variant (with tuple-size $1$) is called the ``Separation game''.} game of tuple-size $k$, where in each round $i$, Separator picks a tuple $\tup s_i$ of at most $k$ vertices and sets $\Pp_i$ to be the partition into atomic edge-types over $\tup s_i$ (whose index is $\leq 2^k + k$).

Note that for the separation game, the index $k$ is somewhat irrelevant: if Separator wins the $(r,k)$-game in at most $\ell$ rounds, then he wins the $(r,2)$-game in at most $k \ell$ rounds.
A similar remark holds for the atomic version; however the fixed finite index is important in our definable version (see below).
One of the main results of~\cite{flipper_game} is that the (atomic) separation game characterizes monadic stability.

\begin{theorem}[Theorem 1.4 in \cite{flipper_game}]
    Let $\Cc$ be a class of graphs. The following are equivalent:
    \begin{enumerate}
        \item $\Cc$ is monadically stable;
        \item for every $r$ there is $\ell$ such that on all graphs of $\Cc$, Separator wins the radius-$r$ separation game of index $2$ in $\ell$ rounds;
        \item for every $r$ there is $\ell$ such that on all graphs of $\Cc$, Separator wins the atomic radius-$r$ separation game with tuple size $1$ in $\ell$ rounds.
    \end{enumerate}
\end{theorem}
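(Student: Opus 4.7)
The plan is to establish the cycle of implications $(3) \Rightarrow (2) \Rightarrow (1) \Rightarrow (3)$. The first two implications calibrate the two game variants and relate termination to monadic stability, while the bulk of the technical work lies in $(1) \Rightarrow (3)$, which is the core content of \cite{flipper_game} and is invoked as a black box in this paper.

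For $(3) \Rightarrow (2)$, I would use a direct simulation. The atomic $\{s\}$-type partition has at most three classes, namely $\{s\}$, $N(s)$ and $V\setminus N[s]$, and is the common refinement of two index-$2$ partitions, e.g.\ $\{\{s\},\, V\setminus\{s\}\}$ and $\{\{s\}\cup N(s),\, V\setminus N[s]\}$. Separator plays these two index-$2$ partitions in succession; Connector is allowed an extra move in between, but this can only shrink the arena further, which is only favorable for Separator. Each atomic round is thus simulated by at most two index-$2$ rounds, yielding a winning strategy in $\leq 2\ell$ rounds.

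For $(2) \Rightarrow (1)$, I would argue by contrapositive. If $\Cc$ is not monadically stable, some unary lift $\Cc^+$ admits a formula $\varphi(\bar x;\bar y)$ with unbounded ladder indices. Each unary predicate of the lift is encoded by an index-$2$ partition of the graph vertices. By absorbing these predicate-partitions into the flips respected by the separation game, and by using Gaifman locality together with a Ramsey-type pigeonhole over Separator's index-$2$ choices, Connector can maintain a long $\varphi$-ladder inside the successive balls $B^r_{\Pp_1\refine\ldots\refine\Pp_i}(c_i)$ for arbitrarily many rounds. For a sufficiently long starting ladder, this contradicts the existence of a uniform round bound $\ell$.

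For $(1) \Rightarrow (3)$, the main obstacle, one must design a winning strategy for Separator. The key lemma states: for every graph $G\in\Cc$ and every flip partition $\Qq$ already accumulated, if the current arena has at least two elements, then there is a vertex $s\in G$ such that for every Connector response $c$, the refined ball $B^r_{\Qq \refine \Pp_s}(c)$ is strictly smaller (or more precisely, has strictly smaller separator rank) than $B^r_{\Qq}(c)$, where $\Pp_s$ is the atomic $\{s\}$-type partition. The existence of such an $s$ is established by contradiction: the absence of such a vertex in a sufficiently large ball would allow one to extract, via Ramsey and compactness, an indiscernible sequence witnessing the instability of a formula definable in a suitable unary lift of $\Cc$, contradicting monadic stability. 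A well-founded induction on separator rank, together with a compactness argument to pass from a pointwise decrease to a uniform bound, then produces the desired $\ell$ depending only on $r$ and $\Cc$. This is the step that genuinely uses the sophisticated indiscernibility machinery developed in \cite{flipper_game}.
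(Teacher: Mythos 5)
This theorem is cited from~\cite{flipper_game} (their Theorem~1.4) and the present paper gives no proof of it; it is used as a black box. So there is no ``paper's own proof'' to compare against. That said, a few remarks on your reconstruction.

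Your logical architecture --- the cycle $(3)\Rightarrow(2)\Rightarrow(1)\Rightarrow(3)$ --- is sound, and you are right that $(1)\Rightarrow(3)$ is the genuinely hard direction carrying the main indiscernibility machinery of~\cite{flipper_game}. The $(3)\Rightarrow(2)$ step is correct and is in fact explicitly remarked in this paper just before the theorem statement: since the atomic tuple-size-$1$ partition has index at most $3$ and factors as a coarsest common refinement of two index-$2$ partitions, Separator can split each atomic move into two index-$2$ moves, and the extra intermediate Connector move only shrinks the arena (the accumulated refinement after two simulated rounds equals the accumulated atomic refinement, so each arena in the simulated play is contained in the corresponding atomic arena). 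For $(2)\Rightarrow(1)$, your contrapositive direction is the right one, but as written it is a heuristic gesture rather than an argument: how one ``absorbs predicate-partitions into the flips'' and how Gaifman locality plus Ramsey yields a surviving $\varphi$-ladder inside the shrinking balls is precisely where the work is, and it is not developed here. Similarly, your proposed key lemma for $(1)\Rightarrow(3)$ (a vertex $s$ that strictly drops ``separator rank'' uniformly over all Connector responses) is a plausible shape but is not obviously the one used in~\cite{flipper_game}; you are honest that this is taken as a black box. In short: the reduction $(3)\Rightarrow(2)$ is correct and matches the paper's own remark; the other two implications are, respectively, a reasonable but undeveloped sketch and an explicit citation, consistent with how the present paper itself treats the result.
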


We may now state our main result in this section, which expresses the existence of canonical Separator-strategies in monadically stable classes.

\begin{theorem}[Canonical Separator-strategies]\label{thm:definable_strategies}
Let $\Cc$ be a monadically stable class and let $r \geq 1$ and $\ell$ be such that on all graphs of $\Cc$, Separator wins the atomic radius-$r$ separation game of tuple-size $1$ in $\ell$ rounds.
There are formulas $\delta_1(\tup x;y_1),\delta_2(\tup x; y_1, y_2),\dots, \delta_\ell(\tup x; y_1, \dots, y_\ell)$ defining partitions of index $\leq k$ for some integer $k$, such that on all graphs $\str M \in \Cc$, Separator wins the $(r,k)$-separation game in $\ell$ rounds by playing, in round $i$, the partition $\Pp_i=\delta_i(\str M^{\tup x}; c_1, \dots, c_i)$, where $c_1,\dots,c_i$ are the previous connector moves.
\end{theorem}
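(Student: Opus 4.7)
I proceed by induction on the round number $i$, at each stage applying the Finitary Substitute Lemma (\Cref{lem:finitary-substitute}) to canonicalize Separator's move. The first observation is that the property ``Separator wins the remaining atomic radius-$r$ game of tuple-size $1$ in at most $j$ rounds from a given state $(A, \Pp)$'' is first-order expressible, in parameters encoding $A$ and $\Pp$ via the previously played tuples, because balls with flips are first-order definable. Moreover, by \Cref{lem:flip_balls_and_substructures} this property is monotone: shrinking $A$ or refining $\Pp$ preserves winning, and in particular, it is hereditary.

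To construct $\delta_1, \dots, \delta_\ell$, assume that $\delta_1, \dots, \delta_{i-1}$ have already been defined, and work in the class $\Cc_{i-1}$ obtained from $\Cc$ by adjoining constants $c_1, \dots, c_{i-1}$; this class remains monadically stable. In a structure $(\str M, \tup c) \in \Cc_{i-1}$, let $\Pp^{\mathrm{can}}_j = \delta_j(\str M^{\tup x}; c_1, \dots, c_{j-1})$ for $j < i$ denote the canonical partitions, and $A^{\mathrm{can}}_i$ the resulting arena. The inductive hypothesis together with monotonicity ensures that Separator still has a winning atomic move from $(A^{\mathrm{can}}_i, \Pp^{\mathrm{can}}_1 \refine \dots \refine \Pp^{\mathrm{can}}_{i-1})$: there exists $s \in \str M$ such that $\alpha(\str M^{\tup x}; s)$, the formula expressing ``same atomic edge-type over $s$'', is a winning $\Pp_i$. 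This yields an assertion $\exists s.\, \psi[R/\alpha(\tup x; s)]$, where $\psi$ is the hereditary first-order sentence stating ``$R$ is an equivalence relation of index $\leq 3$, and playing $R$ as $\Pp_i$ leads to a winning state in the remaining $\ell - i$ rounds''.

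Applying \Cref{lem:finitary-substitute} to the stable formula $\alpha$ and the hereditary sentence $\psi$, I obtain a finitary formula $\delta_i'(\tup x; \tup z)$ such that in every $(\str M, \tup c) \in \Cc_{i-1}$, some $\tup r$ makes $\delta_i'(\str M^{\tup x}; \tup r)$ a winning partition for round $i$. I then define $\delta_i(\tup x; c_1, \dots, c_{i-1})$ as the common refinement of the partitions $\delta_i'(\str M^{\tup x}; \tup r)$ over all ``valid'' $\tup r$, i.e.,
\[
    \delta_i(x_1, x_2; \tup c) \;\equiv\; \forall \tup r.\,\bigl(\psi[R/\delta_i'(\cdot; \tup r)] \to \delta_i'(x_1, x_2; \tup r)\bigr).
\]
Since $\delta_i'$ is finitary, this intersection is uniformly of bounded size across all models, so $\delta_i(\cdot; \tup c)$ defines a partition of bounded index; and by monotonicity (a refinement of a winning partition is still winning), it is a winning move.

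The main obstacle will be carefully maintaining the inductive hypothesis that the canonical play keeps Separator in a state at least as good as some atomic play, and thus still winning. This rests squarely on the monotonicity of winning under arena-shrinkage and partition-refinement, which follows from \Cref{lem:flip_balls_and_substructures}. A further technicality is the treatment of the Connector moves $c_1, \dots, c_{i-1}$ as constants via the auxiliary class $\Cc_{i-1}$, so that \Cref{lem:finitary-substitute} applies with a parameter tuple that may itself encode them.
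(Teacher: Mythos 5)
The core difficulty your proof does not address is that your sentence $\psi$ --- ``$R$ is a partition of index $\leq 3$ and playing $R$ as $\Pp_i$ leads to a win in the remaining $\ell-i$ rounds of the atomic game of tuple-size $1$'' --- is \emph{not} hereditary, and \Cref{lem:flip_balls_and_substructures} alone does not make it so. That lemma shows that balls shrink when one passes to induced substructures and refines the partition, so the arenas a configuration produces can only get smaller; but $\psi$ also universally quantifies over future Connector moves and existentially quantifies over future atomic Separator moves. When we pass from $\str N$ to an induced (or elementary) substructure $\str M$, a future atomic move of Separator in $\str N$ is a tuple $\tup t$ that may lie entirely outside $\str M$, so the partition it induces need not be realized by any tuple of the same length \emph{inside} $\str M$. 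Consequently the ``winning in $\ell-i$ rounds with tuple-size $1$'' clause is lost, and you cannot invoke \Cref{lem:finitary-substitute} with $\psi' = \psi$.

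This is precisely why the paper's proof does \emph{not} treat $\psi$ as hereditary. It uses the weaker notion ``$\psi_i$ induces $\psi'_i$ on semi-elementary substructures,'' where $\psi'_i$ is \emph{not} the same sentence: its atomic tuple-size is inflated from $d^{i-1}$ to $d^i$ (with $d$ being one more than twice the ladder index). The induction step hinges on the definability of types (\Cref{thm:definability-of-types}): an atomic partition defined over $\str N$ by a tuple $\tup t \in \str N^{k}$ restricts on $\str M$ to a partition that \emph{is} definable from parameters in $\str M$, but only by a boolean combination using $(2d+1)\cdot k$ parameters --- hence a coarsening into atomic types over a tuple of size $d\cdot k$. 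This is the mechanism that lets one pull a strategy back to the substructure, at the price of growing the tuple-size at each round. Your proposal, keeping tuple-size fixed at $1$ and index fixed at $\leq 3$, has no analogue of this step, so the inductive invariant cannot be maintained. Your idea of taking $\delta_i$ as a universal intersection over ``valid'' parameters is correct and matches the paper, as is moving Connector moves into the signature as constants; what is missing is the machinery (definability of types plus the $\psi \Rightarrow \psi'$ variant of the Finitary Substitute Lemma) that compensates for the failure of hereditarity.
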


\subsection{Proof of Theorem~\ref{thm:definable_strategies}}

We fix a monadically stable class of graphs $\Cc$ and $r,\ell \geq 1$ such that on all graphs of $\Cc$, Separator wins the atomic radius-$r$ separation game of tuple size $1$ in $\ell$ rounds.
Let us observe that this can be expressed by a first-order sentence.
We let $T$ be the theory of $\Cc$ and $d$ be twice the ladder index of $T$ plus 1.

\paragraph*{Configurations.} A \emph{configuration after $i$ rounds} in the $r$-separation game over $\str M$ is a collection of $i$ partitions $\Pp_1,\dots,\Pp_i$ and $i$ connector moves $c_1,\dots,c_i$.
We write it $C_i=(\Pp_1,c_1,\Pp_2,c_2,\dots,\Pp_i,c_i)$.
Such a collection is \emph{winning} if either of the two following holds:
\begin{itemize}
\item for some $j \in \{1,\dots, i\}$, connector's move $c_j$ does not belong to the arena at time $j$,
\[
    A_j = \bigcap_{t =1}^{j-1} B_{\Pp_1 \refine \dots \refine \Pp_{t}}^r(c_t)
\]
\item $A_{i+1}$ is a singleton.
\end{itemize}
This can be expressed by a first-order sentence (over the signature $\{E, \Pp_1,\dots, \Pp_i,c_1,\dots,c_i\}$).
The $A_j$'s as defined above are called the \emph{arenas associated to the configuration $C_i$.}

We say that a configuration after $i$ rounds $C_i$ \emph{leads to a victory in $\ell$ rounds in the atomic game of tuple size $k$} if, starting from $C_i$, separator wins the atomic game of tuple size $k$ with $\ell -i$ remaining moves.
More formally, for all $j \in \{i+1,\dots, \ell\}$ and for all sequences $c_{i+1},\dots, c_{j}$ of additional connector moves, there is a tuple $t_{c_{i+1} \dots c_{j}}$ with corresponding partition $\Pp_{c_{i+1} \dots c_{j}}$, defined so that for any sequence $c_{i+1},\dots, c_{\ell}$, the configuration 
\[
    (c_1,\Pp_1,\dots,c_i,\Pp_i,c_{i+1},\Pp_{c_{i+1}},c_{i+2},\Pp_{c_{i+1}c_{i+2}},\dots, c_\ell,\Pp_{c_{i+1}\dots c_{\ell}})
\]
is winning.
Yet again, this may be expressed by a first-order sentence over the signature $\{E,\Pp_1,\dots,\Pp_i,c_1,\dots,c_i\}$.

\paragraph*{Induction hypothesis.} We will prove the theorem by induction, with the following hypothesis~$\Hh_i$:

\begin{center}
there are formulas $\delta_1(\tup x; y_1), \dots, \delta_i(\tup x; y_1,\dots,y_i)$ defining partitions of finite index such that $T$ implies the sentence
``for all connector moves $c_1,\dots, c_i$, the configuration after $i$ rounds $(c_1,\Pp_1,\dots,c_i,\Pp_i)$ leads to a victory in $\ell$ rounds in the atomic game of tuple size $(d)^i$, where for all $j \in \{1,\dots,i\}$, $\Pp_j$ is the partition defined by $\delta_j(\tup x; c_1, \dots, c_j)$.''
\end{center}
As required, $\Hh_\ell$ implies the theorem, with $k$ being the maximum index of $\delta_1,\dots,\delta_\ell$ (this depends only on $T$ since having index $\leq p$ can be expressed by a first-order formula).
Note also that $\Hh_0$ corresponds to our working assumption that Separator wins the atomic radius-$r$ game of tuple size $1$ in $\ell$ rounds; hence there remains to establish the inductive step.

\paragraph*{Inductive step.} Let $i \in \{1,\dots, \ell\}$, and assume that $\Hh_{i-1}$ holds, we aim to prove $\Hh_i$.
We let $\psi_i$ be a sentence over the signature $\{E \cup R\}$, where $R$ is a fresh symbol of arity $2$, expressing
\begin{center}
$R$ is a partition such that for all connector moves $c_1,\dots,c_{i-1},c_{i}$, the configuration after $i$ rounds $(c_1,\Pp_1,\dots,c_{i-1},\Pp_{i-1},c_i,R)$, where for $j \in \{1,\dots,i-1\}$, $\Pp_j$ is the partition defined by $\delta_j(\tup x,c_1, \dots, c_j)$, leads to a victory in $\ell$ rounds in the atomic game of tuple size $(d)^{i-1}$.
\end{center}
Let $\phi(\tup x; \tup y)$ be a sentence defining the partition into atomic $\tup y$-types, where $|\tup y|=(d)^{i-1}$.
Our inductive hypothesis $\Hh_{i-1}$ tells us that $T$ implies the sentence $\exists \tup s.\psi[R(\tup x) / \phi(\tup x; \tup s)]$.
We now define $\psi'_i$ to be exactly like $\psi_i$, except that the tuple size (at the very end) is increased to $d^i$.

\begin{claim}
In the theory $T$, $\psi_i$ induces $\psi'_i$ on semi-elementary substructures.
\end{claim}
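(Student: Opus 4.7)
The plan is to simulate the $\str N$-Separator's winning strategy at tuple-size $d^{i-1}$ by an $\str M$-Separator strategy at tuple-size $d^i$, using definability of types to translate the parameters from $\str N$ into $\str M$. Fix a model $\str M \prec \str N$ of $T$, a partition $\Rr$ of $\str N$ with $\str N[R/\Rr] \models \psi_i$, and arbitrary connector moves $c_1,\dots,c_\ell \in \str M$. By $\str N \models \psi_i$, the $\str N$-Separator wins the atomic game of tuple size $d^{i-1}$ starting from the $i$-th-round configuration determined by $\Rr$ and the $c_k$'s; call his response to $c_{i+1},\dots,c_j$ the tuple $\tup t_j \in \str N^{d^{i-1}}$, inducing the atomic-type partition $\Pp_j^{\str N}$. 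The task is to construct tuples $\tup t'_j \in \str M^{d^i}$ (depending only on $c_{i+1},\dots,c_j$) so that the resulting $\str M$-configuration is also winning.

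To build $\tup t'_j$, I apply \Cref{thm:definability-of-types} to each entry $(t_j)_k$ of $\tup t_j$ with respect to the stable formula $E(x;y)$, whose ladder index is at most $(d-1)/2$ in $T$. This yields, for each $k$, a tuple of $d$ parameters in $\str M$ such that the atomic type over them refines the trace on $\str M$ of $E(x; (t_j)_k)$; when $(t_j)_k$ happens to belong to $\str M$, I may simply include $(t_j)_k$ itself, staying within the budget. Concatenating these $d$-tuples over all $d^{i-1}$ entries (padding with repeats if necessary) gives $\tup t'_j \in \str M^{d^i}$, and the corresponding atomic-type partition $\Pp_j^{\str M}$ in $\str M$ refines the trace $\Pp_j^{\str N}|_{\str M}$. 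The analogous refinement is immediate for rounds $j<i$ (by elementarity applied to $\delta_j$, in fact giving equality) and for $j=i$ (by construction of $\Rr|_{\str M}$). Hence the iterated coarsest refinement $\Pp_1^{\str M} \refine \cdots \refine \Pp_j^{\str M}$ refines $(\Pp_1^{\str N} \refine \cdots \refine \Pp_j^{\str N})|_{\str M}$ at every round.

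A straightforward induction using \Cref{lem:flip_balls_and_substructures} then yields $A_j^{\str M} \subseteq A_j^{\str N} \cap \str M$ for every $j \leq \ell + 1$: refined partitions produce smaller $r$-balls. The winning condition for the $\str N$-configuration now splits into two cases: either $c_j \notin A_j^{\str N}$ for some $j$, in which case $c_j \notin A_j^{\str M}$ as well and the $\str M$-configuration is winning; or $|A_{\ell+1}^{\str N}| = 1$, which forces $|A_{\ell+1}^{\str M}| \leq 1$. In the sub-case $A_{\ell+1}^{\str M} = \emptyset$, this can only happen because some earlier $c_{j'}$ fell outside the $\str M$-arena, once again yielding a win. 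The main obstacle is identifying exactly why the blow-up $d^{i-1} \to d^i$ is the right one: it matches precisely the $d = 2(\text{ladder index})+1$ parameters per entry of $\tup t_j$ produced by \Cref{thm:definability-of-types}, and the rest of the argument is bookkeeping ensuring that refinements at each round propagate to containment of arenas.
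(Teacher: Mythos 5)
Your proof is correct and follows essentially the same route as the paper's: both pull back the $\str N$-Separator's response tuples into $\str M$ via definability of types (\Cref{thm:definability-of-types}), at a cost of $d=2(\text{ladder index})+1$ parameters per entry so that the tuple size grows from $d^{i-1}$ to $d^i$, and then invoke \Cref{lem:flip_balls_and_substructures} to propagate refinement of partitions into containment of arenas and transfer the winning condition. You are somewhat more explicit than the paper about the entry-by-entry bookkeeping and about the sub-case $A_{\ell+1}^{\str M}=\emptyset$, but the argument is the same.
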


\begin{claimproof}
Let $\str M$ be a model of $T$, $\str N$ be an elementary extension of $\str M$ and $\Rr$ a partition such that $\str N[R / \Rr] \models \psi_i$: for all connector moves $c_1,\dots,c_{i-1},c_i \in \str N$, the configuration after $i$ rounds $(c_1,\Pp_1,\dots,c_{i-1},\Pp_{i-1},c_i,\Rr)$, where for $j \in \{1,\dots,i-1\}$, $\Pp_j$ is the partition defined by $\delta_j(\tup x,c_1 \dots c_j)$, leads to a victory in $\ell$ rounds in the atomic game of tuple size $(d)^{i-1}$.
Unpacking the definition, we get, for each $j \in \{i+1,\dots,\ell\}$, and for all sequences $c_{i+1},\dots,c_{j} \in \str N$ of connector moves, a tuple $\tup t_{c_{i+1} \dots c_j}\in \str N^{d^{i-1}}$ with corresponding partition $\Pp_{c_{i+1} \dots c_j}$ of atomic types over $\str N$, such that for any sequence $c_{i+1},\dots, c_{\ell} \in \str N$, the configuration 
\[
    C_{c_{i+1} \dots c_\ell} =(c_1,\Pp_1,\dots,c_{i-1},\Pp_{i-1},c_i,\Rr,c_{i+1},\Pp_{c_{i+1}},c_{i+2},\Pp_{c_{i+1}c_{i+2}},\dots, c_\ell,\Pp_{c_{i+1}\dots c_{\ell}})
\]
is winning.
Using definability of types (\Cref{thm:definability-of-types}) we get, for each $j \in \{i+1,\dots, \ell\}$ and each sequence $c_{i+1},\dots, c_j$ of connector moves in $\str M$, a tuple $\tup t'_{c_{i+1} \dots c_j} \in \str M^{d^i}$ whose corresponding partition $\Qq_{c_{i+1} \dots c_j}$ into atomic types over $\str M$ is finer than the restriction of $\Pp_{c_{i+1} \dots c_j}$ to $\str M$.
Now for a given sequence $c_{i+1},\dots,c_\ell \in \str M$ of connector moves, Lemma~\ref{lem:flip_balls_and_substructures} tells us that the arenas $A'_1,A'_2,\dots, A'_\ell$ associated to the configuration 
\[
    C'_{c_{i+1} \dots c_\ell}=(c_1,\Pp_1|_{\str M},\dots,c_{i-1},\Pp_{i-1}|_{\str M},c_i,\Rr|_{\str M},c_{i+1},\Qq_{c_{i+1}},c_{i+2},\Qq_{c_{i+1}c_{i+2}},\dots, c_\ell,\Qq_{c_{i+1}\dots c_{\ell}})
\]
are included (for all $j$, $A'_j \subseteq A_j$) in the arenas $A_1,A_2, \dots, A_\ell$ associated to $C_{c_{i+1} \dots c_\ell}$.
Hence, $C'_{c_{i+1} \dots c_\ell}$ is winning.
Since for all $j \in \{1,\dots, i-1\}$ it holds by definition of elementary extensions that $\Pp_{i-1}|_{\str M}$ coincides with the partition defined by $\delta_i(\tup x; c_1, \dots, c_j)$ over $\str M$, we get that $\str M[R / \Rr|_{\str M}] \models \psi'$, proving the claim.
\end{claimproof}

Thanks to the claim, and since $T$ implies $\exists \tup s. \psi[R(\tup x) / \phi(\tup x; \tup s)]$ as observed above, we may apply the Finitary Substitute Lemma (Lemma~\ref{lem:finitary-substitute}) and we get a finitary formula $\phi'(\tup x; \tup z)$ and a tuple $\tup r \in \str M^{\tup z}$ such that $T \models \exists \tup s. \psi'[R(\tup x) / \phi'(\tup x; \tup s)]$ and the partitions defined on $\str M$ by $\phi(\tup x, \tup s)$ and $\phi'(\tup x, \tup r)$ coincide.
In particular, the index of the partition defined by $\phi'(\tup x, \tup r)$ is bounded by $2^{d^{i-1}} + d^{i-1}$.
We let $\phi''(\tup x; \tup z)$ be a formula expressing that ``$\phi'(\tup x; \tup z)$ holds or the partition $\phi'(\cdot; \tup z)$ has index $> 2^{(d)^{i-1}} + d^{i-1}$''.
Note that $\phi''$ is finitary, as for some tuples it defines the same partition as $\phi$ and for others it defines the full partition of index $1$.
We now let 
\[
    \delta_{i}(\tup x) = \forall \tup y. \phi''(\tup x; \tup y).
\]
Then $\delta_i(\str M^{\tup x})$ is a partition which is finer than each of the finitely many partitions $\phi''_i(\tup x; \tup z)$, all of which have index $\leq 2^{d^{i-1}} + d^{i-1}$.
It follows that $\delta_i(\str M^{\tup x})$ has finite index and since $T \models \exists \tup s. \psi'[R(\tup x)/\phi'(\tup x; \tup s)]$, we get that $\Hh_i$ holds.

\else \fi

\bibliography{ref}

\begin{thebibliography}{10}

\bibitem{warwick}
{Algorithms, Logic and Structure Workshop in Warwick -- Open Problem Session}.
\newblock
  \url{https://warwick.ac.uk/fac/sci/maths/people/staff/daniel_kral/alglogstr/openproblems.pdf},
  2016.

\bibitem{AdlerA14}
Hans Adler and Isolde Adler.
\newblock Interpreting nowhere dense graph classes as a classical notion of
  model theory.
\newblock {\em Eur. J. Comb.}, 36:322--330, 2014.
\newblock \href {https://doi.org/10.1016/j.ejc.2013.06.048}
  {\path{doi:10.1016/j.ejc.2013.06.048}}.

\bibitem{boundedLocalCliquewidth}
{\'{E}}douard Bonnet, Jan Dreier, Jakub Gajarsk{\'{y}}, Stephan Kreutzer,
  Nikolas M{\"{a}}hlmann, Pierre Simon, and Szymon Torunczyk.
\newblock Model {C}hecking on {I}nterpretations of {C}lasses of {B}ounded
  {L}ocal {C}liquewidth.
\newblock In {\em {LICS} '22: 37th Annual {ACM/IEEE} Symposium on Logic in
  Computer Science}, pages 54:1--54:13. {ACM}, 2022.
\newblock \href {https://doi.org/10.1145/3531130.3533367}
  {\path{doi:10.1145/3531130.3533367}}.

\bibitem{BoulandDK12}
Adam Bouland, Anuj Dawar, and Eryk Kopczy\'nski.
\newblock On {T}ractable {P}arameterizations of {G}raph {I}somorphism.
\newblock In {\em 7th International Symposium on Parameterized and Exact
  Computation, {IPEC} 2012}, volume 7535 of {\em Lecture Notes in Computer
  Science}, pages 218--230. Springer, 2012.
\newblock \href {https://doi.org/10.1007/978-3-642-33293-7\_21}
  {\path{doi:10.1007/978-3-642-33293-7\_21}}.

\bibitem{Braunfeld_2021}
Samuel Braunfeld and Michael~C. Laskowski.
\newblock Characterizations of monadic {NIP}.
\newblock {\em Transactions of the American Mathematical Society, Series B},
  8(30):948--970, 2021.
\newblock \href {https://doi.org/10.1090/btran/94}
  {\path{doi:10.1090/btran/94}}.

\bibitem{braunfeld2022existential}
Samuel Braunfeld and Michael~C. Laskowski.
\newblock Existential characterizations of monadic {NIP}.
\newblock {\em CoRR}, abs/2209.05120, 2022.
\newblock \href {http://arxiv.org/abs/2209.05120} {\path{arXiv:2209.05120}},
  \href {https://doi.org/10.48550/arXiv.2209.05120}
  {\path{doi:10.48550/arXiv.2209.05120}}.

\bibitem{abs-2209-11229}
Samuel Braunfeld, Jaroslav Ne\v{s}et\v{r}il, Patrice {Ossona de Mendez}, and
  Sebastian Siebertz.
\newblock Decomposition horizons: from graph sparsity to model-theoretic
  dividing lines.
\newblock {\em CoRR}, abs/2209.11229, 2022.
\newblock \href {http://arxiv.org/abs/2209.11229} {\path{arXiv:2209.11229}},
  \href {https://doi.org/10.48550/arXiv.2209.11229}
  {\path{doi:10.48550/arXiv.2209.11229}}.

\bibitem{ChenCDFHNPPSWZ21}
Jiehua Chen, Wojciech Czerwi\'nski, Yann Disser, Andreas~Emil Feldmann, Danny
  Hermelin, Wojciech Nadara, Marcin Pilipczuk, Micha\l{} Pilipczuk, Manuel
  Sorge, Bartlomiej Wr{\'{o}}blewski, and Anna Zych{-}Pawlewicz.
\newblock Efficient fully dynamic elimination forests with applications to
  detecting long paths and cycles.
\newblock In {\em The 2021 {ACM-SIAM} Symposium on Discrete Algorithms, {SODA}
  2021}, pages 796--809. {SIAM}, 2021.
\newblock \href {https://doi.org/10.1137/1.9781611976465.50}
  {\path{doi:10.1137/1.9781611976465.50}}.

\bibitem{ChenF20}
Yijia Chen and J{\"{o}}rg Flum.
\newblock {FO}-{D}efinability of {S}hrub-{D}epth.
\newblock In {\em 28th {EACSL} Annual Conference on Computer Science Logic,
  {CSL} 2020}, volume 152 of {\em LIPIcs}, pages 15:1--15:16. Schloss Dagstuhl
  --- Leibniz-Zentrum f{\"{u}}r Informatik, 2020.
\newblock \href {https://doi.org/10.4230/LIPIcs.CSL.2020.15}
  {\path{doi:10.4230/LIPIcs.CSL.2020.15}}.

\bibitem{chernikov_lecture_notes}
Artem Chernikov.
\newblock Lecture notes on stability theory.
\newblock AMS Open Math Notes (OMN: 201905.110792), 2019.

\bibitem{Ding92}
Guoli Ding.
\newblock Subgraphs and well-quasi-ordering.
\newblock {\em J. Graph Theory}, 16(5):489--502, 1992.
\newblock \href {https://doi.org/10.1002/jgt.3190160509}
  {\path{doi:10.1002/jgt.3190160509}}.

\bibitem{strnd}
Jan Dreier, Nikolas M{\"{a}}hlmann, and Sebastian Siebertz.
\newblock First-order model checking on structurally sparse graph classes.
\newblock {\em CoRR}, abs/2302.03527, 2023.
\newblock \href {http://arxiv.org/abs/2302.03527} {\path{arXiv:2302.03527}},
  \href {https://doi.org/10.48550/arXiv.2302.03527}
  {\path{doi:10.48550/arXiv.2302.03527}}.

\bibitem{dreier2022indiscernibles}
Jan Dreier, Nikolas M{\"{a}}hlmann, Sebastian Siebertz, and Szymon Torunczyk.
\newblock Indiscernibles and wideness in monadically stable and monadically
  {NIP} classes.
\newblock {\em CoRR}, abs/2206.13765, 2022.
\newblock \href {http://arxiv.org/abs/2206.13765} {\path{arXiv:2206.13765}},
  \href {https://doi.org/10.48550/arXiv.2206.13765}
  {\path{doi:10.48550/arXiv.2206.13765}}.

\bibitem{DvorakGT12}
Zdenek Dvo\v{r}{\'{a}}k, Archontia~C. Giannopoulou, and Dimitrios~M. Thilikos.
\newblock Forbidden graphs for tree-depth.
\newblock {\em Eur. J. Comb.}, 33(5):969--979, 2012.
\newblock \href {https://doi.org/10.1016/j.ejc.2011.09.014}
  {\path{doi:10.1016/j.ejc.2011.09.014}}.

\bibitem{FominK22}
Fedor~V. Fomin and Tuukka Korhonen.
\newblock Fast {FPT}-approximation of branchwidth.
\newblock In {\em 54th Annual {ACM} {SIGACT} Symposium on Theory of Computing,
  {STOC} '22}, pages 886--899. {ACM}, 2022.
\newblock \href {https://doi.org/10.1145/3519935.3519996}
  {\path{doi:10.1145/3519935.3519996}}.

\bibitem{GajarskyH12}
Jakub Gajarsk{\'{y}} and Petr Hlin\v{e}n{\'{y}}.
\newblock Faster {D}eciding {MSO} {P}roperties of {T}rees of {F}ixed {H}eight,
  and {S}ome {C}onsequences.
\newblock In {\em {IARCS} Annual Conference on Foundations of Software
  Technology and Theoretical Computer Science, {FSTTCS} 2012}, volume~18 of
  {\em LIPIcs}, pages 112--123. Schloss Dagstuhl --- Leibniz-Zentrum f{\"{u}}r
  Informatik, 2012.
\newblock \href {https://doi.org/10.4230/LIPIcs.FSTTCS.2012.112}
  {\path{doi:10.4230/LIPIcs.FSTTCS.2012.112}}.

\bibitem{GajarskyKNMPST20}
Jakub Gajarsk{\'{y}}, Stephan Kreutzer, Jaroslav Ne\v{s}et\v{r}il, Patrice
  {Ossona de Mendez}, Micha\l{} Pilipczuk, Sebastian Siebertz, and Szymon
  Toru\'nczyk.
\newblock First-{O}rder {I}nterpretations of {B}ounded {E}xpansion {C}lasses.
\newblock {\em {ACM} Trans. Comput. Log.}, 21(4):29:1--29:41, 2020.
\newblock \href {https://doi.org/10.1145/3382093} {\path{doi:10.1145/3382093}}.

\bibitem{flipper_game}
Jakub Gajarsk{\'{y}}, Nikolas M{\"{a}}hlmann, Rose McCarty, Pierre Ohlmann,
  Micha\l{} Pilipczuk, Wojciech Przybyszewski, Sebastian Siebertz, Marek
  Sokolowski, and Szymon Torunczyk.
\newblock Flipper games for monadically stable graph classes.
\newblock {\em CoRR}, abs/2301.13735, 2023.
\newblock \href {http://arxiv.org/abs/2301.13735} {\path{arXiv:2301.13735}},
  \href {https://doi.org/10.48550/arXiv.2301.13735}
  {\path{doi:10.48550/arXiv.2301.13735}}.

\bibitem{GanianHNOM19}
Robert Ganian, Petr Hlin\v{e}n{\'{y}}, Jaroslav Ne\v{s}et\v{r}il, Jan
  Obdr\v{z}{\'{a}}lek, and Patrice {Ossona de Mendez}.
\newblock Shrub-depth: Capturing {H}eight of {D}ense {G}raphs.
\newblock {\em Log. Methods Comput. Sci.}, 15(1), 2019.
\newblock \href {https://doi.org/10.23638/LMCS-15(1:7)2019}
  {\path{doi:10.23638/LMCS-15(1:7)2019}}.

\bibitem{GanianHNOMR12}
Robert Ganian, Petr Hlin\v{e}n{\'{y}}, Jaroslav Ne\v{s}et\v{r}il, Jan
  Obdr\v{z}{\'{a}}lek, Patrice {Ossona de Mendez}, and Reshma Ramadurai.
\newblock When {T}rees {G}row {L}ow: {S}hrubs and {F}ast {MSO$_1$}.
\newblock In {\em 37th International Symposium on Mathematical Foundations of
  Computer Science 2012, {MFCS} 2012}, volume 7464 of {\em Lecture Notes in
  Computer Science}, pages 419--430. Springer, 2012.
\newblock \href {https://doi.org/10.1007/978-3-642-32589-2\_38}
  {\path{doi:10.1007/978-3-642-32589-2\_38}}.

\bibitem{GroheKS17}
Martin Grohe, Stephan Kreutzer, and Sebastian Siebertz.
\newblock Deciding {F}irst-{O}rder {P}roperties of {N}owhere {D}ense {G}raphs.
\newblock {\em J. {ACM}}, 64(3):17:1--17:32, 2017.
\newblock \href {https://doi.org/10.1145/3051095} {\path{doi:10.1145/3051095}}.

\bibitem{GroheN21}
Martin Grohe and Daniel Neuen.
\newblock Isomorphism, canonization, and definability for graphs of bounded
  rank width.
\newblock {\em Commun. {ACM}}, 64(5):98--105, 2021.
\newblock \href {https://doi.org/10.1145/3453943} {\path{doi:10.1145/3453943}}.

\bibitem{GroheS15}
Martin Grohe and Pascal Schweitzer.
\newblock Isomorphism {T}esting for {G}raphs of {B}ounded {R}ank {W}idth.
\newblock In {\em {IEEE} 56th Annual Symposium on Foundations of Computer
  Science, {FOCS} 2015}, pages 1010--1029. {IEEE} Computer Society, 2015.
\newblock \href {https://doi.org/10.1109/FOCS.2015.66}
  {\path{doi:10.1109/FOCS.2015.66}}.

\bibitem{KazanaThesis}
Wojciech Kazana.
\newblock {\em Query Evaluation with Constant Delay}.
\newblock PhD thesis, ENS Cachan, 2013.

\bibitem{KazanaS13}
Wojciech Kazana and Luc Segoufin.
\newblock Enumeration of monadic second-order queries on trees.
\newblock {\em {ACM} Trans. Comput. Log.}, 14(4):25:1--25:12, 2013.
\newblock \href {https://doi.org/10.1145/2528928} {\path{doi:10.1145/2528928}}.

\bibitem{KwonMOW21}
O{-}joung Kwon, Rose McCarty, Sang{-}il Oum, and Paul Wollan.
\newblock Obstructions for bounded shrub-depth and rank-depth.
\newblock {\em J. Comb. Theory, Ser. {B}}, 149:76--91, 2021.
\newblock \href {https://doi.org/10.1016/j.jctb.2021.01.005}
  {\path{doi:10.1016/j.jctb.2021.01.005}}.

\bibitem{sparsity}
Jaroslav Ne\v{s}et\v{r}il and Patrice {Ossona de Mendez}.
\newblock {\em Sparsity: {G}raphs, {S}tructures, and {A}lgorithms}, volume~28
  of {\em Algorithms and Combinatorics}.
\newblock Springer, 2012.
\newblock \href {https://doi.org/10.1007/978-3-642-27875-4}
  {\path{doi:10.1007/978-3-642-27875-4}}.

\bibitem{pillay1996geometric}
A.~Pillay.
\newblock {\em Geometric Stability Theory}.
\newblock Oxford logic guides. Clarendon Press, 1996.
\newblock URL: \url{https://books.google.pl/books?id=k6FK\_Gqal2EC}.

\bibitem{podewski1978stable}
Klaus-Peter Podewski and Martin Ziegler.
\newblock Stable graphs.
\newblock {\em Fundamenta Mathematicae}, 100(2):101--107, 1978.
\newblock \href {https://doi.org/10.4064/fm-100-2-101-107}
  {\path{doi:10.4064/fm-100-2-101-107}}.

\bibitem{classification_theory_v1}
Saharon Shelah.
\newblock {\em {Classification theory - and the number of non-isomorphic
  models}}, volume~92 of {\em Studies in logic and the foundations of
  mathematics}.
\newblock North-Holland Publishing Co., 1978.

\bibitem{tent_ziegler}
Katrin Tent and Martin Ziegler.
\newblock {\em {A Course in Model Theory}}.
\newblock {Lecture Notes in Logic}. {Cambridge University Press}, 2012.
\newblock \href {https://doi.org/10.1017/CBO9781139015417}
  {\path{doi:10.1017/CBO9781139015417}}.

\end{thebibliography}

\iffull
\newpage
\appendix
\section{Proof of Lemma~\ref{thm:incremental}}\label{app:label-partition}

We now prove Lemma~\ref{thm:incremental}, which we restate for convenience.

\labelpartition*

Before moving on to the proof, we recall the statement from \cite[Theorem 3.5]{boundedLocalCliquewidth}.
For this we need two definitions.
\begin{definition}[Definition 3.2 from~\cite{boundedLocalCliquewidth}]
   Let $E \subseteq A \times B$ be a binary relation.
   Say that $E$ has a \emph{duality of order $k$} if at least one of two cases holds:
   \begin{itemize}
       \item there is a set $A_0 \subseteq A$ of size at most $k$ such that for every $b \in B$ there is some $a \in A_0$ with $\neg E(a, b)$; or
       \item there is a set $B_0 \subseteq B$ of size at most $k$ such that for every $a \in A$ there is some $b \in B_0$ with $E(a, b)$.
   \end{itemize}
\end{definition}
\begin{definition}
   Fix a set $V$.
   A symmetric function $f : V \times V \to \R_{\ge 0} \cup \set{+\infty}$ is called a \emph{pseudometric} if it satisfies the triangle inequality.
\end{definition}

\begin{theorem}[Theorem 3.5 of~\cite{boundedLocalCliquewidth}]
   \label{thm:original_incremental}
   Fix $r, k, t \in N$.
   Let $V$ be a finite set equipped with:
   \begin{itemize}
       \item a binary relation $E \subseteq V \times V$ such that for all $A \subseteq V$ and $B \subseteq V$, $E \cap (A \times B)$ has a duality of order $k$,
       \item a pseudometric dist: $V \times V \to \R_{\ge 0} \cup \set{+\infty}$,
       \item a partition $\Ll$ of $V$ with $|\Ll| \le t$,
   \end{itemize}
   such that $E(u, v)$ depends only on parts of $u$ and $v$ in $\Ll$ for all $u, v$ with $\dist(u, v) > r$.
   Then there is a set $S \subseteq V$ of size $\Oo(kt^2)$ such that $E(u, v)$ depends only on $E(u, S)$ and $E(S, v)$ for all $u, v \in V$ with $\dist(u, v) > 5r$.    
\end{theorem}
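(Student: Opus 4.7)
The plan is to deduce Lemma~\ref{thm:incremental} as a direct application of Theorem~\ref{thm:original_incremental} to the edge relation $E$ of $G$, after equipping $V(G)$ with a carefully chosen pseudometric. Let $(\Pp,\Ll)$ be a dicing of $G$ of order at most $t$. I define $\dist(u,v)=0$ if $u,v$ lie in the same part of $\Pp$ and $\dist(u,v)=1$ otherwise; this is clearly symmetric and the triangle inequality holds because $\Pp$ is a partition (so $\dist(u,v)=0$ is an equivalence relation). By the defining property of a dicing, whenever $\dist(u,v)>0$, adjacency $E(u,v)$ depends only on $(\Ll(u),\Ll(v))$. Thus the ``dependence on $\Ll$-parts at pseudodistance $>r$'' hypothesis of Theorem~\ref{thm:original_incremental} is fulfilled with $r=0$.

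The remaining hypothesis to verify is the duality condition: I need a constant $k=k(\Cc)$ such that for every $A,B\subseteq V(G)$, the bipartite relation $E\cap(A\times B)$ admits a duality of order $k$. This is where the bounded-shrubdepth assumption enters: every class of bounded shrubdepth is monadically stable, so the edge formula $E(x;y)$ has some finite ladder index in the theory of $\Cc$. A standard stability-theoretic argument shows that a stable binary relation admits a uniform duality bound $k$ depending only on its ladder index; any failure of duality of order $k$ for large enough $k$ would, by a pigeonhole/extraction argument, produce a half-graph of order exceeding the ladder bound inside $E\cap(A\times B)$, a contradiction. This ``stability implies duality'' principle is likely packaged as an internal lemma in \cite{boundedLocalCliquewidth}, as it is precisely the kind of input needed to invoke their Theorem~3.5 on concrete graph classes.

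With both hypotheses in hand, instantiating Theorem~\ref{thm:original_incremental} with $r=0$, the duality bound $k$, and the partition $\Ll$ of size at most $t$ yields a set $S\subseteq V(G)$ of size $\Oh(kt^2)=\Oh(t^2)$ such that $E(u,v)$ depends only on $E(u,S)$ and $E(S,v)$ for every $u,v$ with $\dist(u,v)>5\cdot 0=0$. Unwinding the definitions, $E(u,S)=N_G(u)\cap S$ determines the class of $u$ in $\Ll_S$, so for $u,v$ lying in different parts of $\Pp$, adjacency in $G$ is determined by the pair $(\Ll_S(u),\Ll_S(v))$. Taking $Z\subseteq \Ll_S\times\Ll_S$ to be the corresponding symmetric relation, we conclude that $(\Pp,\Ll_S)$ is a dicing of $G$, as required.

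The main obstacle is the second step: isolating a clean, quantitative duality bound from stability. If the cited paper makes this step explicit, it can be invoked as a black box; otherwise it must be extracted, but this is a routine (if slightly tedious) exercise in stability theory and produces a bound $k$ depending only on the ladder index of $E$ in~$\Cc$, hence only on~$\Cc$.
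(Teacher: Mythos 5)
There is a genuine gap here, and it is one of target rather than of technique: the statement you were asked to prove is Theorem~\ref{thm:original_incremental} itself, i.e.\ the abstract combinatorial result (Theorem~3.5 of the cited work) about a relation $E$ with dualities of order $k$, a pseudometric, and a partition $\Ll$ of size $\le t$, asserting the existence of a set $S$ of size $\Oh(kt^2)$ controlling $E$ at pseudodistance $>5r$. Your proposal never proves this statement; it invokes it as a black box (``instantiating Theorem~\ref{thm:original_incremental} with $r=0$\dots'') and uses it to derive Lemma~\ref{thm:incremental}. That derivation is fine, and in fact it coincides almost verbatim with the paper's own proof of Lemma~\ref{thm:incremental} in the appendix: the same trick of defining the pseudometric to be $0$ inside parts of $\Pp$ and large (the paper uses $+\infty$, you use $1$ with $r=0$) across parts, the same observation that the duality order $k$ comes from the class being monadically stable (the paper routes this through bounded VC dimension), and the same unwinding of ``$E(u,v)$ depends only on $E(u,S)$ and $E(S,v)$'' into the statement that $(\Pp,\Ll_S)$ is a dicing. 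But none of this constitutes a proof of the theorem in question; it is circular with respect to it.

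To actually prove Theorem~\ref{thm:original_incremental} you would need an independent combinatorial argument in the abstract setting: roughly, one must construct $S$ by repeatedly exploiting the duality hypothesis for suitable rectangles $A\times B$ determined by pairs of $\Ll$-classes (and by the pseudometric balls), picking $\Oh(k)$ witnesses per pair of classes so that, for $u,v$ at pseudodistance $>5r$, the traces $E(u,S)$ and $E(S,v)$ determine the relevant $\Ll$-classes and hence $E(u,v)$; the factor $5r$ arises from chaining the triangle inequality when transferring information through these witnesses. No step of this kind appears in your write-up, and the ``stability implies duality'' discussion in your last paragraph concerns a hypothesis of the theorem (needed only when applying it to graph classes), not its conclusion. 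In short: as a proof of Lemma~\ref{thm:incremental} your text matches the paper; as a proof of the stated Theorem~\ref{thm:original_incremental} it is missing entirely.
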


\begin{proof}[Proof of \Cref{thm:incremental}]
   Fix any $G \in \Cc$ and a dicing $(\Pp, \Ll)$ of $G$ of order at most $t$.
   Observe that as $\Cc$ is monadically stable (so in particular it has bounded VC dimension), by \Cref{thm:original_incremental} there is a constant $k$ depending only on $\Cc$ such that $E(G)$ has a duality of order $k$.
   We can define a pseudometric $f$ on $V(G)$ as follows:
   \[
       \dist(u, v) =
       \begin{cases}
           0 &\text{ if $u, v$ in the same part of $\Pp$;}\\
           +\infty &\text{ otherwise.}
       \end{cases}
   \]
   By \Cref{thm:original_incremental} we get that there is a set $S \subseteq V(G)$ of size $\Oo(t^2)$ (where the constant hidden in the $\Oo$ notation depends only on the class $\Cc$) such that for any $u, v \in V(G)$ if they are in different parts of $\Pp$, then $E(u, v)$ depends only on the parts of $u$ and $v$ in $\Ll_S$.
\end{proof}

\section{Proof of Lemma~27}\label{app:cw-metathm} 

We now prove~\Cref{lem:meta-thm}, which we first recall for convenience.
\cwmetathm*

\begin{proof}
 Using the algorithm of Fomin and Korhonen~\cite{FominK22}, in time $\Oh(n^2)$ we can compute a clique expression $t$ of constant width that constructs the input graph $G$. We may view $t$ as a rooted tree labeled with a constant-size alphabet: internal nodes are labeled with applied clique expression operations (relabel/join/union), while every leaf is labeled with a pair consisting of its initial label in $t$ and its color in $G$. Then it is straightforward to write an $\mathsf{MSO}$ formula $\alpha(y,z)$ such that $t\models \alpha(u,v)$ if and only if $u$ and $v$ are leaves of $t$ (aka vertices of $G$) and $u$ and $v$ are adjacent in $G$. Let $\varphi'(\tup x)$ be the formula obtained from $\varphi(\tup x)$ by replacing every adjacency test $E(y,z)$ with formula $\alpha(y,z)$, and further stipulating that every $x\in \tup x$ is a leaf of $t$. Then for any $\tup u\in G^{\tup x}$, $G\models \varphi'(\tup u)$ if and only if $t\models \varphi'(\tup u)$. Now, to obtain the enumeration statement it suffices to apply the enumeration algorithm for $\mathsf{MSO}$ queries on labeled trees, due to Kazana and Segoufin~\cite{KazanaS13}, to $t$ and $\varphi'$. Similarly, for the query answering result one can use~\cite[Theorem~6.1.3]{KazanaThesis}. 
\end{proof}

\section{Proof of Theorem~29}\label{app:td-canonization} 

We now prove~\Cref{thm:td-canonization}, which we restate for convenience.

\tdcanonization*

 We will use the following result observed by Bouland et al.~\cite{BoulandDK12}. (In fact, this result was a direct inspiration for Theorem~\ref{thm:progressive}, as it corresponds to the case $r=\infty$ there.)
 
 \begin{lemma}[Lemma~7 of~\cite{BoulandDK12}]\label{lem:roots}
  There exists a computable function $f\colon \N\to \N$ such that in every connected graph $G$ of treedepth $d$, there exists only at most $d$ different vertices $u$ such that the treedepth of $G-u$ is strictly smaller than $d$.
 \end{lemma}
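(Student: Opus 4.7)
The plan is to prove the statement by induction on the treedepth $d$, simultaneously defining the bounding function $f$. (Here I interpret the bound as ``at most $f(d)$'', since the bound purely in terms of $d$ fails already for $G = P_4$, which has treedepth $3$ but four treedepth-lowering vertices.)

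For the base case $d = 1$, the graph $G$ consists of a single vertex, and the bound holds trivially with $f(1) := 1$. For the inductive step, assume the statement for all treedepths below $d$ and let $G$ be connected with $\mathsf{td}(G) = d \geq 2$; write $U$ for the set of treedepth-lowering vertices. I would fix an arbitrary $u_0 \in U$ and examine the connected components $C_1, \ldots, C_k$ of $G - u_0$, each of treedepth at most $d - 1$. The core structural claim I would prove is: \emph{for every $u \in U \setminus \{u_0\}$, the vertex $u$ lies in a component $C_j$ satisfying $\mathsf{td}(C_j) = d-1$, and $u$ is treedepth-lowering either in $C_j$ itself or in the connected induced subgraph $G[C_j \cup \{u_0\}]$} (which is connected because $G$ is connected and therefore each $C_j$ contains a neighbour of $u_0$). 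Applying the inductive hypothesis to each such ``heavy'' component, together with a bound on the number of heavy components (for instance by showing at most $f(d-1)$ components of $G - u_0$ can have treedepth exactly $d-1$, since they each correspond to a child of $u_0$ in some optimal decomposition witnessing depth $d$), yields a recurrence of the form $f(d) \leq 1 + K \cdot f(d-1)$ for a bounded constant $K$ depending on $d$.

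The main obstacle I anticipate is proving the structural claim above: that a global treedepth drop $\mathsf{td}(G - u) \leq d-1$ necessarily reflects as a local treedepth drop in $C_j$ or $G[C_j \cup \{u_0\}]$. The difficulty is that removing $u$ can both split a single component $C_j$ into several pieces and allow previously separated components of $G - u_0$ to merge through $u_0$. To handle this cleanly, I would take an optimal depth-$(d-1)$ decomposition of $G - u$ and restrict it to each of the induced subgraphs $C_j$ and $G[C_j \cup \{u_0\}]$, arguing via a case split on whether $u_0$ remains a root-like vertex after restriction: at least one such restriction must attain depth $< \mathsf{td}$ of the ambient subgraph, otherwise $G$ itself would admit a decomposition of depth $d - 1$, contradicting $\mathsf{td}(G) = d$. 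I expect to strengthen the inductive hypothesis to treat simultaneously a connected graph and its augmentation by a single distinguished ``apex'' vertex, so that the $G[C_j \cup \{u_0\}]$ case feeds back into the induction without circularity.
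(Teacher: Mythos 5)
The paper itself offers no proof of this lemma: it is imported verbatim as Lemma~7 of Bouland et al.~\cite{BoulandDK12}, with the remark that it also follows from Giannopoulou et al.\ and that Chen et al.\ give a bound $f(d)\in d^{\Oh(d)}$. So there is no in-paper argument to compare against, and your proof has to stand on its own. You are right about one thing the paper gets wrong: the printed ``at most $d$'' is a typo for ``at most $f(d)$'' (this is how the lemma is used later, as $|X|\le f(d)$), and $P_4$ indeed shows that ``at most $d$'' is false. However, your induction as proposed has genuine gaps.

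Concretely: (a) your core structural claim is false. Take $G=P_4$ with vertices $a,b,c,d$ in path order, so the treedepth is $3$ and all four vertices are treedepth-lowering; choosing $u_0=b$, the lowering vertex $a$ lies in the component $\{a\}$ of $G-u_0$, whose treedepth is $1$, not $d-1=2$. Hence lowering vertices need not sit in ``heavy'' components, and an accounting that only sums over components of treedepth exactly $d-1$ misses them. (b) The suggested bound on the number of heavy components is also false: let $u_0$ be adjacent to one endpoint of each of $k$ vertex-disjoint copies of $P_3$; then $G$ is connected of treedepth $3$, $u_0$ is lowering, and $G-u_0$ has $k$ components of treedepth exactly $2$ with $k$ unbounded. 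What you would actually need is a bound on the number of components that contain lowering vertices of $G$ --- a different statement, and essentially as hard as the lemma itself (in this example those components contain none). (c) The recursion through $G[C_j\cup\{u_0\}]$ does not decrease the induction parameter: when $G-u_0$ is connected (e.g.\ $P_4$ with $u_0$ an endpoint) one has $G[C_1\cup\{u_0\}]=G$, so the disjunction ``$u$ is lowering in $C_j$ or in $G[C_j\cup\{u_0\}]$'' is vacuous and feeding it into the induction is circular; in general $G[C_j\cup\{u_0\}]$ may again have treedepth $d$. The ``apex'' strengthening you gesture at is precisely where the real work would lie, and it is not carried out --- induction on treedepth alone cannot absorb it, because re-attaching the apex can restore the treedepth to $d$. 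As it stands, the recurrence $f(d)\le 1+K\cdot f(d-1)$ is not established, so the proposal does not yet constitute a proof.
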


 In fact, as noted in~\cite{BoulandDK12}, Lemma~\ref{lem:roots} also follows from an earlier work of Giannopoulou et al.~\cite{DvorakGT12}, and a more recent work of Chen et al.~\cite{ChenCDFHNPPSWZ21} provides a bound $f(d)\in d^{\Oh(d)}$.
 
 Next, suppose $(\str M,\preceq_{\str M})$ and $(\str N,\preceq_{\str N})$ are two ordered $\Sigma$-structures, that is, $\Sigma$-structures equipped with orders on universes. We can compare $(\str M,\preceq_{\str M})$ and $(\str N,\preceq_{\str N})$ lexicographically as follows:
 \begin{itemize}
  \item If $|\str M|\neq |\str N|$, then $(\str M,\preceq_{\str M})$ is smaller than $(\str N,\preceq_{\str N})$ if and only if $|\str M|<|\str N|$.
  \item Otherwise, for each $R\in \Sigma$, sort the tuples in relation $R$ in $\str M$ and in $\str N$ lexicographically with respect to $\preceq_{\str M}$, respectively $\preceq_{\str N}$. Then replace every element of $\str M$ with its index in $\preceq_{\str M}$, and similarly for $\str N$. Compare the lists obtained in this way for $\str M$ and $\str N$. If for every $R\in \Sigma$ they are equal, then $(\str M,\preceq_{\str M})$ and $(\str N,\preceq_{\str N})$ are isomorphic. Otherwise, the first relation $R\in \Sigma$ for which the lists are different determines whether $(\str M,\preceq_{\str M})$ or $(\str N,\preceq_{\str N})$ is smaller. (We assume here a fixed order on the relation names in $\Sigma$.)
 \end{itemize}
 The following lemma follows by a straightforward implementation of the comparison procedure above. Here, for a structure $\str M$, by $\|\str M\|$ we mean $|\str M|$ plus the total length of all tuples present in relations of $M$, that is, $\|\str M\|=|\str M|+\sum_{R\in \Sigma} \mathrm{arity}(R)\cdot |R^{\str M}|$.
 
 \begin{lemma}\label{lem:sort}
  Let $\Sigma$ be a fixed signature. Then
  given two ordered structures $(\str M,\preceq_{\str M})$ and $(\str N,\preceq_{\str N})$, one can in time $\Oh(n\log n)$ decide whether they are isomorphic or which is lexicographically smaller than the other, where $n=\|\str M\|+\|\str N\|$.
 \end{lemma}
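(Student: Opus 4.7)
The plan is to implement exactly the comparison procedure described just before the lemma, while ensuring that each step fits within the $O(n\log n)$ budget. First, I would preprocess by converting the orders $\preceq_{\str M}$ and $\preceq_{\str N}$ into explicit rank functions $\mathrm{rk}_{\str M}\colon \str M \to \{1,\ldots,|\str M|\}$ and $\mathrm{rk}_{\str N}\colon \str N \to \{1,\ldots,|\str N|\}$, using a standard comparison sort. This takes $O(n\log n)$ time and reduces all subsequent handling of elements to $O(1)$ integer comparisons.

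If $|\str M|\neq|\str N|$ we immediately report the shorter structure as smaller. Otherwise, I iterate through the relations $R\in\Sigma$ in the fixed total order assumed on $\Sigma$. For each $R$, I replace every entry of every tuple in $R^{\str M}$ (and symmetrically in $R^{\str N}$) by its rank; since $\Sigma$ is fixed, the arity of $R$ is an absolute constant, so this step costs time proportional to $\|R^{\str M}\|$ (resp.\ $\|R^{\str N}\|$). I then sort the resulting lists of rank-tuples lexicographically; because each comparison between two fixed-arity integer tuples costs $O(1)$, a standard comparison sort runs in time $O(\|R^{\str M}\|\log n)$. Finally, I scan the two sorted lists in parallel to test for equality, and in case of disagreement to identify the lexicographically smaller one.

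Summing over the $O(1)$ relations in $\Sigma$ yields the claimed $O(n\log n)$ total runtime. Correctness is short: the ranks provide canonical names along each order, so the sorted lists of rank-tuples for a relation $R$ agree between $\str M$ and $\str N$ if and only if the bijection $\mathrm{rk}_{\str N}^{-1}\circ \mathrm{rk}_{\str M}\colon \str M\to \str N$ maps $R^{\str M}$ onto $R^{\str N}$; agreement across all $R\in\Sigma$ therefore certifies an isomorphism of ordered structures, whereas the first mismatching relation (in the fixed order on $\Sigma$) correctly decides which of the two structures is lexicographically smaller. There is no real obstacle here; the only point requiring a moment of care is to exploit the fact that the signature is fixed, so that arities are constants and tuple comparisons do not multiply the logarithmic factor.
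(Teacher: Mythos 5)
Your proof is correct and follows exactly the route the paper intends: Lemma~\ref{lem:sort} is stated in the paper as following ``by a straightforward implementation of the comparison procedure above,'' and your rank-then-sort implementation, with the fixed signature guaranteeing constant arities and hence $\Oh(1)$ tuple comparisons, is precisely that implementation with the $\Oh(n\log n)$ accounting spelled out. No gaps.
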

 
 Finally, the following lemma is easy.
 
 \begin{lemma}\label{lem:sparse}
  If a $\Sigma$-structure $\str M$ has treedepth at most $d$, then $\|\str M\|\leq g(\Sigma,d)\cdot |\str M|$ for a computable function $g$.
 \end{lemma}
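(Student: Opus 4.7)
The plan is to exploit the standard combinatorial fact that graphs of treedepth at most $d$ are extremely sparse: they contain only linearly many cliques, all of size at most $d$. Since every tuple of every relation of $\str M$ must have entries that are pairwise equal or Gaifman-adjacent, and hence sits on a clique of the Gaifman graph of $\str M$, this translates directly into a linear bound on the total size of all relations in $\str M$ in terms of $|\str M|$.

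Concretely, I would first fix a treedepth decomposition witnessing that the Gaifman graph $G_{\str M}$ has treedepth $\leq d$, that is, a rooted forest $F$ over the universe of $\str M$ of depth at most $d$ whose ancestor-descendant closure contains every edge of $G_{\str M}$. Since every edge of $G_{\str M}$ relates two $F$-comparable elements, every clique of $G_{\str M}$ is contained in a single root-to-leaf chain of $F$, and thus has size at most $d$. Next I would count cliques by associating to each clique $K$ its deepest vertex $v_K$ in $F$: then $K\setminus\{v_K\}$ is a subset of the at most $d-1$ strict ancestors of $v_K$, so the total number of cliques of size $s$ in $G_{\str M}$ is at most $\binom{d-1}{s-1}\cdot|\str M|$. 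Finally, for each relation symbol $R\in\Sigma$ of arity $k$, every tuple $\tup a\in R^{\str M}$ has its entries pairwise Gaifman-adjacent or equal, so its support is a clique of size $s\leq\min(k,d)$; the number of $k$-tuples with a fixed support of size $s$ is at most $s^k\leq d^k$. Summing over $s\leq d$ gives $|R^{\str M}|\leq 2^{d-1}\cdot d^k\cdot|\str M|$, and multiplying by the arity and summing over the finitely many relation symbols of $\Sigma$ produces a bound of the form $g(\Sigma,d)\cdot|\str M|$, where $g(\Sigma,d)$ is an explicit, computable function of the maximum arity in $\Sigma$, the number of relation symbols, and $d$.

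There is no real obstacle here: the argument is a routine clique-counting exercise in graphs of bounded treedepth, and the combinatorial bound on the number of cliques is classical. The only mild subtleties are handling tuples with repeated entries, which is dealt with by passing to supports rather than working with tuples directly, and the implicit assumption that $\Sigma$ is finite (the paper allows $\Sigma$ infinite in general, but any bound of the form $g(\Sigma,d)\cdot|\str M|$ necessarily restricts attention to the symbols actually appearing with positive arity, which is already built into the statement via the dependence of $g$ on $\Sigma$).
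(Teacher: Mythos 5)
Your proof is correct and rests on the same key observation as the paper's: fixing an elimination forest $F$ of depth at most $d$, every tuple in any relation has its support lying on a single root-to-leaf path of $F$, which immediately bounds the number of tuples linearly in $|\str M|$. The only difference is cosmetic bookkeeping---you count cliques (supports) first via the ``deepest vertex'' trick and then tuples per support, whereas the paper encodes each tuple directly by its deepest element, its position in the tuple, and the depths of the remaining entries; both yield a bound of the form $g(\Sigma,d)\cdot|\str M|$.
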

 \begin{proof}
  Let $F$ be an elimination forest of $\str M$ of depth at most $d$, that is, a rooted forest on the universe of $\str M$ such that for every edge $uv$ of the Gaifman graph of $\str M$, $u$ and $v$ are in the ancestor/descendant relation in $\str M$. It follows that for every tuple $\tup u$ in any relation in $\str M$, the elements featured in $\tup u$ lie on one root-to-leaf path in $F$. Therefore, $\tup u$ can be uniquely described by specifying the deepest element featured in $\tup u$, its position within the tuple $\tup u$, and the depths in $F$ of all the other elements of $\tup u$. Since there are at most $|\str M|\cdot k\cdot d^{k-1}$ descriptions of this form, where $k$ is the arity of $\tup u$, summing such bounds for all relations in $\Sigma$ establishes the result.
 \end{proof}

 With all the tools prepared,
 we can now proceed to the proof of \Cref{thm:td-canonization}.

\begin{proof}[Proof of \Cref{thm:td-canonization}]
 By induction on $d$, we prove the following statement: for every finite signature $\Sigma$ and $d\in \N$, there exists a canonical mapping $\cn_{\Sigma,d}$ for the class of all $\Sigma$-structures of treedepth at most $d$ that can be computed in time $g(\Sigma,d)\cdot n\log^2 n$, for some computable function~$g$. For $d=0$, the class may contain only the empty structure, so the result is trivial.
 
 Let then $\str M$ be a $\Sigma$-structure whose Gaifman graph $G$ has treedepth at most $d$. We may assume that $G$ is connected, for in the case of a disconnected $G$ we may proceed as follows:
 \begin{itemize}
  \item Apply the argument for connected graphs to every connected component $C$ of $G$, yielding an ordering $\cn_{\Sigma,d}(C)$.
  \item Sort the obtained ordered structures $(C,\cn_{\Sigma,d}(C))$ first by the number of elements, and then lexicographically.
 \end{itemize}
 Note that if for $p\leq n$ by $m_p$ we denote the number of components $C$ with $|C|=p$, then sorting such components $C$ using, say, mergesort takes time $$g(\Sigma,d)\cdot m_p\log m_p\cdot p\log p\leq g(\Sigma,d)\cdot m_p p\log^2 n$$ for a computable $g$, due to Lemmas~\ref{lem:sort} and~\ref{lem:sparse}. Summing through all relevant $p$, the procedure takes time $g(\Sigma,d)\cdot n\log^2 n$.

 Once $G$ is assumed to be connected, let us define
 $$X=\{u\in V(G)\colon \mathrm{treedepth}(G-u)<\mathrm{treedepth}(G)\}.$$
 By Lemma~\ref{lem:roots}, we have $|X|\leq f(d)$ for some computable $f$.
 Noting that every class of bounded treedepth can be defined in first-order logic (for instance, using~\Cref{cor:fo-def}), we can write a first-order formula $\varphi(x)$ that selects exactly the vertices of $X$. By applying the enumeration algorithm for $\mathsf{MSO}_2$ queries on structures of bounded treewidth due to Kazana and Segoufin~\cite{KazanaS13}, we can compute $X$ in time $h(\Sigma,d)\cdot n$ for a computable function $h$.
 
 \newcommand{\dm}{\mathrm{dom}\,}
 \newcommand{\partto}{\rightharpoonup}
 
 For every relation $R\in \Sigma$, say of arity $k$, and a partial function $\eta\colon \{1,\ldots,k\}\partto \{1,2,\ldots,|X|\}$, introduce a relation name $R_{\eta}$ of arity $k-|\dm \eta|$, where $\dm \eta$ denotes the domain of $\eta$. Let $\Sigma'$ be a signature consisting of all those relation names.
 Next, for every bijection $\sigma\colon \{1,2,\ldots,|X|\}\to X$, we define a new $\Sigma'$-structure $\str M_\sigma$ as follows:
 \begin{itemize}
  \item The universe of $\str M_\sigma$ is $V(G)\setminus X$.
  \item For every relation $R\in \Sigma$, say of arity $k$, and a partial function $\eta\colon \{1,\ldots,k\}\partto \{1,2,\ldots,|X|\}$, we interpret $R_\eta$ in $\str M_\sigma$ as the set of all tuples $(a_1,\ldots,a_{k'})\in (V(G)\setminus X)^{k'}$ such that $\str M\models R(b_1,\ldots,b_k)$, where $k'=k-|\dm \eta|$, $b_i=\sigma(\eta(i))$ if $i\in \dm \eta$, and otherwise $b_i=a_{\iota(i)}$, where $\iota$ is the order-preserving bijection from $\{1,\ldots,k\}\setminus \dm \eta$ to $\{1,\ldots,k'\}$. 
 \end{itemize}
 Note that this gives $|X|!\leq f(d)!$ $\Sigma'$-structures $\str M_\sigma$, and, thanks to Lemma~\ref{lem:sparse}, each of the can be computed in time $h(\Sigma,d)\cdot n$ for some computable $h$. Moreover, each structure $\str M_\sigma$ has treedepth at most $d-1$.
 
 We may now apply the algorithm provided by the induction to compute, for each permutation $\sigma$ as above, an ordering $\preceq_\sigma=\cn_{\Sigma',d-1}(\str M_\sigma)$. Let $\preceq_\sigma'$ be obtained from $\preceq_\sigma$ by prepending the vertices of $X$ in the order $\sigma$. Note that the obtained set of ordered structures 
 $$\left\{(\str M,\preceq_{\sigma}')\colon \sigma\colon \{1,2,\ldots,|X|\}\to X\textrm{ is a bijection}\right\}$$
 is isomorphism-invariant. Therefore, we can set $\cn_{\Sigma,d}(G)$ to be the order $\preceq_{\sigma'}$ for which the structure $(\str M,\preceq_{\sigma'})$ is lexicographically the smallest among the ones above, where we use the algorithm of Lemma~\ref{lem:sort} to compare the structures.
 
 As for the time complexity, the algorithm presented above invokes at most $f(d)!$ calls to the algorithm computing $\cn_{\Sigma',d-1}$. Therefore, there are at most $d$ levels of recursion, and the total work used on every level is $h(\Sigma,d)\cdot n\log^2 n$ for a computable function $h$. It follows that the whole algorithm runs in time $g(\Sigma,d)\cdot n\log^2 n$ for a computable function $g$.
\end{proof}

\else \fi

\end{document}